\documentclass[review]{elsarticle}
\usepackage[utf8]{inputenc}
\usepackage{amsthm}
\usepackage{amsmath}
\usepackage{amsfonts}
\usepackage{amssymb}
\usepackage{graphicx}
\usepackage{color}
\usepackage[normalem]{ulem}
\usepackage{multirow}
\usepackage{mathtools}
\usepackage{xcolor}
\usepackage{a4wide}
\usepackage{lineno,hyperref}
\newtheorem{theorem}{Theorem}
\newtheorem{lemma}[theorem]{Lemma}

\newdefinition{remark}{Remark}
\newdefinition{definition}{Definition} \newdefinition{example}{Example}
\usepackage{wrapfig}
\usepackage{algorithm,algpseudocode}
\usepackage{mathtools}

\usepackage{graphicx}
\usepackage{tabularx}
\usepackage{textcomp}
\usepackage{xcolor}
\usepackage[labelformat=simple]{subcaption}

\usepackage{booktabs}

\usepackage{pgf,tikz,pgfplots}
\pgfplotsset{compat=1.15}
\usepackage{mathrsfs}
\usepackage{hyperref}
\usepackage{multicol}

\usepackage[frak=esstix]{mathalpha}

\usepackage[linesnumbered,ruled,vlined,algo2e]{algorithm2e}
\usepackage{algcompatible}
\newcommand{\tarc}{\mbox{\large$\frown$}}
\newcommand{\arc}[1]{\stackrel{\tarc}{#1}}

\DeclareFontFamily{OMX}{yhex}{}
\DeclareFontShape{OMX}{yhex}{m}{n}{<->yhcmex10}{}
\DeclareSymbolFont{yhlargesymbols}{OMX}{yhex}{m}{n}
\DeclareMathAccent{\wideparen}{\mathord}{yhlargesymbols}{"F3}

\newcounter{manualcase}
\setcounter{manualcase}{0} 

\newcounter{manualsubcase}[manualcase]



\usepackage{xcolor}
\usepackage[linesnumbered,ruled,vlined,algo2e]{algorithm2e}

\SetCommentSty{mycommfont}




\usepackage{amssymb}



\bibliographystyle{elsarticle-num}
\usepackage{a4wide}
\begin{document}

\begin{frontmatter}



\title{Uniform Partitioning of a Bounded Region using Opaque ASYNC Luminous Mobile Robots}


\author{Subhajit Pramanick}
\ead{subhajit.pramanick@iitg.ac.in}
\author{Saswata Jana}
\ead{saswatajana@iitg.ac.in}
\author{Adri Bhattacharya}
\ead{a.bhattacharya@iitg.ac.in}
\author{Partha Sarathi Mandal}
\ead{psm@iitg.ac.in}
\affiliation{organization={Department of Mathematics},
            addressline={Indian Institute of Technology Guwahati}, 
            city={Guwahati},
            postcode={781039}, 
            state={Assam},
            country={India}}

\begin{abstract}
We are given $N$ autonomous mobile robots inside a bounded region. The robots are opaque which means that three collinear robots are unable to see each other as one of the robots acts as an obstruction for the other two. They operate in classical \emph{Look-Compute-Move} (LCM) activation cycles. Moreover, the robots are oblivious except for a persistent light (which is why they are called \emph{Luminous robots}) that can determine a color from a fixed color set. Obliviousness does not allow the robots to remember any information from past activation cycles. The Uniform Partitioning problem requires the robots to partition the whole region into sub-regions of equal area, each of which contains exactly one robot. Due to application-oriented motivation, we, in this paper consider the region to be well-known geometric shapes such as rectangle, square and circle. We investigate the problem in \emph{asynchronous} setting where there is no notion of common time and any robot gets activated at any time with a fair assumption that every robot needs to get activated infinitely often. To the best of our knowledge, this is the first attempt to study the Uniform Partitioning problem using oblivious opaque robots working under asynchronous settings. We propose three algorithms considering three different regions: rectangle, square and circle. 
The algorithms proposed for rectangular and square regions run in $O(N)$ epochs whereas the algorithm for circular regions runs in $O(N^2)$ epochs, where an epoch is the smallest unit of time in which all robots are activated at least once and execute their LCM cycles. The algorithms for the rectangular, square and circular regions require $2$ (which is optimal), $5$ and $8$ colors, respectively.
\end{abstract}

\begin{keyword}



Distributed algorithms \sep Multi-agent systems \sep Mobile robots \sep Uniform Partitioning \sep Luminous Robots
\end{keyword}

\end{frontmatter}



\section{Introduction}
\subsection{Motivation}
Distributed algorithms for many real-world problems grabbed a lot of attention from researchers for many years now. A swarm of mobile robots is a very important tool in designing such distributed algorithms. The collaborative actions of these robots achieve the end goal of the problems. One such problem is \emph{Uniform Partitioning} of a bounded region, which is a very common problem in our daily life. For example, if a number of people are asked to paint a wall, the natural strategy is to divide the region into equal parts and assign each person to a distinct part of the wall for painting. Keeping the same motivation in mind, autonomous mobile robots can be more useful in much more critical situations like cleaning spillage of liquid radioactive waste in a laboratory. In such hazardous situations, robots are the safest options for us. Uniform partitioning is equally applicable to another scenario where a city needs to be well-covered with networks. A group of autonomous drones, each of which is equipped with necessary instruments, can be deployed to serve the purpose, where drones should position themselves in such a way that each of them covers a part of the whole city of the same area as others.

Classically, these robots are autonomous (no external control), homogeneous (execute the same algorithm), anonymous (having no unique identifier) and disoriented (do agree on any global coordinate system or orientation). The robots are modelled as points on the plane. They are equipped with vision which enables them to gather information from the surroundings. These robots operate in \emph{Look-Compute-Move} cycles, which we define later in the paper. Two robots might not be able to see each other due to the presence of other robots between them (it is called \emph{obstructed visibility model}). These robots are called \emph{opaque} robots. There is no means of communication for the robots except an externally visible persistent light on them, which can determine color from a prefixed color set. Since there is a fixed number of colors, this type of communication is considered as a weak form of communication between the robots. Other than this light, robots do not have any persistent memory to store past information (this type of robot is called \emph{oblivious luminous robots}). 

We, in this paper, initiate the study of distributed uniform partitioning of a bounded region using opaque luminous mobile robots. The activation schedule of the robots is \emph{asynchronous} (ASYNC) where any robots can be activated at any time and there is no notion of a global time. The primary motive is to use a swarm of mobile robots with assumptions as less as possible. The problem aims to arrange the robots in such a way that the region gets divided into equal partitions and each partition contains exactly one robot. We retain our focus on this objective only, whereas the problem could even be extended to a version where the robots have sufficient memory or some extra ability to store the coordinates of the partitions. The problem gets challenging due to the oblivious nature of the robots. The obstructed visibility of the robots adds to the challenge even more, because it is not always possible to count the number of robots present in the region. We also assume that the robots do not have any knowledge about the total number of robots. Although mutual visibility algorithms can be used to make any three robots non-collinear and robots can count the total number of robots, the obliviousness of the robots disables them to keep the information beyond the current LCM cycle. From the application point of view, we assume that the robots can detect the boundary of the region. We propose algorithms for the robots considering the region to be a standard geometric shape such as a rectangle, square or circle. Moreover, the algorithms are collision-free which means that two robots can never collocate at the same point. 

\subsection{Contributions}
Our contributions, in this paper, are listed below.
\begin{itemize}
    \item To the best of our knowledge, this is the first work towards the problem of distributed uniform partitioning of a bounded region $\Re$ using a swarm of $N$ ASYNC oblivious opaque mobile robots. 
    \item We propose Algorithm \textsc{Rectangle\_Partition} when $\Re$ is a rectangular region. The algorithm requires $2$ colors which is optimal with respect to the number of colors. The algorithm runs in $O(N)$ epochs, where $N$ is the total number of robots deployed in $\Re$.
    \item We propose Algorithm \textsc{Square\_Partition} when $\Re$ is a square region, which runs in $O(N)$ epochs and uses $5$ colors.
    \item We propose Algorithm \textsc{Circle\_Partition} when $\Re$ is a circle, which runs in $O(N^2)$ epochs and uses $8$ colors. All of our algorithms are collision-free.
\end{itemize}

\subsection{Related works}
Distributed algorithms using mobile robots have been widely studied for many years. Some of the popular topics in this area are gathering, pattern formation, mutual visibility, etc. These problems are studied using different robot models in the literature that provide us with an extensive idea about how mobile robots work. In various literature \cite{FLOCCHINI2008412,10.1145/3569551.3569553,10.1145/3007748.3007781}, the pattern formation problem is studied with mobile robots with different capabilities.
Formation of regular geometric shape can have a great correlation with the problem of uniform partitioning. Uniform circle formation (popularly known as UCF) is one of the most popular problems in this area which can be useful for partitioning a region of a specific geometrical shape. 
In the recent past,  Feletti et al. \cite{app13137991} presented a linear time algorithm for the UCF problem using ASYNC luminous robots.
Later in \cite{feletti_et_al:LIPIcs.OPODIS.2023.5}, they proposed an improved algortihm that runs in $O(\log(N))$ epochs.
Mutual visibility is another popular area where mobile robots are used. To achieve mutual visibility, the movement of the robots needs to be designed carefully enough to avoid collision between robots. 
Several papers \cite{di2017mutual,10183479,sharma2021constant} highlight this important aspect of mobile robots which is an essential part of designing the movement of the robot because collision might lead to a multiplicity points, whereas multiplicity detection is a costlier task. 
As mentioned earlier, the uniform partitioning problem has a wide range of applications such as surveillance of a particular region, painting a bounded region and many more.
Saha et al. \cite{9187545} inspected surveillance of uneven surfaces using drones which falls into the category of coverage problem. Their target is to give a compact coverage of the area so that the diameter of the drone network gets minimized.  
Das and Mukhapadhyaya \cite{Das2013DistributedPB} proposed an asynchronous algorithm for distributed painting in a rectangular region with a robot swarm where robots have agreement on a global line.
The algorithm divides the region into uniform horizontal strips, but one of the advantages in this paper is that the robots are transparent which enables them to see all other robots in the region in every activation. Later in \cite{das2018distributed}, they studied the distributed painting with robots having limited visibility and a global coordinate system. Robots are not completely oblivious and transparent so a robot can see all the robots within its visibility range. Das et al. \cite{das2021swarm} extended the distributed painting when the rectangular region has opaque obstacles in it. They consider the robots to be transparent and work under semi-synchronous settings having total agreement in the direction and orientation of their local coordinate systems.  
Pavone et al. \cite{5710394} proposed an algorithm for equitable partitioning of an environment using synchronous mobile agents. Their algorithm uses the Voronoi-based partitioning approach, but our proposed algorithms can partition a region uniformly using mobile robots working under asynchronous setting in a much simpler way. Acevedo et al. \cite{acevedo2016distributed} gave an algorithm for partitioning a known region using aerial robots, but the partitioning is not uniform.

\section{Model and Preliminaries}\label{model}

\noindent \textbf{Robots:}
A set of $N$ \emph{autonomous}, \emph{anonymous}, \emph{homogeneous} and \emph{disoriented} mobile robots $ \{ r_1, r_2, \cdots r_N \}$ is deployed at distinct points within a bounded region.
This region can be thought of as a subset of the Euclidean plane. Each robot is considered as a point on the plane.  
We consider the robots to be \emph{opaque} because of which a robot $r_i$ is unable to see the robot $r_k$ if there is a robot $r_j$ lying on the line segment joining $r_i$ and $r_k$. However, robots can detect the boundary of the region. A robot is visible to itself but might be unable to see all the robots inside the region due to obstructed visibility. Moreover, robots do not know $N$. Each robot has its local coordinate system, and the current position of the robot is considered the origin of its coordinate system. 

Each robot has a persistent light externally visible to all other visible robots. This light can determine color from a prefixed set of colors. These colors enable the robots to have a weak form of communication among themselves. $r_i.color$ represents the current color of the robot $r_i$ at any time. Other than this persistent light, robots do not have any other memory to remember any information from the past. We misuse the notation $r$ to denote the current position of the robot $r$. Two robots exhibit collision if they are collocated at a point simultaneously. 
\vspace{2mm}

\noindent \textbf{Region:} We are given with a bounded region, denoted by $\Re$. The \emph{interior} of the region $\Re$, denoted by $Int(\Re)$ is defined to be the part of $\Re$ without the boundary. In this paper, we consider the region to be a standard geometric region such as a rectangle, square or circle. A robot lying on $Int(\Re)$, is called an \emph{interior robot}. A robot is a \emph{boundary robot} when it lies on the boundary of $\Re$, but not on the corners. When $\Re$ has corners, robots lying on them are called \emph{corner robots}. The boundary of $\Re$ is identifiable by the robots from any point in the region which enables them to identify whether the region is a rectangle, square or a circle. For any two points $A$ and $B$ in the region, $\overline{AB}$ denotes the line segment joining $A$ and $B$. $\overleftrightarrow{AB}$ is the line passing through the two points. The length of a line segment $\overline{AB}$ is represented by $len(\overline{AB})$. We denote the distance between two points $A$ and $B$ by $d(A, B)$. A similar notation $d(p, L)$ is used to represent the shortest distance between a point $p$ and a line $L$.

\vspace{2mm}
\noindent\textbf{Activation Cycle:}
We consider the mobile robots operating in the \emph{Look-Compute-Move} (LCM) cycles, in which the actions of the robots are divided into three phases.
 \textit{Look:} The robot takes a snapshot of its surroundings, i.e., the vertices within the visibility range and the colors of the robots occupying them.
    \textit{Compute:} The robot runs the algorithm using the snapshot as the input and determines a target vertex or chooses to remain in place. It also changes its current color, if necessary.
  \textit{Move:} The robot moves to the target vertex if needed.

\vspace{2mm}
\noindent \textbf{Scheduler and Run-time:} Robots are activated under a fair \emph{asynchronous} (ASYNC) scheduler. There is no common notion of time in the ASYNC setting. Any number of robots can be activated at any time, with the fairness assumption that each robot is activated infinitely often. Time is measured in terms of \emph{epochs}, which is the smallest time interval in which every robot gets activated and executes its LCM cycle at least once. 

\noindent \textbf{Problem Definition (Distributed Uniform Partitioning)}:  $N$ oblivious, opaque, luminous point robots are deployed at arbitrary distinct points on a bounded region $\Re$. The robots neither have a global agreement on the coordinate axis nor knowledge of $N$. Each of them operates in \emph{Look-Compute-Move} activation cycles and has a persistent light attached to it that can assume a color from a predefined color set. The objective is to divide the region $\Re$ into $N$ uniform partitions using $N$ mobile robots such that each partition contains exactly one robot.

\begin{figure}
    \centering
    \includegraphics[width=\linewidth]{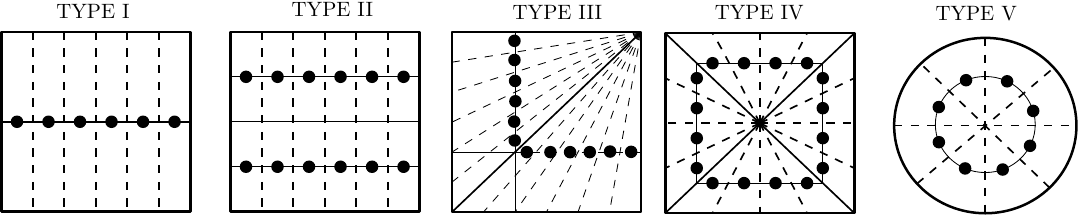}
    \caption{Illustrating different types of partitioning}
    \label{partitions}
\end{figure}
\noindent \textbf{Partition Types:} Depending on the positions and distribution of the robots over the region $\Re$, robots decide the type of partitioning in a distributed manner. All the partitions should have the same area. We identify four types of partitioning for the rectangular and square regions, as shown in Fig. \ref{partitions}. When the region $\Re$ is a rectangle, robots follow either Type I or Type II partitioning, whereas, in the case of a square region, robots terminate in one of the four types (Type I, II, III and IV). If $\mathscr{P}_1, \mathscr{P}_2, \cdots, \mathscr{P}_N$ are the sub-regions of equal area (i.e., partitions of $\Re$), then $\bigcap\limits_{i=1}^N \mathscr{P}_i$ is a point and we call it a \emph{vertex} of Type III and IV partitioning. In the case of a circular region, Type V partitioning is followed.

\begin{table}[!ht]\footnotesize
    \centering
   \caption{The list of colors for different regions with their specification}
\begin{tabular}{|p{1.7cm}|p{1.5cm}|p{9cm}|}\hline
\textbf{Region} & \textbf{Color} & \textbf{Specification} \\ \hline
   \multirow{2}{*}{Rectangle} & \texttt{OFF} & Initial color \\ \cline{2-3}
   
  & \texttt{FINISH} & Used by robots at termination for Type I and II partitions \\ \hline 
   \multirow{5}{*}{Square} & \texttt{OFF} & Initial color \\ \cline{2-3}
   & \texttt{\texttt{MONITOR}} & Used when the robot moves to the apex point to count the number of robots in case of robots are only on one or two opposite sides of $\Re$\\ \cline{2-3}
   
   & \texttt{FINISH} & Used by robots at termination for Type I and II partitions \\ \cline{2-3}
   & \texttt{FINISH1} & Used by robots at termination for Type III partition \\ \cline{2-3}
   & \texttt{FINISH2} & Used by robots at termination for Type IV partition\\ 
    \hline
    \multirow{9}{*}{Circle}& \texttt{OFF} & Initial color of the robots\\
    \cline{2-3}
   & \texttt{HEAD} & Used for the head of  an eligible cluster\\
   \cline{2-3}
   & \texttt{TAIL} & Used for the tail of an eligible clusters\\
    \cline{2-3}
    & \texttt{MID} & Used for the middle robots of an eligible cluster\\
    \cline{2-3}
    & \texttt{MOVE-H} & Used for the movement of the head towards the other cluster when the head moves half of the excess distance\\
    \cline{2-3}
    & \texttt{HALF} & Used after the color \texttt{MOVE-H}\\
    \cline{2-3}
    & \texttt{FULL} & Used for the movement of the head towards the other cluster when the head moves the excess distance completely\\
    \cline{2-3}
    &\texttt{FINISH} & Used by robots at termination\\
    \hline
  
\end{tabular}

\label{color_table}
\end{table}
\noindent \textbf{Organization:} Rest of the paper is organized as follows. Section \ref{rectangle} discusses the algorithm for the rectangular region. Section \ref{square} explores the algorithm when the region is a square. Section \ref{circle} mentions the algorithm for the circular region. Finally, we conclude in Section \ref{conclusion}. 

\section{Algorithm for a Rectangular Region}
\label{rectangle}

This section proposes the algorithm \textsc{Rectangle\_Partition} when the region $\Re$ is a rectangle. The pseudocode of the algorithm can be found in  \ref{appendix}.
Initially, robots are deployed over distinct points on the rectangle $\Re$ with color \texttt{OFF}. 
Robots will follow either Type I or Type II partitioning, which our algorithm ensures. The algorithm, in this case, uses $2$ colors which are described in the Table \ref{color_table}. 
Our strategy is to bring the robots to the rectangle's boundary first. Then, each robot will occupy a point of a partition of the rectangle.
We define some notations that will be used to explain the algorithm. We must note that the variables are defined with respect to the local coordinate system of a robot $r$. A point visible to two different robots can be perceived differently, depending on their coordinate system. 

\noindent \textbf{Notations:} For any robot $r$, we define the following notations.
\begin{itemize}
    \item $S_r$ is the nearest longest side of the rectangle $\Re$ to $r$ or it denotes the specific longest side if $r$ lies along one of them.
   
    \item $S_r^{opp}$ denotes the side opposite to $S_r$ in the rectangle.
    
    \item $L_r$ is the line perpendicular to $S_r$ that passes through $r$.
    
    \item $p_r$ is the point of intersection of the two lines $L_r$ and $S_r$.
    
    \item The two endpoints of a side $S$ of $\Re$ is denoted by $e^1_{S}$ and $e^2_{S}$.
    \item For two opposite sides $S$ and $S'$ of $\Re$ and a constant $0<k<1$, $k S$ is the line segment of length $len(S)$ parallel to the side $S$ that satisfies $d(kS, S) = k d(S, S')$ and intersects $\Re$ at two points.
    
    \item For any two constants $k$ and $k'$ with $0 < k\neq k'<1$ and for a side $S$ of $\Re$, $[kS, k'S]$ is the rectangular region bounded by $kS$, $k'S$ and two adjacent sides of $S$. We denote the interior of the region $[kS, k'S]$ by $(kS, k'S)$. We also denote $(kS, k'S) \cup k'S$ by $(kS, k'S]$.
    
    \item $\mathcal{CH}_r$ is the local convex hull of all the visible robots of $r$.
\end{itemize}

\subsection{Description of the Algorithm}
After activation with $r.color =$ \texttt{OFF}, the target of $r$ is to reach one of the two longest sides of $\Re$ and waits till all other interior robots, all the visible corner robots, and the robots on the two shortest sides with color \texttt{OFF} reach on one of the two longest sides of $\Re$. Before moving to one of the longest sides of $\Re$, $r$ needs to choose a side of $\Re$ as its target. Let us denote this target side by $S_{r}$. If $r$ is already situated on one of the two longest sides of $\Re$, then it selects that side as $S_r$.
 If $r$ is a corner robot, $r$ chooses the longest side incident with it as $S_r$. If $r$ is an interior robot or a boundary robot on one of the two shortest sides, it chooses the nearest longest side as $S_r$. In case of $r$ being equidistant from both the longest sides, $r$ chooses any one of them as $S_r$.

\noindent \textbf{Strategy} \textsc{Move\_To\_LongestSides:} When $r$ gets activated with $r.color =$ \texttt{OFF}, $r$ finds the target side $S_r$ in the current LCM cycle and calculates the point of intersection $p_r$ of $S_r$ and $L_r$. 
Now, $r$ needs to consider its current position in the region provided the point $p_r$ is visible to $r$. 
It 
\begin{wrapfigure}[15]{r}{0.4\textwidth}
\centering
  \includegraphics[width=0.9\linewidth]{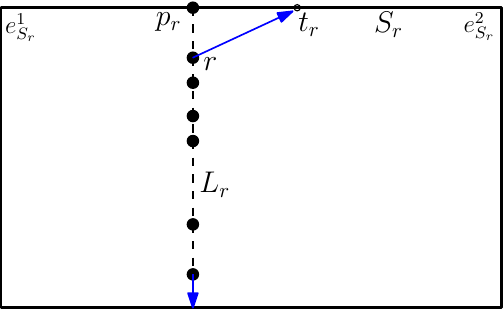}
     \caption{The movement of $r$ to $t_r$ when all robots are on a line}
     \label{fig.v_r_empty}
  \smallskip\par
  \includegraphics[width=0.9\linewidth]{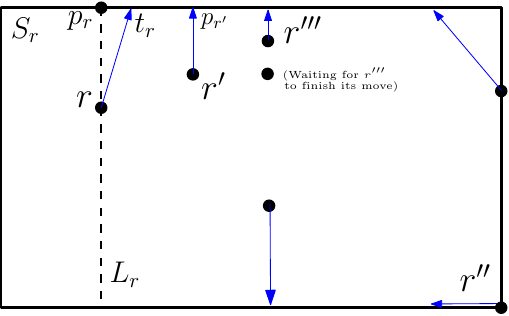}
     \caption{$r$ moves to the point $t_r$ that lies on the nearest longest side $S_r$}
     \label{fig.moving_to_target_side}
\end{wrapfigure}
maintains the status quo if the point $p_r$ is not visible to $r$. 
When $p_r$ is visible, we can differentiate the following four cases based on the different positions of $r$ and propose different strategies for $r$. 
(i) If $r$ is an interior robot with $p_r$ empty, $r$ simply moves to $p_r$ with its current color \texttt{OFF}.
(ii) If $r$ is a boundary robot on one of the two shortest sides of $\Re$ and $p_r$ (which, in this case, is one of the corners of $\Re$) has a robot on it, $r$ waits with no change in its current color till $p_r$ gets empty.
(iii) $r$ is a boundary robot on one of the two shortest sides of $\Re$, and $p_r$ is empty.
(iv) $r$ is either an interior robot with another robot on $p_r$ or a corner robot.
In the last two cases, (iii) and (iv), $r$ needs to find a target point $t_r$ on $S_r$ in such a way that it does not encounter any collision. 
 In this process, $r$ finds a set $\mathcal{V}_r$ that consists of all visible robots not lying on $L_r$. There can be two sub-cases. 
\begin{itemize}
    \item \textbf{$\mathcal{V}_r$ is empty:} In this case, $r$ selects a target point $t_r$ on the side $S_r$ as shown in Fig. \ref{fig.v_r_empty}, such that $d(p_r, t_r) = \frac{1}{2} \max \{ d(e^1_{S_r},p_r), d(e^2_{S_r}, p_r) \}$.

    \item \textbf{$\mathcal{V}_r$ is non-empty:} As shown in Fig. \ref{fig.moving_to_target_side}, $r$ in this case, chooses the target point $t_r$ that satisfies $d(p_r,t_r) = \frac{1}{4} \min\limits_{r' \in \mathcal{V}_r} \{ d(r',L_r) \}$.
\end{itemize}
Finally, $r$ moves to the point $t_r$ without changing the current color \texttt{OFF}.  If $r$ is already positioned on one of the longest sides of $\Re$, it does not change its current position or color until all the \texttt{OFF}-colored robots not lying on the longest sides reach on their nearest longest side. 

Now, we need the following definitions to categorize the robots.
\begin{definition}{(Terminal Robot):} Let $L$ be a line segment with two endpoints $e_{L}^1$ and $e_{L}^2$. A non-corner robot $r$ lying on $L$ is called a terminal robot on $L$ if at least one of line segments $\overline{re_L^1}$ and $\overline{re_L^2}$ is not occupied by any other robot. 
\end{definition}

\begin{definition}{(Monitor Robot):}
\label{monitorrobot}
Let $S_r$ be the longest side of $\Re$ where $r$ is situated in the current LCM cycle. $r$ is called a monitor robot if it satisfies all the following conditions. 
    (i) $r$ is a terminal robot on $S_r$.
    (ii) $(S_r, \frac{7}{8}S_r)$ has no robot and all the robots on $(S_r^{opp}, \frac{1}{8}S_r^{opp}]$ are with color \texttt{FINISH}.
    (iii) There is no corner robot visible to $r$.
    (iv) There is no robot on both the shortest sides of $\Re$.
\end{definition}

After all the visible robots lying either on $S_r$ or on $S_r^{opp}$ (where $S_r$ is one of the longest sides of $\Re$), the robot $r$ finds whether it is a monitor robot or not. The strategy is to move the monitor robots at a particular distance (inside the region) from the longest sides of $\Re$ so that they can count the number of total robots present in the region and decide the partition type of $\Re$. This is important because neither the robots possess any memory to store the value of $N$ nor are they transparent to see each other. It does nothing if $r$ does not qualify as a monitor robot. When $r$ is a monitor robot on $S_r$, it needs to move to a point $a_r$ on $\frac{1}{8}S_r$, referred to as \emph{apex point}, from where it can count the number of robots which is necessary to decide the next action of $r$. The robot $r$ moves to its apex point with color \texttt{FINISH}. At this stage, our target is to gather all the robots on one of the longest sides of $\Re$ if the number of robots lying on the two longest sides differs. In this case, our objective is to relocate all the robots from the side with a smaller number to the side with a larger number of robots. After all the robots have positioned themselves on one of the longer sides of $\Re$, monitoring robots place themselves again at apex points with the color \texttt{FINISH} to count $N$ and then move to their final positions of the partitioning on $\frac{1}{2}S_r$ without changing their color. Other robots decide their final positions based on these \texttt{FINISH}-colored robots on $\frac{1}{2}S_r$. We refer to this kind of partitioning as Type I partitioning.

On the other hand, if the number of robots on both of the longest sides is the same, we adopt a different strategy to partition the region. Here, the robots on the apex points first place themselves in their respective final positions on $\frac{1}{4}S_r$ without changing their color. Other robots decide their final positions based on the \texttt{FINISH}-colored robots. We refer to this as Type II partitioning.

In the following, we describe in detail the action of a monitor robot $r$ lying on $S_r$.

\noindent\textbf{Strategy} \textsc{MonitorRobot\_Movement\_Rectangle}:
 If each of $S_r$ and $S_r^{opp}$ has exactly one robot on it, $r$ simply moves to the midpoint of the line segment $\frac{1}{4}S_r$ from $S_r$, after changing its color to \texttt{FINISH}. If $S_r^{opp}$ has exactly one robot, but $S_r$ has multiple robots on it, $r$ can understand this scenario by confirming the existence of its neighbouring robot on $S_r$. In this case, $r$ waits until the robot on $S_r^{opp}$ moves to $S_r$. If $r$ is the only robot on $S_r$, but $S_r^{opp}$ has multiple robots, $r$ decides its target side as $S_r^{opp}$ and moves to a point on it with the color \texttt{OFF}, following a strategy similar to the strategy \textsc{Move\_To\_LongestSides}. Observe that in all of the above scenarios, $r$ does not need to compute and move to its apex point on $\frac{1}{8}S_r$. We now focus on the case where both $S_r$ and $S_r^{opp}$ have multiple robots, and the monitor robot $r$ moves to the apex point $a_r$.
\begin{figure}[!ht]
 \begin{minipage}[c]{0.32\textwidth}
     \centering
     \includegraphics[width=0.95\linewidth]{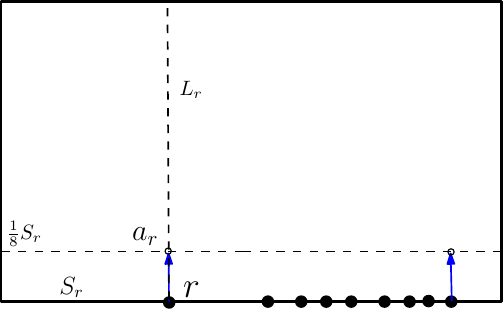}
     \caption{If $S_r^{opp}$ has no robot, $a_r$ is the intersection point of $L_r$ and $\frac{1}{8}S_r$}
     \label{fig.apex_point_s_r_opposite_empty}
 \end{minipage}
 \hfill
 \begin{minipage}[c]{0.32\textwidth}
     \centering
     \includegraphics[width=0.95\linewidth]{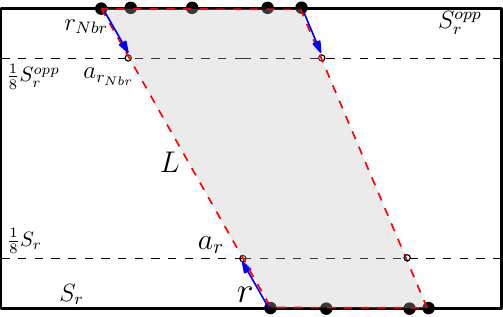}
     \caption{$r$ moves to $a_r$ considering $r_{Nbr}$ on $\mathcal{CH}_r$, which is not  on $S_r$}
     \label{fig.move_apex_point}
 \end{minipage}
\hfill
\begin{minipage}[c]{0.32\textwidth}
     \centering
     \includegraphics[width=0.95\linewidth]{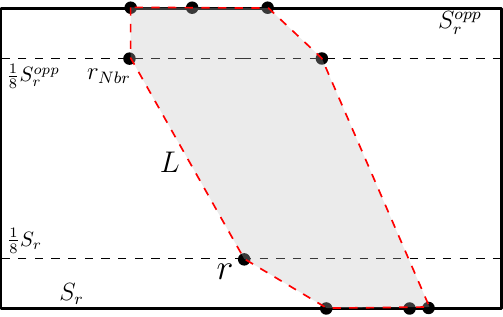}
     \caption{$r$ remains a vertex of  $\mathcal{CH}_r$ after the movement to apex point}
     \label{fig.after_movement_to_apexpoint}
 \end{minipage}
 \end{figure}

  \noindent \textbf{Movement to the Apex Point $a_r$:}
  When $r$ finds no robot on $S_r^{opp}$, it selects the point of intersection of $L_r$ and $\frac{1}{8}S_r$ as its apex point $a_r$. as shown in Fig \ref{fig.apex_point_s_r_opposite_empty}. When there is at least one robot on $S_r^{opp}$, $r$ first selects the line $L = \overleftrightarrow{rr_{Nbr}}$, where $r_{Nbr}$ is the neighbour of $r$ on the convex hull $\mathcal{CH}_r$ lying on the side $(S_r^{opp}, \frac{1}{8}S_r^{opp}]$. Then, $r$ chooses the intersection point of $L$ and $\frac{1}{8}S_r$ as the apex point $a_r$ (refer to Fig \ref{fig.move_apex_point}). 
  The calculation of the apex point depends on the positions of the robots on $S_r^{opp}$ to confirm that $r$ can see all other robots in $\Re$ and does not become an obstruction between the other two robots after its movement, as depicted in Fig \ref{fig.after_movement_to_apexpoint}.
  Finally, $r$ moves to $a_r$ after changing its color from \texttt{OFF} to \texttt{FINISH}.

We now identify two cases based on the position of $r$. 

\noindent \textbf{Case 1 ($r$ lies on $\frac{1}{8}S_r$):} In this case, if $r$ gets activated and finds itself lying on $\frac{1}{8}S_r$, then $r.color$ must be \texttt{FINISH}, as $r$ must have moved to its apex point from the side $S_r$ in its previous LCM cycle. Now, $r$ identifies the nearest longest side of $\Re$ as $S_r$. 
It is possible for $r$ to encounter an \texttt{OFF}-colored robot $r'$ on $Int(\Re)$, which might occur if $r'$ is in its move phase to allocate itself on 
\begin{wrapfigure}[9]{r}{0.4\textwidth}
\centering
  \includegraphics[width=0.9\linewidth]{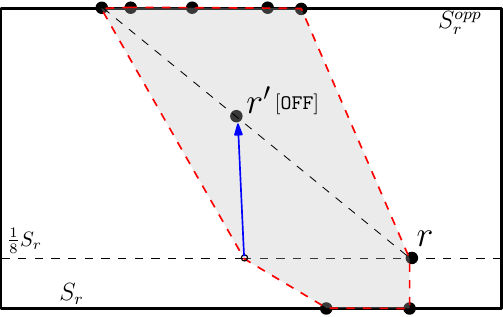}
     \caption{Movement of the \texttt{OFF}-colored robot $r'$ creates obstruction for $r$}
     \label{fig.off_colored_robot_creates_obstruction}
\end{wrapfigure}one of the longest sides of $\Re$ (either from $\frac{1}{8}S_r$ or from $\frac{1}{8}S_r^{opp}$). Fig. \ref{fig.off_colored_robot_creates_obstruction} depicts one such situation. This might occur due to the asynchronous activation of the robots. 
Here $r$ does not change its position and color.
Otherwise (when no \texttt{OFF}-colored robot is visible in $Int(\Re)$), $r$ can identify the other \texttt{FINISH}-colored robot $r_i$ within $[S_r, \frac{1}{8}S_r]$, if exists (which might have been coming from $S_r$ to its apex point). Moreover, it can also identify the robot with color \texttt{FINISH} on $[\frac{1}{8}S_r, \frac{1}{2}S_r]$ (this is possible when $r_i$ is moving to its final position either on $\frac{1}{4}S_r$ for Type II partitioning or on $\frac{1}{4}S_r$ for Type I partitioning and it is observed by $r$).
  Similarly, $r$ can easily identify the robots with color \texttt{FINISH} which are nearest to $S_r^{opp}$ by observing the region $[S_r^{opp}, \frac{1}{4}S_r^{opp}]$.
  If $r$ finds all the \texttt{FINISH}-colored robots on $\frac{1}{8}S_r$ and $\frac{1}{8}S_r^{opp}$, it calculates $c_r^1$ and $c_r^2$ where $c_r^1$ is the total number of the robots on the band $[S_r,\frac{1}{8}S_r]$ including itself. $c_r^2$ is the total number of robots on the band $[S_r^{opp}, \frac{1}{8}S_r^{opp}]$.  
  We now identify the following four sub-cases for the robot $r$ lying on $\frac{1}{8}S_r$ based on  $c_r^1$ and $c_r^2$. 

  \begin{figure}[h]
 \begin{minipage}[c]{0.47\textwidth}
     \centering
     \includegraphics[width=0.9\linewidth]{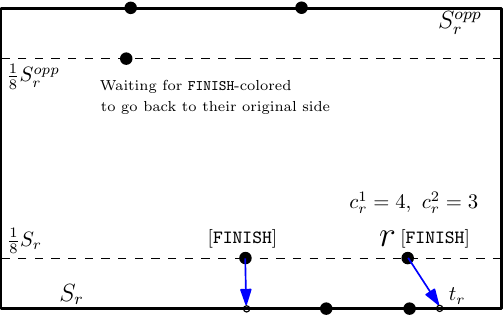}
     \caption{$r$ moves back to $S_r$ from its apex point when $c_r^1 > c_r^2 >0$}
     \label{fig.case1.1_rectangle}
 \end{minipage}
 \hfill
 \begin{minipage}[c]{0.47\textwidth}
     \centering
     \includegraphics[width=0.9\linewidth]{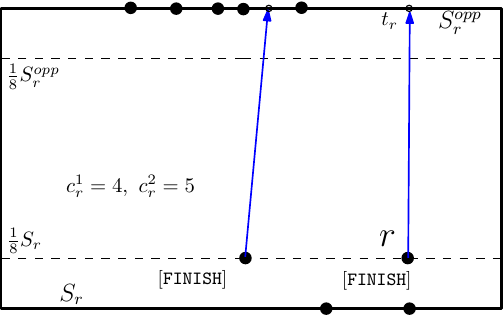}
     \caption{$r$ moves to $t_r$ on $S_r^{opp}$ when $0< c_r^1<c_r^2$ with no \texttt{FINISH}-colored robot on $[S_r^{opp}, \frac{1}{8}S_r^{opp}]$}
     \label{fig.case1.2_rectangle}
 \end{minipage}
 \end{figure}


\begin{itemize}
	 \item \textbf{Case 1.1 ($0 < c_r^2 < c_r^1$):} \label{subhajit} It means that there is a lesser number of robots near $S_r^{opp}$ than that of $S_r$, as shown in Fig. \ref{fig.case1.1_rectangle}. In this case, our aim is to relocate all the robots lying on $[S_r^{opp}, \frac{1}{4}S_r^{opp}]$ to $S_r$. $r$ also moves back to $S_r$ in this case. If the point of intersection $p_r$ of $L_r$ and $S_r$ is visible to $r$ and contains no robot, $r$ moves to $p_r$ with $r.color =$ \texttt{OFF}. Otherwise, it chooses a target point $t_r$ on $S_r$ such that $d(p_r, t_r) = \frac{1}{4} \min\limits_{r' \in \mathcal{V}_r} {d(r', L_r)}$, where $\mathcal{V}_r$ is the set of all visible robots not lying on the line $L_r$. Then, $r$  moves to $t_r$ with current color \texttt{FINISH}.
	 
      \item \textbf{Case 1.2 ($0 < c_r^1 < c_r^2$):} It means that there are fewer robots near $S_r$ than that of $S_r^{opp}$ (Fig \ref{fig.case1.2_rectangle}). In this case, $r$ needs to move to $S_r^{opp}$. Before the movement to $S_r^{opp}$, it first checks whether there is any \texttt{FINISH}-colored robot in $(S_r^{opp}, \frac{1}{8}S_r^{opp}]$. If such a robot $r_1$ exists, then in ASYNC settings, the movement of $r_1$ might lead to a miscount of the number of the robots for $r$, as shown in Fig \ref{fig.wrong_calculation_of_r}. In this situation, $r$ waits until these robots allocate themselves on $S_r^{opp}$. Whenever $r$ does not find any \texttt{FINISH}-colored robots in $(S_r^{opp}, \frac{1}{4}S_r^{opp}]$, it decides to move on $S_r^{opp}$ by finding the target point $t_r$ on $S_r^{opp}$. It needs to consider $p_r^{opp}$, which is the point of intersection of $L_r$ and $S_r^{opp}$. Using the same strategy described in Case 1.1 (instead of $S_r$, $r$ considers $S_r^{opp}$), $r$ finally moves on $t_r$ after changing its current color to \texttt{OFF}. 
\begin{figure}[H]
 \begin{minipage}[c]{0.47\textwidth}
     \centering
     \includegraphics[width=0.9\linewidth]{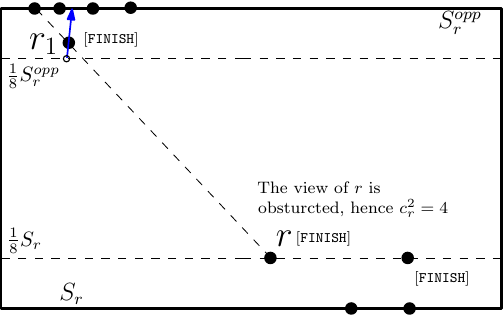}
     \caption{$r$ miscalculated $c_r^2$ if a robot lies on $[S_r^{opp}, \frac{1}{8}S_r^{opp}]$ moving to its target point on $S_r^{opp}$}
     \label{fig.wrong_calculation_of_r}
 \end{minipage}
 \hfill
 \begin{minipage}[c]{0.47\textwidth}
     \centering
     \includegraphics[width=0.9\linewidth]{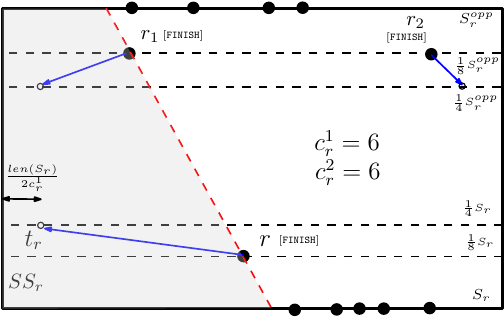}
     \caption{$r$ moves to $t_r$ on $\frac{1}{4}S_r$ because of same number of robot around $S_r$ and $S_r^{opp}$}
     \label{fig.c1=c2}
 \end{minipage}
 \end{figure}

      \item \textbf{Case 1.3 ($0 < c_r^1 = c_r^2$):} In this case, we aim to make the robots follow the Type II partitioning. $r$ first identifies the shortest side of $\Re$, denoted as $SS_r$, which either lies on or intersects the half plane delimited by the line $\overleftrightarrow{rr_1}$ where no other robots of $\Re$ are present. Here $r_1$ is the neighbour of $r$ on $\mathcal{CH}_r$ whose nearest longest side is $S_r^{opp}$. Now, it computes a target point $t_r$ on $\frac{1}{4}S_r$ such that $d(t_r, SS_r) = \frac{len(S_r)}{2c_r^1}$, as shown in Fig. \ref{fig.c1=c2}. Finally, $r$ moves to $t_r$ from its current position with the current color \texttt{FINISH}.
       
      \item \textbf{Case 1.4 ($0 = c_r^2 < c_r^1$):} This case arises when there is no robot lying on $S_r^{opp}$ and the robot $r$ has moved from $S_r$ to a point on $\frac{1}{8}S_r$, as shown in Fig \ref{fig.case1.4_rectangle}. In this case, we want the robots to follow the Type I partitioning. So, $r$ first identifies $SS_r$, which lies on the half plane delimited by $L_r$ where no other robot in $\Re$ resides. It then calculates a target point $t_r$ on $\frac{1}{2}S_r$ such that $d(t_r, SS_r) = \frac{len(S_r)}{2c_r^1}$. It then moves to $t_r$ with the current color \texttt{FINISH}.
  \end{itemize}

    If $r$ is activated on $\frac{1}{8}S_r$ with its color \texttt{FINISH} and sees one \texttt{FINISH}-colored robot $r'$ on $(\frac{1}{8}S_r, \frac{7}{8}S_r)$, it moves back to $S_r$ with color \texttt{FINISH} by following similar movement strategy as described in Case 1.1. This is because the position of $r'$ (which is in its move phase towards the final point) might mislead to the wrong calculation of $c_r^1$ and $c_r^2$. 
    If $r$ sees a \texttt{FINISH}-colored robot on $ [S_r, \frac{1}{8}S_r) \cup [S_r^{opp}, \frac{1}{8}S_r^{opp})$, it does nothing.



\begin{figure}[H]
 \begin{minipage}[c]{0.47\textwidth}
    
     \centering
     \includegraphics[width=0.9\linewidth]{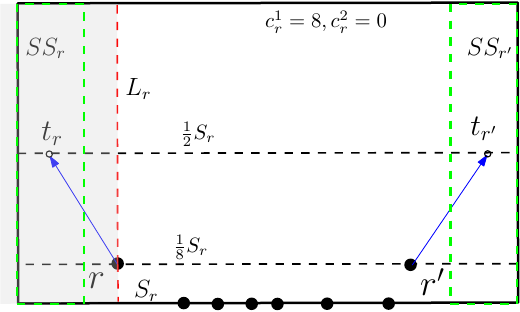}
     \caption{$r$ moves to $\frac{1}{2}S_r$ from its apex point on $\frac{1}{8}S_r$ by following the Type I partitioning. The green box shows a partition of the region}
     \label{fig.case1.4_rectangle}
 \end{minipage}
 \hfill
 \begin{minipage}[c]{0.47\textwidth}
     \centering
     \includegraphics[width=0.9\linewidth]{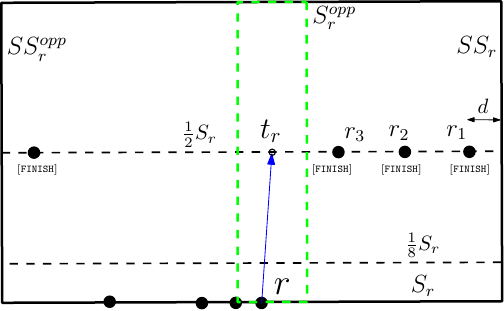}
     \caption{$r$ is on $S_r$ with color \texttt{OFF} and utilizes the position of the \texttt{FINISH} colored robots on $\frac{1}{2}S_r$ to reach its final position}
     \label{fig.case2.3rectangle}
 \end{minipage}
 \end{figure}

  \noindent \textbf{Case 2 ($r$ lies on $S_r$):}  If $r$ gets activated with color \texttt{FINISH} on $S_r$, it understands that it has moved back to $S_r$ from its apex point in its previous LCM cycle. So, it changes its color to \texttt{OFF}. The positions of the \texttt{FINISH}-colored robots in $Int(\Re)$ help the \texttt{OFF}-colored robots on $S_r$ to decide their final positions inside $\Re$. If $r$ is a non-terminal robot on $S_r$, $r$ waits till it becomes terminal on $S_r$. If $r$ becomes a terminal robot with color \texttt{OFF} and all the robots on $Int(\Re)$ are with color \texttt{FINISH}, we list down the following three sub-cases based on the different positions of the \texttt{FINISH}-colored robots, where $r$ decides to move to a target point $t_r$.
  \begin{itemize}

      \item Case 2.1: $r$ finds a \texttt{FINISH}-colored robot on $\frac{1}{4}S_r$ and some robots on $[S_r^{opp}, \frac{1}{4}S_r^{opp}]$)

      \item Case 2.2: $r$ finds a \texttt{FINISH}-colored robot on $\frac{1}{4}S_r^{opp}$, but not on $[S_r, \frac{1}{4}S_r]$ 

      \item Case 2.3: $r$ finds a \texttt{FINISH}-colored robot on $\frac{1}{2}S_r$
  \end{itemize}

In the above cases, $r$ first identifies the sides $SS_r$ and $SS_r^{opp}$. $SS_r$ is the shortest side of $\Re$ for which the intersection point of $S_r$ and $SS_r$ is visible to $r$. $SS_r^{opp}$ is the opposite side of $SS_r$. 
Afterwards, $r$ sets $L = \frac{1}{2}S_r$ when Case 2.3 occurs, as depicted in Fig \ref{fig.case2.3rectangle}. 
Otherwise (for Case 2.1 and Case 2.2), it sets $L = \frac{1}{4}S_r$, as shown in Fig. \ref{fig.case2.2_rectangle}. 
$r$ now considers a set $\mathcal{F}_r$, that consists of all the \texttt{FINISH}-colored robots lying on $\frac{1}{4}S_r$, $\frac{1}{4}S_r^{opp}$, and $\frac{1}{2}S_r$. It now calculates
\begin{wrapfigure}[9]{r}{0.4\textwidth}
\centering
  \includegraphics[width=\linewidth]{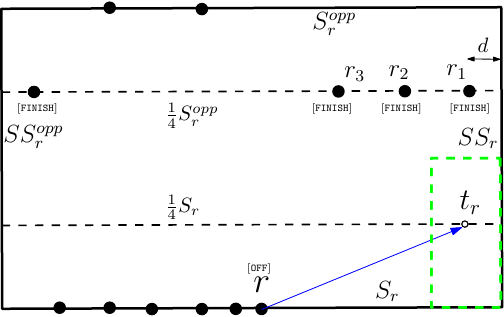}
     \caption{$r$ moves to $\frac{1}{4}S_r$ from $S_r$ by taking help of the \texttt{FINISH}-colored robots on $\frac{1}{4}S_r^{opp}$}
     \label{fig.case2.2_rectangle}
\end{wrapfigure}
$d = \min\limits_{r' \in \mathcal{F}_r} \{ \min \{d(r', SS_r), d(r', SS_r^{opp})\} \}$. Let $r_1, r_2,...r_k$ be a sequence of robots of maximum length on $L$ starting from the robot $r_1$ with $d(r_1, SS_r) =d$ till the robot $r_k$ such that two consecutive robots in the sequence are exactly $2d$ distance apart from each other. When no robot is on $L$, we consider $k = 0$.  Finally, $r$ moves to $t_r$ on $L$ such that $d(t_r, SS_r) = (2k+1)d$ after changing its color to \texttt{FINISH} from \texttt{OFF}. 
In all other situations, it is possible that a \texttt{FINISH}-colored robot is in its move phase towards its target point or it is on its apex point. In those cases, $r$ maintains the status quo.
 
 When $r$  gets activated with the color \texttt{FINISH} and finds itself on either  $\frac{1}{4}S_r$ or  $\frac{1}{2}S_r$, it terminates.

\subsection{Analysis of the Algorithm}
\label{analysis_rectangle}
Here, we analyse the algorithm \textsc{Rectangle\_Partition}. We prove the correctness and the time complexity of this algorithm. We also show that the movement of the robots is free from collision. For our convenience, we refer to the rectangular region $\Re$ as $ABCD$ with two longest sides $\overline{AB}$ and $\overline{CD}$ throughout the analysis of \textsc{Rectangle\_Partition} algorithm.

\begin{lemma}\label{lemma3.1}
All the robots not lying on one of the two longest sides of $\Re$, move to one of the longest sides of $\Re$ without any collision. 
    
\end{lemma}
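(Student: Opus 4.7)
The plan is to verify both assertions of the lemma---eventual reachability of a longest side and collision-freeness---by a case analysis on the four sub-cases (i)--(iv) of the strategy \textsc{Move\_To\_LongestSides}. First I would check that whenever an \texttt{OFF}-colored robot $r$ not on a longest side is activated with $p_r$ visible, the target $t_r$ specified by the strategy is well-defined, lies on the nearest longest side $S_r$, and in cases (iii)--(iv) is strictly offset from $p_r$; a successful move therefore places $r$ on $S_r$ within a single LCM cycle.

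For eventual reachability, the two non-trivial scenarios are (a) $p_r$ being non-visible because an opaque robot obstructs $r$'s line of sight, and (b) case (ii), where the corner $p_r$ is occupied by another robot. In (a), any obstructor $r''$ on $\overline{rp_r}$ lies strictly closer to $S_r$ than $r$, so I would induct on the multiset $\{d(r,S_r)\}$ taken over \texttt{OFF}-colored robots not yet on a longest side: the minimum-distance element has no obstructor by minimality and thus reaches $S_r$ in finitely many epochs under fair activation, after which the induction step applies to $r$. In (b), the blocking corner robot is itself handled by case (iv) and moves to a non-corner point on $S_r$, freeing $p_r$. ASYNC fairness then guarantees that $r$ is eventually activated with $p_r$ visible and executes case (i), (iii), or (iv), reaching $S_r$.

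For collision-freeness, the crucial ingredients are the buffer $\frac{1}{4}\min_{r'\in\mathcal{V}_r}d(r',L_r)$ (non-empty $\mathcal{V}_r$) and the choice $d(p_r,t_r)=\frac{1}{2}\max\{d(e^1_{S_r},p_r),d(e^2_{S_r},p_r)\}$ (empty $\mathcal{V}_r$), both of which confine $t_r$ to a band around $L_r$ on $S_r$ that is free of robots in $\mathcal{V}_r$. I would first show that $t_r$ differs from every current robot position and that the open segment $\overline{rt_r}$ remains inside a robot-free region determined by $\mathcal{CH}_r$. The main obstacle will be handling concurrent moves in the ASYNC model, where $r$'s snapshot may not reflect a simultaneously moving $r'$; I plan to resolve this by a geometric invariant: if two movers $r,r'$ aim at the same side $S_r=S_{r'}$ with $L_r\neq L_{r'}$, then each buffer is bounded by $\tfrac{1}{4}d(L_r,L_{r'})$, so $t_r$ and $t_{r'}$ stay at least $\tfrac{1}{2}d(L_r,L_{r'})$ apart along $S_r$ and their movement segments, confined to disjoint $\tfrac{1}{4}$-bands on either side of $L_r$ and $L_{r'}$, cannot cross. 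When the two movers share $L_r=L_{r'}$ or target opposite longest sides, the nearest-side rule forces them to move in the same or strictly opposite directions along a common perpendicular, again ruling out any collision.
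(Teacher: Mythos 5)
Your proposal is correct and follows essentially the same route as the paper's proof: the collision argument rests on the same observation that the $\frac{1}{4}\min_{r'\in\mathcal{V}_r}d(r',L_r)$ offset keeps each target within a quarter of $d(L_r,L_{r'})$ of its own perpendicular, so the two movement segments lie in disjoint bands (the paper phrases this as a separating line through the midpoint of $\overline{t_rt_{r'}}$), with sequential movement handling collinear robots. Your explicit induction on $d(r,S_r)$ for eventual reachability and your treatment of the occupied-corner case are somewhat more careful than the paper's ``eventually $r$ gets the chance to move,'' but the underlying argument is the same.
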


\begin{proof}
Let $\overline{EF}$ divide the rectangle into two halves, as shown in Fig. \ref{fig.rectangle_lemma3.1}. The robots lying inside the rectangle $ABFE$ choose the side $\overline{AB}$ as their target and move to some points on it. Similarly, the robots inside the rectangle $CDEF$ move to some point of the side $\overline{CD}$. The robots on $\overline{EF}$ choose any of the two sides $\overline{AB}$ and $\overline{CD}$ as the target side and move to it. Let us consider a robot $r$ on $\overline{EF}$ that chooses the side $\overline{AB}$ as the target side, but there are robots on the line segment $\overline{rp_r}$ which are nearer to $\overline{AB}$. In this case, $r$ waits until all these robots reach some point on $\overline{AB}$. So, eventually, $r$ gets the chance to move to a point on $\overline{AB}$. 

Now we choose another robot $r'$ lying inside $ABFE$, but not on $L_r$. We show that $r$ and $r'$ do not collide in their way to $AB$. For both $r$ and $r'$, if the two points $p_r$ and $p_{r'}$ remain empty, they follow the path $\overline{rp_r}$ and $\overline{r'p_{r'}}$ to move to $\overline{AB}$. Since, $\overline{rp_r} || \overline{r'p_{r'}}$, $r$ and $r'$ cannot collide in this movement. Let us assume that $p_r$ and $p_{r'}$ are non-empty and $r'$ is the nearest robot in $\mathcal{V}_r$. So, $r$ calculates a point $t_r$ on $\overline{AB}$ such that $d(p_r, t_r) \leq \frac{1}{4} d(r',L_r) < \frac{1}{2} d(r', L_r)$. Even if $r'$ calculates its target point $t_{r'}$ on the line segment $\overline{AB}$, we have $d(p_{r'}, t_{r'}) \leq \frac{1}{4} d(r',L_r)$. Thus, $d(t_r, t_{r'}) \geq d(p_r, p_{r'}) - d(p_r, t_r) - d(p_{r'}, t_{r'}) \geq d(r', L_r) - \frac{1}{4}d(r', L_r) - \frac{1}{4}d(r', L_r) > 0$. Moreover, the line $L$ passing through the midpoint of $\overline{t_rt_{r'}}$ separates the two paths $\overline{rt_r}$ and $\overline{r't_{r'}}$, as depicted in Fig \ref{fig.rectangle_lemma3.1}.
So, the two robots $r$ and $r'$ cannot collide. In the case of $r$ and $r'$ lying on the same line, they move sequentially to the side $\overline{AB}$, because of which they cannot meet a collision.  
\end{proof}

\begin{figure}[H]
 \begin{minipage}[c]{0.47\textwidth}
      \centering
    \includegraphics[width=0.8\linewidth]{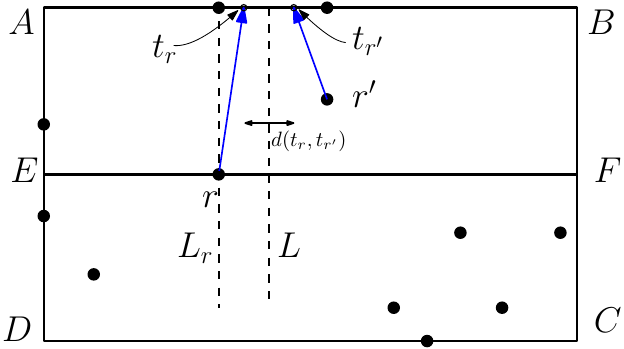}
    \caption{Collision free movement of the robots $r$ and $r'$, even in simultaneous execution}
    \label{fig.rectangle_lemma3.1}
 \end{minipage}
 \hfill
 \begin{minipage}[c]{0.47\textwidth}
     \centering
     \includegraphics[width=0.7\linewidth]{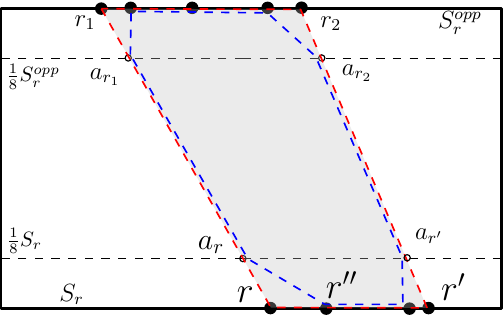}
     \caption{Montior robots move to their respective apex points to see all the robots}
     \label{fig.rectangle_lemma3.2}
 \end{minipage}
 \end{figure}

\begin{lemma}\label{lemma 3.2}
    A monitor robot $r$ lying on $S_r$ selects its apex point $a_r$ on $\frac{1}{8}S_r$ so that all the robots on $\Re$ are visible to it after its movement.
\end{lemma}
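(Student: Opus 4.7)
The plan is to split the argument by the location of other robots relative to the apex point $a_r$. The monitor-robot conditions (Definition \ref{monitorrobot}) already restrict the geometry significantly: the open strip $(S_r,\frac{7}{8}S_r)$ contains no robot, every robot in the band $(S_r^{opp},\frac{1}{8}S_r^{opp}]$ carries color \texttt{FINISH}, no corner robot is visible from $r$, and neither shortest side has a robot. Consequently every robot other than $r$ sits in one of three regions: on $S_r$, on $S_r^{opp}$, or in the narrow top band $(S_r^{opp},\frac{1}{8}S_r^{opp}]$, and I will verify visibility for each group separately.

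For robots of the first type, I would observe that $a_r$ lies on $\frac{1}{8}S_r$, a line parallel to but disjoint from $S_r$. Hence for any $r'$ on $S_r$, the segment $\overline{a_r r'}$ meets $S_r$ only at $r'$ itself (so no other boundary robot lies on it) and is contained in the strip $[S_r,\frac{1}{8}S_r]$, whose interior is empty by condition (ii). This gives an unobstructed line of sight, establishing visibility from $a_r$ to every robot on $S_r$.

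For robots of the top region I would split on the algorithmic sub-case. If $S_r^{opp}$ has no robot, the choice $a_r=L_r\cap \frac{1}{8}S_r$ is the foot of the perpendicular from $r$; in this regime the preceding paragraph already covers all other robots. When some robot lies on $S_r^{opp}$, $a_r$ is placed on the line $L=\overleftrightarrow{rr_{Nbr}}$, where $r_{Nbr}$ is the convex-hull neighbour of $r$ in the top band. The crucial geometric fact is that, being the endpoint of an edge of $\mathcal{CH}_r$, the segment $\overline{rr_{Nbr}}$ lies on a supporting line of the visible-robot hull, i.e., every visible robot lies on one closed half-plane determined by $L$ and no robot lies strictly in the open segment $\overline{rr_{Nbr}}$. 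I would then argue that placing $a_r$ on $\overline{rr_{Nbr}}$ inherits both properties, and that, combined with the emptiness of the middle strip (so any potential blocker for a top-region robot must itself lie in the narrow band $[\frac{1}{8}S_r^{opp},S_r^{opp}]$), every line of sight from $a_r$ to a robot of the top region is unobstructed.

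The main obstacle will be making this last step quantitatively precise: it is not enough that $a_r$ lies on a supporting line; one must also show that the small downward translation of $r$ (by only $\frac{1}{8}d(S_r,S_r^{opp})$) neither introduces a new occluder in the narrow top strip nor causes $a_r$ to become collinear with two previously-distinguishable top robots. I expect this to reduce to an angular/radial-sweep comparison between the views from $r$ and from $a_r$, exploiting the supporting-line property of $L$ together with the fact that the top band has width exactly $\frac{1}{8}d(S_r,S_r^{opp})$, the same as the height at which $a_r$ sits; any remaining hypothetical blocker would have to lie on $L$ on the same side as $r_{Nbr}$, but then it would already have been between $r$ and $r_{Nbr}$ on the hull edge, contradicting the supporting-line property. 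The figures (Figs.~\ref{fig.apex_point_s_r_opposite_empty}--\ref{fig.after_movement_to_apexpoint}) will be helpful here to keep the case analysis concrete.
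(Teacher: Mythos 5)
Your decomposition is sound as far as it goes, and the first half is genuinely fine: the segment from $a_r$ to any robot on $S_r$ is contained in the strip $[S_r,\tfrac{1}{8}S_r]$, whose interior is robot-free by condition (ii) of Definition \ref{monitorrobot}, so visibility towards $S_r$ is immediate and even a little cleaner than the paper's treatment. The genuine gap is exactly where you flag it, and the patch you sketch does not close it. A robot $b$ obstructing the line of sight from $a_r$ to a robot $q$ on $S_r^{opp}$ or on $(S_r^{opp},\tfrac{1}{8}S_r^{opp}]$ must lie on the segment $\overline{a_r q}$, and that segment is contained in $L=\overleftrightarrow{rr_{Nbr}}$ only when $q$ itself happens to lie on $L$; so the assertion that ``any remaining hypothetical blocker would have to lie on $L$'' is unjustified, and the intended contradiction with the supporting-line property never materialises. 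Concretely, what must be excluded is a collinear triple consisting of $a_r$, a \texttt{FINISH}-colored robot sitting at its own apex point on $\tfrac{1}{8}S_r^{opp}$, and a robot on $S_r^{opp}$; your angular-sweep reduction says nothing about such triples, and neither does the width-$\tfrac{1}{8}$ observation.

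The paper closes this with a single global fact that you already have the ingredients for but do not exploit: in every configuration in which a monitor robot exists, all robots lie on the boundary of the global convex hull $\mathcal{CH}^G$ (the robots on the two parallel sides, plus the at most four monitor robots, whose apex points are reached by sliding along hull edges), and if a point is a vertex of such a hull with its hull-neighbours also vertices, then no two of the remaining points are collinear with it. Since $a_r$ is chosen strictly inside the hull edge $\overline{rr_{Nbr}}$, the robot $r$ remains a vertex of $\mathcal{CH}^G$ after its move, which disposes of all top-region robots at once; the same reasoning shows that the other monitor robots, which also travel along hull edges, stay vertices even under simultaneous ASYNC movement --- a situation your static three-region snapshot does not address at all. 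Replacing your radial-sweep step by this hull-vertex argument (and adding a sentence for the degenerate case where all robots lie on one side, so that $a_r$ is the foot of the perpendicular and all other robots stay between $L_r$ and $L_{r'}$) yields a complete proof.
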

\begin{proof}
    We want to take advantage of the properties of the convex hull. Let us consider a convex hull $\mathcal{CH}^G$ of $N$ points where all the points lie on the boundary of $\mathcal{CH}^G$. If a point $v$  and its two neighbours are three vertices of $\mathcal{CH}^G$, then no two points on $\mathcal{CH}^G$ are collinear with $v$.
    By the definition of a monitor robot, $r$ must be a terminal robot on $S_r$. Let us denote the convex hull of all the robots of $\Re$ by $\mathcal{CH}^G$. Let us further assume that $r'$, $r_1$ and $r_2$ are the other monitor robots in $\Re$, where $r'$ lies on $S_r$ and both $r_1$ and $r_2$ lie on $S_r^{opp}$, (Fig. \ref{fig.rectangle_lemma3.2}). 
    Observe that all of $r$, $r'$, $r_1$ and $r_2$ are vertices of $\mathcal{CH}^G$ (shaded region with red boundary), and $r_1$ is one of the neighbours of $r$. After activation, $r$ chooses the point of intersection of the line $\overleftrightarrow{rr_1}$ and $\frac{1}{8}S_r$ as its apex point $a_r$. 
    Since $r$ follows a path to $a_r$ on the line $\overleftrightarrow{rr_1}$, it remains a vertex of $\mathcal{CH}^G$ after the movement. If the other neighbour (lying on $S_r$) of $r$ is $r''$, then $r''$ becomes a vertex of the convex hull $\mathcal{CH}^G$ as $r$ has moved to $a_r$. If the robot $r_1$ stays on $S_r^{opp}$, it remains a vertex of $\mathcal{CH}^G$. In simultaneous movement with $r$, $r_1$ also follows a path on $\overleftrightarrow{rr_1}$, leading $r_1$ to remain a vertex of $\mathcal{CH}^G$. Similarly, if $r'$ and $r_2$ move to their respective apex points, they remain as vertices of $\mathcal{CH}^G$ (shaded region with blue boundary).  So, no two robots become collinear with $r$ after moving to its apex point. 

    When all the robots lie on one side $S_r$ ($r_1$ and $r_2$ do not exist), $a_r$ is the point of intersection of $\frac{1}{8}S_r$ and $L_r$. If another monitor robot $r'$ moves simultaneously with $r$ to its apex point $a_{r'}$ on $\frac{1}{8}S_r$, all other robots lie between $L_r$ and $L_{r'}$ and on $S_r$. Hence, $r$ can see all the robots on $S_r$ after moving to the apex point $a_r$. 
\end{proof}

\begin{lemma}\label{newlemma1_rectangle}
A robot $r$ moves to its apex point $a_r$ on $\frac{1}{8}S_r$ without any collision.
\end{lemma}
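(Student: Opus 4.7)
The plan is to combine the empty-strip property guaranteed by the monitor robot definition with the convex-hull argument already used in Lemma \ref{lemma 3.2}. Since $r$ is a monitor robot, Definition \ref{monitorrobot}(ii) forces the open band $(S_r, \frac{7}{8}S_r)$ to contain no robot, and only \texttt{FINISH}-colored robots may lie in $(S_r^{opp}, \frac{1}{8}S_r^{opp}]$. Because $a_r \in \frac{1}{8}S_r \subset (S_r, \frac{7}{8}S_r)$, the target point is unoccupied and the interior of the line segment $\overline{ra_r}$ passes only through the empty strip. Hence no stationary robot can sit on $\overline{ra_r}$ or at $a_r$ when $r$ starts moving.

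The only remaining concern is simultaneous movement of another robot toward its own apex. If a second monitor robot $r'$ on $S_r$ is also heading to its apex $a_{r'}$ in the same epoch, I would split on cases. When $S_r^{opp}$ is empty, both paths lie along the perpendiculars $L_r$ and $L_{r'}$ to $S_r$, which are parallel; since $r$ and $r'$ occupy distinct positions on $S_r$, the paths are disjoint. When $S_r^{opp}$ is non-empty, $r$ moves along $\overleftrightarrow{rr_{Nbr}}$ and $r'$ along $\overleftrightarrow{r'r'_{Nbr}}$, where $r_{Nbr}$ and $r'_{Nbr}$ are their respective convex-hull neighbours in $(S_r^{opp}, \frac{1}{8}S_r^{opp}]$. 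By the same convex-hull reasoning as in Lemma \ref{lemma 3.2}, these two travel lines are non-crossing chords of the global convex hull inside the strip $[S_r, \frac{1}{8}S_r]$, so $\overline{ra_r}$ and $\overline{r'a_{r'}}$ are disjoint and, in particular, $a_r \neq a_{r'}$. A symmetric argument rules out interference from a monitor robot on $S_r^{opp}$, whose path is confined to the symmetric strip $[S_r^{opp}, \frac{1}{8}S_r^{opp}]$.

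The main obstacle I anticipate is the ASYNC simultaneity: a robot $r'$ could take its snapshot while $r$ is already partway through its move and therefore perceive a different convex-hull structure that yields a conflicting apex point. I would resolve this by observing that every intermediate position of $r$ lies on the convex-hull chord joining $r$'s starting position to $r_{Nbr}$, so $r$ remains a vertex of the global convex hull at every instant during its motion. Consequently, the identity of $r'_{Nbr}$ and the non-crossing conclusion above remain stable throughout $r$'s movement, and the collision-free guarantee carries over to the asynchronous schedule.
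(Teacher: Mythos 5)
Your proof is correct and follows essentially the same route as the paper's: parallel perpendicular paths when all robots are on one longest side, non-crossing convex-hull edges (via Lemma \ref{lemma 3.2}) for two monitor robots on the same side, and disjoint strips $[S_r,\frac{1}{8}S_r]$ versus $[S_r^{opp},\frac{1}{8}S_r^{opp}]$ for monitor robots on opposite sides. Your additional observations --- that Definition \ref{monitorrobot}(ii) guarantees the target point and the interior of $\overline{ra_r}$ are free of stationary robots, and that $r$ stays a hull vertex throughout its motion so the argument survives asynchronous snapshots --- are sound refinements the paper leaves implicit.
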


\begin{proof}
    When all the robots are positioned on one longest side, say $CD$ of $\Re$, two monitor robots lie on $CD$. In this case, if $r$ and $r'$ be the monitor robots, they choose their respective apex points $a_r$ and $a_{r'}$ on $\frac{1}{8}CD$ (as it is the same as $\frac{1}{8}S_r$ and $\frac{1}{8}S_{r'}$) such that $\overline{r a_r} \perp CD$ and $\overline{r' a_{r'}} \perp CD$. This implies $\overline{r a_r} \parallel \overline{r' a_{r'}}$, which means that the two robots cannot collide. 

    When both the longest sides have some robots, there can be at most four monitor robots. Without loss of generality, we assume that $r_1$ and $r_2$ are the two monitor robots on $S_r^{opp}$ along with $r$ and $r'$ being the monitor robots on $S_r$. As depicted in Fig \ref{fig.rectangle_lemma3.2}, $r_1$ is a neighbour of $r$ on the convex hull $\mathcal{CH}^G$. $r$ cannot meet a collision with $r_1$ and $r_2$ even in case of simultaneous movement because $a_r$ lies on $\frac{1}{8}S_r$ whereas $a_{r_1}$ and $a_{r_2}$ lie on $\frac{1}{8}S_r^{opp}$ (which is same as $\frac{7}{8}S_r$). It is also not possible for $r$ to collide with $r'$, as $r$ and $r'$ follow the paths $\overline{rr_1}$ and $\overline{r'r_2}$ which are two different sides of the convex hull $\mathcal{CH}^G$, as described in Lemma \ref{lemma 3.2}.
\end{proof}

\begin{remark}\label{remark_1_rectangle}
    When neither there is a \texttt{OFF}-colored robot in $Int(\Re)$ nor a \texttt{FINISH}-colored robot on $(\frac{1}{8}S_r, \frac{7}{8}S
    _r)$, a \texttt{FINISH}-colored robot lying on $\frac{1}{8}S_r$ accurately calculates $c_r^1$ and $c_r^2$,  where $c_r^1$ is the total number of the robots on $S_r$ and on the band $[S_r,\frac{1}{2}S_r]$ including itself. $c_r^2$ is the total number of robots on  $S_r^{opp}$ and on the band $[S_r^{opp}, \frac{1}{4}S_r^{opp}]$.
\end{remark}
    
\begin{lemma}\label{lemma3.3}
    If the longest side $AB$ of $\Re$ contains less number of robots than the other longest side $CD$, all the robots on $AB$ move to $CD$ without collision.
\end{lemma}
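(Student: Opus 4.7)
The plan is to argue by induction on the number of robots remaining on $AB$, showing that in each round at least one such robot migrates collision-free to $CD$. The central invariant I would maintain is $c_{AB} < c_{CD}$, which is preserved throughout because no robot ever travels from $CD$ to $AB$: a monitor originating on $CD$ sees $c_r^1 > c_r^2$ and falls into Case 1.1, returning to $CD$, whereas only monitors originating on $AB$ see $c_r^1 < c_r^2$, fall into Case 1.2, and make progress, always in the direction of $CD$.

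For the inductive step, as long as $AB$ is non-empty it contains at least one terminal robot (for instance, the robot nearest to an endpoint of $AB$). Once all \texttt{OFF}-coloured interior robots have reached a longest side, which is guaranteed by Lemma \ref{lemma3.1}, and the remaining conditions of Definition \ref{monitorrobot} are met, such a terminal robot $r$ qualifies as a monitor. Upon activation, $r$ moves collision-free to its apex point on $\frac{1}{8}AB$ by Lemma \ref{newlemma1_rectangle}, and at the apex Remark \ref{remark_1_rectangle} guarantees that $r$ correctly computes $c_r^1 < c_r^2$, so it enters Case 1.2.

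For the actual crossing from $\frac{1}{8}AB$ to $CD$, I would invoke the waiting clause in Case 1.2: $r$ postpones its move until no \texttt{FINISH}-coloured robot remains in $(CD, \frac{1}{8}CD]$, which ensures that no $CD$-originated monitor is in transit across the stretch $r$ will traverse. The target $t_r$ on $CD$ is then computed by the spacing rule $d(p_r^{opp}, t_r) = \frac{1}{4}\min_{r' \in \mathcal{V}_r} d(r', L_r)$, so a verbatim replay of the parallel-paths argument of Lemma \ref{lemma3.1} (with source and target sides interchanged) shows that simultaneously-moving $AB$-originated robots follow parallel, well-separated paths to $CD$ and cannot collide with each other or with any robot already settled on $CD$. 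After one such migration, the inductive hypothesis applies with one fewer robot on $AB$.

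The main obstacle I anticipate is the asynchronous interleaving: we must rule out that the count $c_r^2$ read by $r$ at its apex is corrupted by a \texttt{FINISH}-coloured $CD$-monitor momentarily parked on $(CD, \frac{1}{8}CD]$, which could blur the inequality $c_r^1 < c_r^2$. The Case 1.2 waiting rule, together with Remark \ref{remark_1_rectangle}, is precisely what precludes this, and an explicit invocation of both will be needed. A secondary concern is that while $r$ is \texttt{OFF}-coloured and in transit across $Int(\Re)$, any other monitor at an apex would observe an \texttt{OFF}-coloured interior robot and, by the rule stated for Case 1, remain idle; this effectively serialises the migrations and rules out mid-flight conflicts. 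With these safeguards in place the induction closes, and after finitely many epochs $AB$ is emptied, every displacement along the way being collision-free.
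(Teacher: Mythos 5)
Your proposal is correct and follows essentially the same route as the paper's proof: a terminal robot on $AB$ eventually qualifies as a monitor, reaches its apex point, waits out the \texttt{FINISH}-colored robots near $CD$ so that Remark \ref{remark_1_rectangle} applies, detects $c_r^1 < c_r^2$, and crosses to $CD$ using the $\frac{1}{4}\min_{r'\in\mathcal{V}_r} d(r',L_r)$ spacing rule, with collision-freeness inherited from the argument of Lemma \ref{lemma3.1}. Your explicit inductive framing and the invariant that no robot ever migrates from $CD$ back to $AB$ are just a slightly more structured presentation of the paper's ``eventually every robot on $AB$ becomes terminal and moves'' conclusion.
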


\begin{proof}
    Let $r$ be a robot on $AB$ which means $S_r = AB$ and $CD = S_r^{opp}$. Let us assume that $r$ is a terminal robot on $S_r$.  If there exists a \texttt{OFF}-colored robot $r'$ in $Int(\Re)$, either $r'$ is moving towards one of the longest sides from its apex point, or it is still in its initial position. In both situations, $r'$ eventually allocates itself on one of the longest sides of $\Re$, by Lemma \ref{lemma3.1}. So $r$ eventually becomes a monitor robot. It then moves to its apex point $a_r$ and checks whether there is any \texttt{FINISH}-colored robot $r''$ in $[S_r^{opp}, \frac{1}{8}S_r^{opp}]$. When $r''$ finds $c^1_{r''} > c^2_{r''}$, it moves back to $S_r^{opp}$. After this, $r$ accurately computes $c^1_r$ and $c^2_r$ by Remark \ref{remark_1_rectangle}. By the assumption of this Lemma \ref{lemma3.3}, $c^1_r < c_r^2$, so it finds $p_r^{opp}$.
 $r$ moves to $p_r^{opp}$, if it is empty. If not, $r$ finds a point $t_r$ on $S_r^{opp}$ such that $d(t_r,p_r^{opp}) = 1/4 \min\limits_{r' \in \mathcal{V}_r} d(r', L_r)$ and moves to the point $t_r$. By similar arguments presented in Lemma \ref{lemma3.1}, this type of movement is collision-free even if there is another robot simultaneously moving towards $S_r^{opp}$. If $r$ is not a terminal robot on $S_r$, it eventually becomes a terminal robot and thus, every robot on $AB$ moves to some point on $CD$.  
\end{proof}
\begin{remark}
    \label{eventually_equal_or_empty}
    Eventually, either one of the longest sides contains all of the robots of $\Re$, or both the longest sides contain an equal number of robots.
\end{remark}

\begin{lemma}
    \label{apex_to_termination}
    For a \texttt{FINISH}-colored robot $r$ lying on $\frac{1}{8}S_r$ with $0 < c^1_r = c^2_r$ or $0 = c^2_r < c^1_r$, when there is no robot on $(\frac{1}{8}S_r, \frac{7}{8}S_r)$, it terminates to a unique partition of $\Re$ and the movement is collision-free.
\end{lemma}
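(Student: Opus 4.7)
The plan is to split into the two sub-cases triggered by the algorithm: Case 1.4 (Type I, with $c_r^2 = 0 < c_r^1$, where $r$ moves to a point of $\frac{1}{2}S_r$) and Case 1.3 (Type II, with $0 < c_r^1 = c_r^2$, where $r$ moves to a point of $\frac{1}{4}S_r$). Under the lemma's hypothesis that no robot lies in $(\frac{1}{8}S_r, \frac{7}{8}S_r)$, Remark~\ref{remark_1_rectangle} guarantees that $r$ computes $c_r^1$ and $c_r^2$ correctly. I then need to verify that (i) $r$'s target $t_r$ is the center of a partition to which no other robot is assigned, and (ii) the straight-line move from the apex $a_r$ on $\frac{1}{8}S_r$ to $t_r$ is free of collision even when the other monitor robots move concurrently.

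For uniqueness, observe that at most two monitor robots sit on $\frac{1}{8}S_r$, namely the images of the two terminal robots of $S_r$; call them $r_L$ and $r_R$ according to which shortest side they are closer to. By the definition of $SS_r$ (the shortest side lying in the half-plane delimited by $\overleftrightarrow{r\,r_1}$, or by $L_r$ in Case 1.4, containing no other robot), $r_L$ identifies the left shortest side and $r_R$ the right one. In Case 1.4 the region is divided into $c_r^1 = N$ equal vertical strips whose centers lie on $\frac{1}{2}S_r$, and $r_L, r_R$ target the leftmost and rightmost strip centers at distance $\frac{len(S_r)}{2c_r^1}$ from their respective $SS_r$; these are distinct for $c_r^1 \geq 2$, and $c_r^1 = 1$ is the trivial $N=1$ case. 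In Case 1.3 the $2c_r^1 = N$ partitions are obtained by splitting each of the $c_r^1$ vertical strips along $\frac{1}{2}S_r$, and $t_r$ on $\frac{1}{4}S_r$ is the center of the top-left or top-right partition, again distinct for $c_r^1 \geq 2$ (the $c_r^1 = 1$ case is handled before the apex move and does not enter this lemma). The remaining partitions will later be filled by the \texttt{OFF}-colored terminal robots on $S_r$ through Case 2, so $t_r$ is a uniquely assigned partition.

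For collision-freeness, the interior strip $(\frac{1}{8}S_r, \frac{7}{8}S_r)$ is empty by hypothesis, so no robot of $\Re$ can obstruct the open segment $\overline{a_r t_r}$ at the time of the move. Only another monitor may move concurrently. A monitor on $\frac{1}{8}S_r^{opp}$ (which only arises in Case 1.3) stays entirely within $[\frac{3}{4}S_r, \frac{7}{8}S_r]$, disjoint from the region $[\frac{1}{8}S_r, \frac{1}{2}S_r]$ that contains $\overline{a_r t_r}$. The delicate case is simultaneous movement of $r_L$ and $r_R$: I plan to show that $a_{r_L}$ and $t_{r_L}$ both lie strictly to the left of the vertical midline of $\Re$ while $a_{r_R}, t_{r_R}$ both lie strictly to the right, so their straight-line paths sit in disjoint half-rectangles. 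The targets are located at $x$-coordinates $\frac{len(S_r)}{2c_r^1}$ and $len(S_r) - \frac{len(S_r)}{2c_r^1}$, which satisfy this separation whenever $c_r^1 \geq 2$. Once $r$ reaches $t_r$ with color \texttt{FINISH}, the terminal rule at the end of the Case 2 description fires on $r$'s next activation.

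The main obstacle lies in justifying the midline property for the apex points $a_{r_L}, a_{r_R}$ in Case 1.3, where the apex is not directly above $r$. Since $a_r$ is the intersection of $\frac{1}{8}S_r$ with the line through $r$ and its convex-hull neighbor $r_{Nbr}$ on the $S_r^{opp}$ side, its $x$-coordinate depends on where $r_{Nbr}$ sits. I would use the convex-hull structure (together with the fact that $r_L, r_R$ are the extremal non-corner robots on $S_r$, and that at most two monitors lie on the opposite side, positioned correspondingly) to conclude that $r_L$'s convex-hull neighbor lies in the left half and $r_R$'s in the right half. Combined with the quantitative observation that $\frac{1}{8}S_r$ sits close to $S_r$, so the horizontal displacement from $r$ to $a_r$ is at most a $\tfrac{1/8}{7/8} = \tfrac{1}{7}$ fraction of the horizontal gap from $r$ to $r_{Nbr}$, this keeps $a_{r_L}$ inside the left half and $a_{r_R}$ inside the right half. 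In Case 1.4, the apex lies exactly on $L_r$, so this argument becomes immediate. Putting the partition-uniqueness and path-disjointness facts together finishes the proof.
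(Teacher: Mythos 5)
Your uniqueness argument is essentially the paper's: the two monitor robots on $\frac{1}{8}S_r$ select different shortest sides as $SS_r$ (via the half-plane condition), hence target points at distance $\frac{len(S_r)}{2c_r^1}$ from opposite ends of $L$, which are distinct whenever $c_r^1\geq 2$. That part is fine.

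The collision-freeness part has a genuine gap. You propose to confine the entire path of $r_L$ (apex point \emph{and} target) to the left half-rectangle and that of $r_R$ to the right half-rectangle, and you correctly flag the location of the apex points as the main obstacle --- but the confinement claim is simply false, even in the ``immediate'' Case 1.4 where $a_r$ lies on $L_r$. Nothing forces the terminal robots of $S_r$ to straddle the vertical midline: if all robots on $S_r$ are clustered near one end (say the right end), then both $a_{r_L}$ and $a_{r_R}$ lie in the right half, while $t_{r_L}$ is at distance $\frac{len(S_r)}{2c_r^1}\leq\frac{len(S_r)}{4}$ from the \emph{left} shortest side and hence in the left half. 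So $\overline{a_{r_L}t_{r_L}}$ crosses the midline and your disjoint-half-rectangles argument collapses; no convex-hull or $\tfrac{1}{7}$-fraction estimate can rescue a false statement. The paper avoids this entirely with a weaker and sufficient order-preservation argument: since $a_{r_L}$ is to the left of $a_{r_R}$ and $t_{r_L}$ is to the left of $t_{r_R}$ (both pairs lying on horizontal lines), the line through the midpoints of $\overline{a_{r_L}a_{r_R}}$ and $\overline{t_{r_L}t_{r_R}}$ separates the two straight-line paths, so no collision occurs regardless of which half the apex points fall in. Replacing your confinement claim with this separating-line argument (or any argument using only the consistent left-to-right ordering of sources and targets) closes the gap. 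A minor secondary omission: the paper also notes that $r$ must first wait out any \texttt{OFF}-colored robot still visible in $Int(\Re)$ before computing its target, which your write-up skips.
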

\begin{proof}
    We first prove the lemma when $r$ satisfies $0 < c_1^r = c^2_r$. In this situation, $r$ first checks whether any \texttt{OFF}-colored robot exists in $Int(\Re)$. 
    If such a robot $r'$ exists, it must be in the move phase towards its target position. 
    Since $r$ lies on $\frac{1}{8}S_r$ in the current LCM cycle, it must have been qualified for the monitor robot when it was situated on $S_r$ and did not find \texttt{OFF}-colored robots in $Int(\Re)$. Thus, in the current LCM cycle of $r$, the robot $r'$ must be in the compute or move phase, aiming towards one of the longest sides of $\Re$ from its apex point $a_{r'}$. Therefore, $r$ eventually finds no \texttt{OFF}-colored robots in $Int(\Re)$. After this, according to our strategy, one of the shortest sides is selected as $SS_r$. There can be another robot $r''$ exists that satisfies all the conditions similar to those of $r$. Then, if it is activated simultaneously with $r$, it also chooses $SS_{r''}$. Let's consider $r_1$ as the neighbour of $r$ on the convex hull $\mathcal{CH}_r$. If the shortest side $AD$ lies within or intersects the half-plane defined by the line $\overleftrightarrow{rr_1}$, where no other robot of $\Re$ resides, then $r$ assigns $SS_r = AD$, and $r''$ assigns $SS_{r''} = BC$. Otherwise, $SS_r = BC$ and $SS_{r''} = AD$. Thus, the selection of $SS_r$ and $SS_{r''}$ are different. 
    Then, $r$ chooses its destination point on $\frac{1}{4}S_r$ at $\frac{len(S_r)}{2c^1_r}$ distance away from $SS_r$. 
    Similarly $r''$ chooses its destination point on $\frac{1}{4}S_r$ (as it is same as $\frac{1}{4}S_{r''}$) at $\frac{len(S_r)}{2c^1_r}$ distance away from $SS_{r''}$. 
    These two destination points are $len(S_r)-\frac{len(S_r)}{c^1_r}$ distance apart from each other, and the movement is free from any collision. 
    So $r$ and $r''$ both move to separate partitions, each with an area $\frac{len(SS_r)}{2} \times \frac{len(S_r)}{c^1_r} = \frac{len(AC) \cdot len(AB)}{2c_r^1}$.

    We now prove the lemma when $0 = c^2_r < c^1_r$. Similar to the previous arguments, $r$ must eventually see no \texttt{OFF}-colored robot on $Int(\Re)$. If another robot $r''$ satisfies conditions similar to those of $r$, they find $SS_r$ and $SS_{r''}$ individually. According to our algorithm, $r$ chooses its destination point on $\frac{1}{2}S_r$ at $\frac{len(S_r)}{2c^1_r}$ distance away from $SS_r$  and $r''$ chooses its destination point on $\frac{1}{2}S_r$ at $\frac{len(S_r)}{2c^1_r}$ distance away from $SS_{r''}$. By a similar argument, it can be proved that $r$ and $r''$ reach different partitions, each with the area $ len(SS_r)\times \frac{len(S_r)}{c^1_r}$.

    Now, we prove that the movement of $r$ is free from the collision. Without loss of generality, we assume that $r''$ lies on the right half plane delimited by $L_r$. Then $SS_r$ must lie on the left half plane delimited on $L_r$. Since $d(t_r, SS_r) < d(t_{r''}, SS_r)$ for $t_r, t_{r''}$ lying on $L = \frac{1}{2}S_r$ (or $\frac{1}{4}S_r$), $t_{r''}$ lies on the right of $t_r$ on $L$. If $t_{mid}$ and $r_{mid}$ are the midpoints of the line segments $\overline{t_r t_{r''}}$ and $\overline{rr''}$, respectively, the line $\overleftrightarrow{t_{mid} r_{mid}}$ separates the two paths $\overline{r t_r}$ and $\overline{r'' t_{r''}}$. Hence, the movement of $r$ is free from the collision even if $r''$ simultaneously moves with $r$.
\end{proof}

\begin{lemma}
    \label{off_to_terminate}
    A \texttt{OFF}-colored robot $r$, which lies on $S_r$ and satisfies one of the following conditions, terminates to a unique partition of $\Re$, and the movement is collision-free.\\
    (i) $r$ finds a \texttt{FINISH}-colored robot on $\frac{1}{4}S_r$ and some robots on $[S_r^{opp}, \frac{1}{4}S_r^{opp}]$.\\ (ii) $r$ finds a \texttt{FINISH}-colored robot on $\frac{1}{4}S_r^{opp}$, but not on $[S_r, \frac{1}{4}S_r]$.\\ (iii) $r$ finds a \texttt{FINISH}-colored robot on $\frac{1}{2}S_r$.
\end{lemma}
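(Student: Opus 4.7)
The plan is to verify three properties for the activated OFF-colored terminal robot $r$ lying on $S_r$: that the computed target $t_r$ coincides with an unoccupied partition center, that concurrently activated OFF-colored terminal robots choose distinct targets, and that the straight-line movement from $r$ to $t_r$ is collision-free. Throughout, I would first appeal to the hypothesis of the lemma together with the definition of the sub-cases to guarantee that no OFF-colored robot lies in $Int(\Re)$ (if such a robot existed, the algorithm instructs $r$ to wait), so the state of the region is already "settled" on the interior.

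First, I would fix the value $d = \min_{r' \in \mathcal{F}_r}\{\min\{d(r', SS_r), d(r', SS_r^{opp})\}\}$ and show that it equals $\tfrac{len(S_r)}{2c^1_r}$ by invoking Lemma~\ref{apex_to_termination}: every FINISH-colored robot on $\tfrac{1}{4}S_r$, $\tfrac{1}{4}S_r^{opp}$, or $\tfrac{1}{2}S_r$ was placed by that lemma at a position of the form $(2j-1)d$ from one of the shortest sides. Consequently, the maximal sequence $r_1, \ldots, r_k$ built along $L$ starting at distance $d$ from $SS_r$ with step $2d$ occupies exactly the first $k$ partition centers on that half of $\Re$, and the next available center is at distance $(2k+1)d$ from $SS_r$, which is precisely $t_r$. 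In case (iii) the line $L = \tfrac{1}{2}S_r$ and the argument yields Type II termination; in cases (i) and (ii), $L = \tfrac{1}{4}S_r$ and the argument yields Type I termination. Uniqueness of the partition that $t_r$ identifies then follows from the arithmetic-progression structure.

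Next, I would handle uniqueness across simultaneously terminal robots. Because terminality requires at least one of the two arcs $\overline{re^1_{S_r}}, \overline{re^2_{S_r}}$ to be empty of robots, at any time there are at most two OFF-colored terminal robots on $S_r$, one at each end. The one whose free arc reaches the endpoint adjacent to $SS_r$ must select $SS_r$ as its shortest-side reference; the other must select $SS_r^{opp}$. Their respective counts $k$ and $k'$ are taken from opposite ends of $L$, so the computed targets sit at distances $(2k+1)d$ and $len(S_r) - (2k'+1)d$ from $SS_r$, which are distinct as long as the two fronts have not met---and if $k+k'+1 = c^1_r$ the two targets coincide at the last partition, in which case only one of them remains terminal after the first moves, so the activation order breaks the tie without loss of correctness. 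For collision-freeness I would re-use the separator argument from Lemma~\ref{lemma3.1}: the two target points on $L$ lie on opposite sides of a vertical line strictly between them, and this vertical line (together with the induced perpendicular bisector of $\overline{rr''}$) cleanly separates the two straight-line paths, so the movements are disjoint. Finally, termination is immediate: once $r$ reaches $t_r$ with color \texttt{FINISH}, on its next activation it detects itself on $\tfrac{1}{4}S_r$ or $\tfrac{1}{2}S_r$ with color \texttt{FINISH} and halts.

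The main obstacle I anticipate is justifying, under asynchrony, that the \emph{visible} sequence $r_1,\ldots,r_k$ really is a contiguous initial segment of partition centers and not a partial view caused by another FINISH-colored robot still in its move phase. I would dispose of this by an inductive invariant: after every completed move of a FINISH-colored robot onto $L$, the set of FINISH robots on $L$ equals $\{d, 3d, \ldots, (2m-1)d\}$ from one of the shortest sides (or a symmetric union from both), and between successive moves no intermediate configuration is observable since the hypothesis of the lemma excludes OFF-colored robots in the interior and the case-split for $r$ on $\tfrac{1}{8}S_r$ excludes FINISH robots in $(\tfrac{1}{8}S_r, \tfrac{7}{8}S_r)$ that are not already on $L$. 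This invariant, combined with Remark~\ref{remark_1_rectangle}, should close the argument.
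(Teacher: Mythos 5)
Your overall route matches the paper's: you pair up the (at most two) simultaneously terminal \texttt{OFF}-colored robots, argue that they select opposite shortest sides $SS_r$ and $SS_{r''}$, place their targets at distances $(2k+1)d$ and $(2k'+1)d$ from opposite ends of $L$, and reuse the separator-line argument of Lemmas~\ref{lemma3.1} and~\ref{apex_to_termination} for collision-freeness. But one step, as written, would fail: your tie case. You allow $k+k'+1=c^1_r$, observe that the two targets would then coincide at the last partition, and resolve this by saying that ``the activation order breaks the tie.'' Under the ASYNC scheduler there need not be an order: if both terminal robots are activated at the same instant with the same snapshot, both compute the coinciding target and both move, which is a collision; terminality is evaluated in the Look phase, so neither robot ``ceases to be terminal'' in time to prevent this. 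The correct observation --- and the one the paper makes --- is that this case is vacuous: since $d=\frac{len(S_r)}{2c^1_r}$ and the $k+k'$ robots already placed on $L$ together with the two still-unplaced robots $r$ and $r''$ satisfy $k+k'+2\le c^1_r$, one gets $d(t_r,t_{r''})=len(S_r)-2(k+k'+1)d\ge 2d>0$. You need this counting bound in place of the tie-breaking rule; without it the uniqueness claim is not actually established.

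Two smaller remarks. First, you have the partition types swapped: case (iii), where the \texttt{FINISH}-colored robot sits on $\frac{1}{2}S_r$ and $L=\frac{1}{2}S_r$, is Type~I, while cases (i) and (ii) with $L=\frac{1}{4}S_r$ are Type~II. Second, your closing inductive invariant --- that the visible robots on $L$ always form a contiguous arithmetic progression of partition centers, so the counted prefix $r_1,\dots,r_k$ is genuine --- is a useful elaboration of something the paper leaves implicit in the algorithm's waiting conditions (Cases 2.1--2.3 and Remark~\ref{remark_1_rectangle}), and that part is fine.
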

\begin{proof}
    If $r$ satisfies (i), it first identifies the side $SS_r$, as described in Case 2, and $r$  follows the Type II partitioning. Another robot $r'$ on $S_r$ might exist that satisfies conditions similar to those of $r$. 
    It also chooses $SS_{r'}$ if it is simultaneously activated with $r$. 
    If the point of intersection of $S_r$ and $AD$ is visible to $r$, it selects $AD$ as $SS_r$ and $r'$ selects $BC$ as $SS_{r'}$. Otherwise, $SS_r = BC$ and $SS_{r'} = AD$. Thus $SS_r$ and $SS_{r'}$ are different. $r$ sets $L = \frac{1}{4}S_r$ and computes the number $k$, $d$, its target point $t_r$ on $L$ such that $d(t_r, SS_r) = (2k+1)d$, as explained in Case 2. 
    Similarly, $r'$ can find its target point $t_{r'}$ on $L$ such that $d(t_{r'}, SS_{r'}) = (2k'+1)d$, where $k'$ is the length of the sequence of robots on $L$ considered by the robot $r'$ with respect to $SS_{r'}$. 
    Observe that $d(t_r, t_{r'}) = len(S_r) - 2(k+k'+1)d > 0$, as  $len(S_r) \geq 2(k+k')d+4d$ and $d \neq 0$. Hence, $r$ terminates at a unique partition of $\Re$.
    When $r$ satisfies (ii) and (iii), by similar arguments as above, we can prove that it terminates at a unique partition of $\Re$.

    Using analogous reasoning as presented in the proof of collision freeness in Lemma \ref{apex_to_termination}, we can demonstrate that the movement of $r$ is also free from collisions.
\end{proof}



\begin{theorem}
\label{theorem3}
    Our algorithm \textsc{Rectangle\_Partition} solves the uniform partitioning without collision for the rectangular region in $O(N)$ epochs, where $N$ is the total number of robots in $\Re$.
\end{theorem}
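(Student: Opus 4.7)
The plan is to combine Lemmas~\ref{lemma3.1}--\ref{off_to_terminate} into a phase-by-phase argument: every robot passes through (a) consolidation on a longest side, (b) a monitoring/balancing phase via apex points, and (c) final allocation to a partition cell. Correctness follows by showing each phase terminates and hands off cleanly to the next; collision-freeness is already local to each phase; the $O(N)$ bound is obtained by amortizing activations against the ``terminal robot'' structure.

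For correctness, I would first invoke Lemma~\ref{lemma3.1} to conclude that after finitely many epochs, every robot with color \texttt{OFF} that was not already on a longest side reaches one of $\overline{AB}$ or $\overline{CD}$. Then, by Lemma~\ref{lemma 3.2} and Lemma~\ref{newlemma1_rectangle}, the monitor robots correctly occupy apex points on $\tfrac{1}{8}S_r$ without collision, and by Remark~\ref{remark_1_rectangle} they obtain correct values of $c_r^1$ and $c_r^2$. Combining Lemma~\ref{lemma3.3} with Remark~\ref{eventually_equal_or_empty}, after finitely many such migrations the distribution stabilises either with all robots on one longest side or with equal counts on both. At this point Lemma~\ref{apex_to_termination} handles the monitor robots (yielding Type~I or Type~II partitioning), and Lemma~\ref{off_to_terminate} handles the remaining \texttt{OFF}-colored robots on $S_r$, each terminating in a distinct partition. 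The color transitions between \texttt{OFF} and \texttt{FINISH} together with the wait-conditions in Cases~1 and~2 ensure no robot can misinterpret a transient configuration and commit prematurely.

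For the $O(N)$ epoch bound, I would argue phase-wise. In the consolidation phase, a robot not on a longest side either moves on its first activation, or is blocked only by a robot strictly closer to its chosen longest side; since the ``chain'' of such dependencies is bounded by the number of robots on a common perpendicular line $L_r$, at most $O(N)$ epochs suffice. In the monitoring/balancing phase, only terminal robots qualify as monitors, and each epoch at least one terminal robot on the minority side either becomes a monitor and moves or is waiting on a finite queue of \texttt{FINISH}-colored movements; each robot can participate in a migration at most once, giving $O(N)$ epochs. In the final allocation phase, once a \texttt{FINISH}-colored anchor appears on $\tfrac{1}{4}S_r$ or $\tfrac{1}{2}S_r$, the sequence of $k$ already-placed robots used in the ``$(2k+1)d$'' rule grows by at least one per epoch among activated terminal robots, so the $N$ robots on $S_r$ are all placed in $O(N)$ epochs. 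Summing the three phases yields the claimed $O(N)$ bound.

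The main obstacle I expect is handling the asynchronous overlap between phases: a robot may be in its Move step (changing color or position) while another robot performs Look, so the ``counts'' $c_r^1, c_r^2$ or the detection of a \texttt{FINISH}-colored anchor could be momentarily inconsistent. The conditions guarding Case~1.2 (waiting when a \texttt{FINISH} robot still sits in $(S_r^{opp}, \tfrac{1}{8}S_r^{opp}]$) and the fall-back in Case~1.1 (moving back to $S_r$ if a stray \texttt{FINISH} robot is seen on $(\tfrac{1}{8}S_r,\tfrac{7}{8}S_r)$) are what I would have to exploit carefully: I need to show that each such ``wait'' or ``retreat'' consumes at most $O(1)$ epochs per robot, so that the amortized bound is preserved. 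Tying these observations back to Remark~\ref{remark_1_rectangle} and the collision analyses in Lemma~\ref{apex_to_termination} and Lemma~\ref{off_to_terminate} should close the proof.
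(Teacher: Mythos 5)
Your proposal follows essentially the same route as the paper's proof: consolidation onto the longest sides via Lemma~\ref{lemma3.1}, balancing through apex points using Lemmas~\ref{lemma 3.2}, \ref{newlemma1_rectangle}, \ref{lemma3.3} and Remarks~\ref{remark_1_rectangle}, \ref{eventually_equal_or_empty}, and final placement via Lemmas~\ref{apex_to_termination} and \ref{off_to_terminate}, with the $O(N)$ bound obtained by charging $O(1)$ epochs per robot per phase and noting that movements are sequential in the worst case. The paper's version fills in the same details you flag as remaining work (e.g., the explicit constant-epoch accounting for the wait/retreat steps, seven epochs per migration and four per final placement), so your plan is sound and matches the intended argument.
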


\begin{proof}
By Remark \ref{eventually_equal_or_empty}, eventually, either all the robots lie on one of the longest sides of $\Re$, or both of the longest sides have an equal number of robots. On both accounts, a robot $r$ must qualify as a monitor robot, which later moves to its apex point and finds either $0 < c_r^1 = c_r^2$ or $0 = c_r^2 < c_r^1$. 
By Lemma \ref{apex_to_termination}, it is ensured that $r$ terminates at a unique partition of $\Re$ (it lies either on $\frac{1}{2}S_r$ for Type I partitioning or on $\frac{1}{4}S_r$ for Type II partitioning).
After termination of $r$, if there is an \texttt{OFF}-colored robot $r'$ on one of the longest sides, the position of $r$ helps it find its destination point either on $\frac{1}{2}S_r$ or on $\frac{1}{4}S_r$. Lemma \ref{off_to_terminate} ensures that $r'$ terminates at a unique partition of $\Re$.

    In the worst case, initially, all robots lie on a line inside $\Re$. It takes $O(N)$ epochs for all the robots to reach the boundary, as the movement of the robots will be sequential. Monitor robots on the sides of $\Re$ move to their apex point in one epoch. Moreover, if the two longest sides of $\Re$ have different numbers of robots, the robots move from the longest side with fewer robots (say $AB$) to the other longest side ($CD$) of $\Re$. A monitor robot $r$ on $AB$ gets activated and moves to its apex points in one epoch. 
    Since the number of robots on $CD$ is more, $r$ needs to wait till all the robots on $[S_r^{opp}, \frac{1}{8}{S_r^{opp}}]$ move back to $CD$, which requires four epochs, as each of the two \texttt{FINISH}-colored robots lying on $\frac{1}{8}CD$ might take one epoch to reach $CD$ and one epoch to change its color to $\texttt{OFF}$. 
    It might also need to wait for one epoch if there is another robot on $(\frac{1}{8}S_r, \frac{7}{8}S_r)$ moving towards $CD$. When no such robot exists on $(\frac{1}{8}S_r, \frac{7}{8}S_r)$, $r$ calculates $c_r^1$ and $c_r^2$ (by Remark \ref{remark_1_rectangle}) and decides to move to $CD$ which takes one epoch. So, it needs seven epochs for $r$ to move to $CD$ from $AB$. In the worst case, it is possible that the robots on $AB$ sequentially execute the movement to $CD$. So, this process takes $O(N)$ epochs to complete. So, all the robots reposition themselves on exactly one longest side or get distributed equally among the two longest sides of $\Re$ takes $O(N)$ epochs. When all the robots are on one longest side $CD$ of $\Re$, there can be two monitor robots $r$ and $r'$. They reach their respective apex points in one epoch. In the worst case, $r'$ gets activated first and moves to its final position on $\frac{1}{2}CD$. While $r'$ is executing its movement, $r$ gets activated and finds $r'$ on $(\frac{1}{8}CD, \frac{7}{8}CD)$ (as $S_r = CD$). So, it decides to move back to $CD$ with color \texttt{FINISH}, which needs one epoch. When it reaches $CD$, it needs another epoch to change its color to \texttt{OFF}. When $r$ is again activated and finds $r'$ on $\frac{1}{2}S_r$, it calculates and moves to its final position on $\frac{1}{2}S_r$ in one epoch.  Thus, $r$ needs at most four epochs to reach its final point. In the worst case, the other \texttt{OFF}-colored robot on $CD$ moves to their final point one by one, requiring $O(N)$ epochs. 
    On the other hand, if the robots are distributed equally on both the longest sides, by similar arguments as above, we can prove that the robots need $O(N)$ epochs to terminate. 
    So, overall, the algorithm requires $O(N)$ epochs to achieve uniform partitioning when the region is rectangular.

    Lemma \ref{lemma3.1}, \ref{newlemma1_rectangle}, \ref{lemma3.3}, \ref{apex_to_termination} and  \ref{off_to_terminate} ensure that the movements of the robots are collision-free.
\end{proof}

\section{Algorithm for a Square Region}
\label{square}

The region $\Re$ is considered to be square in this section. Our proposed algorithm \textsc{Square\_Partition} uses $5$ colors described in the Table \ref{color_table} with their specification. The pseudocode of the algorithm is included in the \ref{appendix}.

\subsection{Description of the Algorithm}
 From any initial deployment of the robots inside $\Re$, our target is to move the interior and corner robots to the boundary.  
A robot $r$, after activating with color \texttt{OFF}, determines whether it is a corner, boundary or interior robot. In case of $r$ being a corner or an interior robot, it finds a target side $S_r$ and moves to a point on it with maintaining its color \texttt{OFF}. Then $r$ waits for other interior robots and the visible corner robots to move to one of the sides of $\Re$. 

 If $r$ is a corner robot, it chooses any of the incident sides as $S_r$. 
 In the case of $r$ being an interior robot on $\Re$, it selects one of the nearest sides from its current position as $S_r$. 
 The strategy is the same as the rectangle. Since there is no shortest side in a square, the difference here is when $r$ is a boundary robot in $\Re$, it does not move. 
 In all other cases, a target point $t_r$ is calculated on $S_r$, and the robot $r$ moves to it with color \texttt{OFF} by following a movement strategy similar to the strategy \textsc{Move\_To\_LongestSides}.

 After this, the outline of the strategy for the square region has similarities with the algorithm \textsc{Rectangle\_Partition}. In rectangular regions, we ask the monitor robots to move at a particular distance from the longest sides and make them compare the number of robots near the two longest sides of the region. If the number is not the same, we bring the robots from the side with fewer robots to the side with a larger number of robots. In the case of the square region, we follow a similar strategy, but here, the robots can be situated on any side of $\Re$. If all the robots are distributed among the two sides $S_r$ and $S_r^{opp}$, the robot $r$ follows the Type I or the Type II partitioning. For other cases, $r$ follows Type III and Type IV partitioning.  
 We define some new notations and conventions to describe the algorithm for the square regions. 
 \begin{itemize}
     \item $C$ is the center of the square $\Re$. 
     
     \item The triangle $\Delta Ce^1_Se^2_{S}$, denoted by $\Delta S$, is called the \emph{side triangle} of the side $S$ where $e_{S}^1$ and $e_S^2$ are the endpoints of the side $S$. 
     
     \item When we say that $r$ lies on  $\Delta S$, we mean that $r$ lies  either on the side $S$ or strictly within the triangle $\Delta Ce^1_Se^2_{S}$, but not on $\overline{Ce^1_S}$ or $\overline{Ce^2_S}$.
     
     \item $|S|$ denotes the number of robots lying on the side triangle $\Delta S$ of $S$.

 
     \item For the robot $r$, $S_r^L$ and $S_r^R$ represent the sides of $\Re$ other than $S_r$ and $S_r^{opp}$.

     \item $S_r$ is the nearest side of the square $\Re$ for the robot $r$.
     
     \item $D_r$ and $D_r^{opp}$ are the two diagonals of $\Re$ for the robot $r$.
     
     \item $Max_r$ represents a set of all sides with the maximum number of robots lying on their corresponding side triangle.
 \end{itemize}

\noindent $Max_r$ helps $r$ to decide which side triangles contain the maximum number of robots. For example, 
\begin{wrapfigure}[11]{r}{0.4\textwidth}
\centering
  \includegraphics[width=0.9\linewidth]{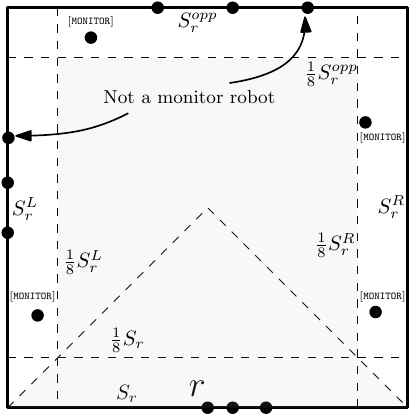}
     \caption{$r$ qualifies to be a monitor robot}
     \label{monitor_robot_square}
\end{wrapfigure}
$Max_r = \{S_r, S_r^{opp}\}$ if $|S_r| = |S_r^{opp}|  > $ $|S_r^L|,~ |S_r^R|$. Here, the sides $S_r$ and $S_r^{opp}$ have the maximum number of robots. Our aim is to move all the robots from the sides not in $Max_r$ to the sides in $Max_r$. When $Max_r$ contains all the sides of $\Re$, all the four sides have $\frac{N}{4}$ robots.

 Now, we redefine the monitor robot for the square region. 
 A robot $r$ on $S_r$ is called a \emph{monitor robot} (refer to Fig. \ref{monitor_robot_square}) if all the four conditions are satisfied. (i) $r$ is a terminal robot on $S_r$. (ii) $ \left((S_r, \frac{1}{8}S_r) \cap \Delta S_r \right) \cup \left( (\frac{1}{8}S_r, \frac{7}{8}S_r) \cap (\frac{1}{8}S_r^L, \frac{7}{8}S_r^L)\right)$ has no robots (shown as the shaded region). (iii) All the robots  on $(S_r^{opp}, \frac{1}{8}S_r^{opp}] \cup (S_r^L, \frac{1}{8} S_r^L] \cup (S_r^R, \frac{1}{8}S_r^R]$ are with color \texttt{MONITOR}. (iv) There is no visible corner robot.  

 We differentiate the rest of the algorithm into three major cases for a terminal robot $r$ lying on the side $S_r$.

\noindent \textbf{Case 1 ($r$ is a monitor robot and $Int(\Re)$ has no robot with color \texttt{FINISH} or \texttt{FINISH1} or \texttt{FINISH2}):}
Computing $Max_r$ is possible when $r$ moves to a point in $Int(\Re)$, referred to as \emph{apex point}, to calculate the number of robots on each side. 
The apex point must be chosen so that $r$ gets to see all the robots in $\Re$, and it does not obstruct the visibility of any other robot after the movement. 
Also, we want $r$ to lie within $\Delta S_r$ so that it can choose $S_r$ easily. If $S_r$ has no other robot, it does not need to move to an apex point. 
Rather, it can find out $Max_r$ without moving from $S_r$ and directly follow the movement strategy for the \texttt{MONITOR}-colored robots lying on $(S_r, \frac{1}{8}S_r]$, which is described later in this case. We now discuss the movement strategy of a monitor robot $r$ lying on $S_r$ to its apex points where $|S_r|\geq 2$.

\noindent \textbf{Strategy} \textsc{MonitorRobot\_Movement\_Square}: If the robot $r$ does not find any robot on 
\begin{wrapfigure}[12]{r}{0.4\textwidth}
\centering
  \includegraphics[width=0.9\linewidth]{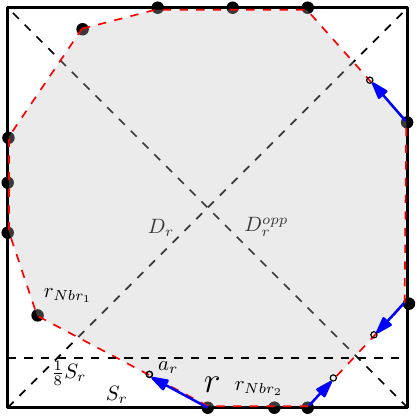}
     \caption{$r$ moves to its apex point $a_r$ considering the convex hull $\mathcal{CH}_r$}
     \label{monitor_robot_movement_square}
\end{wrapfigure}
$\Delta S_r^{opp}$ nor $\Delta S_r^{L}$ nor $\Delta S_r^R$, it chooses the point of intersection of $L_r$ and $\frac{1}{8}S_r$ as its apex point $a_r$ and moves to it with color \texttt{MONITOR}. Otherwise, $r$ finds the neighbour $r_{Nbr_1}$ on the convex hull $\mathcal{CH}_r$, which does not lie on the side $S_r$. $r$ also finds the neighbour $r_{Nbr_2}$ on $\mathcal{CH}_r$ other than $r_{Nbr_1}$ which lies on $S_r$, as shown in Fig \ref{monitor_robot_movement_square}. It calculates the apex point $a_r$ on the line segment $\overline{rr_{Nbr_1}}$ such that $d(r,a_r) = \frac{1}{2} \min \{d(r,D_r), d(r, D_r^{opp}), d(r,\frac{1}{4}S_r), d(r, r_{Nbr_2}) \}$. The point $a_r$ is chosen on $\overline{rr_{Nbr_1}}$ to ensure that $r$ does not become an obstruction for other monitor robots after its movement. Moreover, it also ensures that $r$ does not cross $D_r$ and $D_r^{opp}$, allowing $r$ to accurately determine $S_r$ (so that it remains the nearest side to $r$) after moving to the apex point.
We also want $r$ to lie on or below the line segment $\frac{1}{8}S_r$, as our aim is to terminate $r$ either on $\frac{1}{2}S_r$, $\frac{1}{4}S_r$, $\frac{1}{3}S_r$ or $\frac{1}{6}S_r$. 
Additionally, we do not cross the line $L_{r_{Nbr_2}}$ after the movement to ensure that all other robots on $S_r$ must be on one side of the half-plane delimited by $L_r$. Finally, $r$ changes its current color to \texttt{MONITOR} and moves to $a_r$.

Next, we explain how a \texttt{MONITOR}-colored robot $r$ lying on $(S_r, \frac{1}{8}S_r]$, decides its destination based on $Max_r$. It moves from its apex point either to one of the sides or to the final position. 

\noindent \textbf{Movement of a \texttt{MONITOR}-colored Robot $r$ lying on $(S_r, \frac{1}{8}S_r]$:} If $r$ sees any \texttt{OFF}-colored robot in $Int(\Re)$, $r$ does not change its color or position. Otherwise, before calculating $Max_r$, $r$ checks whether there is any visible \texttt{FINISH} or \texttt{FINISH1} or \texttt{FINISH2}-colored robot. These three colors are used when a robot moves to its final position.   
Such a robot might lead to the miscalculation of $Max_r$, as it can become an obstruction for $r$. 
We explain the rest of the algorithm using a proper example and figures for better comprehension. We consider a square with corners $e_P$ (common corner of $S_r$ and $S_r^L$), $e_Q$ (common corner of $S_r$ and $S_r^R$), $e_R$ (common corner of $S_r^{opp}$ and $S_r^R$) and $e_X$ (common corner of $S_r^{opp}$ and $S_r^L$), as depicted in Fig. \ref{new_case1.2_sqaure}. 
    We now identify four sub-cases.
\begin{itemize}
    \item \textbf{Case 1.1 ($r$ sees a \texttt{FINISH}-colored robot):} $r$ waits till all \texttt{FINISH}-colored robots reach $\frac{1}{2}S_r$ or $\frac{1}{4}S_r$ or $\frac{3}{4}S_r$. 
    The robot $r$ now moves to its final position on $L = \frac{1}{2}S_r$ or $\frac{1}{4}S_r$ with color \texttt{FINISH} by following the same movement strategy as an \texttt{OFF}-colored robot, described in Case 2 of \textsc{Rectangle\_Partition} algorithm (Section \ref{rectangle}). The only difference is the choice of $SS_r$, which, in this case, belongs to $\{ S_r^L, S_r^R \}$ (as there is no shortest side in a square).

    \item \textbf{Case 1.2 ($r$ sees a \texttt{FINISH1}-colored robot):} $r$ waits till all \texttt{FINISH1}-colored robots reach $\frac{1}{3}S_r$ or $\frac{1}{3}S$ where $S \in \{S_r^L, S_r^R \}$. 
    Let $c$ be number of \texttt{FINISH1}-colored robot on $\frac{1}{3}S_r$. Note that $c \leq 1$ in this case. Let us also assume that $c'$ is the number of robots lying on $\Delta S_r$ without the color \texttt{FINISH1}. 
    Without loss of generality, let us assume $S = S_r^L$. $r$ first chooses the corner $e_R$ as the vertex of Type III partitioning and then calculates two points $A$ and $B$ on $S_r$ which are $\frac{len(S_r)}{c+ c'}$ distance away from $e_P$ and $e_Q$, respectively, as shown in Fig. \ref{new_case1.2_sqaure}. 
    If $S^{opp}$ lies on the half plane delimited by $L_r$ where other robots on $\Delta S_r$ lie, $r$ chooses the triangle $\mathcal{T} = \Delta Ae_Pe_R$. Otherwise, it chooses $\mathcal{T} = \Delta Be_Qe_R$. $r$ moves to the centroid of the triangle $\mathcal{T}$ after changing its current color to \texttt{FINISH1} from \texttt{FINISH}.
        \begin{figure}[h]
 \begin{minipage}[c]{0.47\textwidth}
     \centering
     \includegraphics[width=0.8\linewidth]{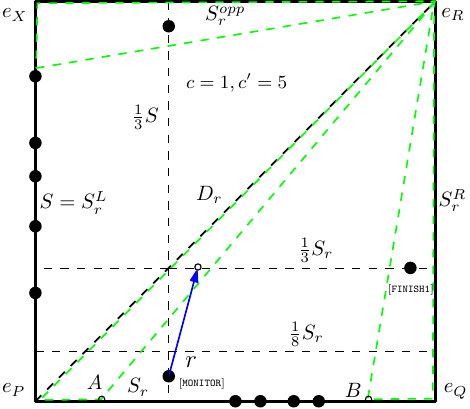}
     \caption{The movement of $r$ for Type III partitioning when $r.color = $ \texttt{MONITOR}}
     \label{new_case1.2_sqaure}
 \end{minipage}
 \hfill
 \begin{minipage}[c]{0.47\textwidth}
     \centering
     \includegraphics[width=0.8\linewidth]{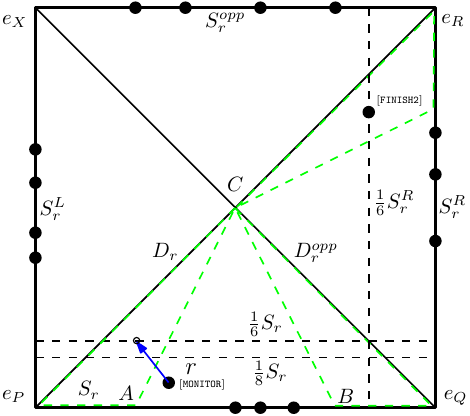}
     \caption{The movement of $r$ for Type IV partitioning when $r.color = $ \texttt{MONITOR}}
     \label{new_case1.3_square}
 \end{minipage}
 \end{figure}

    \item \textbf{Case 1.3 ($r$ sees a \texttt{FINISH2}-colored robot):} $r$ waits till all the \texttt{FINISH2}-colored robots lie on  $\frac{1}{6}S$ where $S \in \{S_r, S_r^{opp}, S_r^L, S_r^R\}$. Afterwards, $r$ computes the two points $A$ and $B$ on $S_r$ which are $\frac{len(S_r)}{|S_r|}$ distance away from $e_P$ and $e_Q$, respectively as shown in Fig. \ref{new_case1.3_square}.
     If $e_P$  and all the other robots on $\Delta S_r$ lie on the different half plane delimited by $L_r$, the triangle $\mathcal{T}$ is chosen that satisfies $\mathcal{T} = \Delta Ae_PC$.
    Otherwise, it chooses $\mathcal{T} = \Delta Be_QC$. Then $r$ moves to the centroid of the triangle $\mathcal{T}$ after changing its current color to \texttt{FINISH2} from \texttt{FINISH}.

    \item \textbf{Case 1.4 ($r$ does not see any \texttt{FINISH} or \texttt{FINISH1} or \texttt{FINISH2}-colored robot):} In this case,  $r$ considers $Max_r$ to understand the type of partitioning of the region $\Re$.
    We identify ten sub-cases, out of which the first four sub-cases discuss the process of choosing the type of partitioning and the movement of $r$ to the final position in its respective partition.

\begin{itemize}
    \item \textbf{Case 1.4.1 ($Max_r = \{S_r \}$ and $\Delta S$ contains no robot for all $S\in \{ S_r^{opp}, S_r^L, S_r^R\}$):} Similar to Case 1.4 in the algorithm \textsc{Rectangle\_Partition} (in Section \ref{rectangle}), $r$ moves to the final position on $\frac{1}{2}S_r$ with color \texttt{FINISH} by considering $SS_r $ from  $ \{S_r^L, S_r^R \}$.

    \item \textbf{Case 1.4.2 ($Max_r = \{S_r, S_r^{opp}\}$ and no robots on $\Delta S$ for all $S \in \{S_r^L, S_r^R \}$):} Similar to Case 1.3 in the algorithm \textsc{Rectangle\_Partition} (in Section \ref{rectangle}), $r$ moves to the final position on $\frac{1}{4}S_r$ with color \texttt{FINISH} by considering $SS_r $ from  $ \{S_r^L, S_r^R \}$.

    \item \textbf{Case 1.4.3 ($Max_r = \{ S_r, S \}$ and both $\Delta S_r^{opp}$ and $\Delta S^{opp}$ contain no robot where $S \in \{S^L_r, S^R_r\}$):}  Without loss of generality, let us assume that $S = S_r^L$. Similar to Case 1.2 of \textsc{Square\_Partition} algorithm, $r$ calculates two points $A$ and $B$ on $S_r$, as illustrated in Fig \ref{new_case1.2_sqaure}. 
    It selects the triangle $\mathcal{T}$ and moves to the centroid of $\mathcal{T}$ after changing its color to \texttt{FINISH1} from \texttt{MONITOR}. 

    \item \textbf{Case 1.4.4 ($Max_r = \{ S_r, S_r^{opp}, S_r^L, S^R_r \}$):} 
    Similar to Case 1.3 of \textsc{Square\_Partition} algorithm, $r$ finds the triangle $\mathcal{T}$ and moves to the centroid of the triangle $\mathcal{T}$, as depicted in Fig \ref{new_case1.3_square} after changing its current color to \texttt{FINISH2} from \texttt{MONITOR}.
\end{itemize}


\noindent For the remaining six sub-cases, $r$ chooses a target side before its movement. Our aim is to gather all the robots to one of the four sides of $\Re$. If, due to symmetry, it is not possible, robots should be gathered on two sides of $\Re$.
When Case 1.4.1, Case 1.4.2, Case 1.4.3 and Case 1.4.4 do not hold, the following cases may occur for different $Max_r$. The robot $r$ decides a side as its target for its movement based on $Max_r$.
\begin{itemize}
    \item \textbf{Case 1.4.5 ($Max_r = \{ S_r, S_r^{opp}, S\}$ where $S \in \{S_r^L , S^R_r\}$):} Without loss of generality, if $S = S_r^L$, the side $S_r^R$ has the minimum number of robots. We target to move $r$ to the side $S_r^L$, which is opposite to the side having minimum number of robots. So, $r$ sets the side $S$ as its target side. 

    \item \textbf{Case 1.4.6 ($Max_r = \{ S \}$ or $\{S, S'\}$, where $S, S' \in \{S_r^L , S^R_r\}$ with $S \neq S'$):} In this case, either one or two adjacent sides to $S_r$ contain maximum number of robots. So $r$ targets to move on one of them accordingly. The target side is chosen to be the side $S$.

     \item \textbf{Case 1.4.7 ($Max_r =  \{ S_r^L, S^R_r, S_r^{opp}\}$ or \{$S_r^{opp} \})$:} The target side is  the side $S_r^{opp}$.

     \item  \textbf{Case 1.4.8 ($Max_r = \{ S, S_r^{opp}\}$, where $S \in \{S_r^L , S^R_r\})$):} The target side for $r$ is $S$.

 \begin{figure}[h]
 \begin{minipage}[c]{0.47\textwidth}
     \centering
  \includegraphics[width=0.7\linewidth]{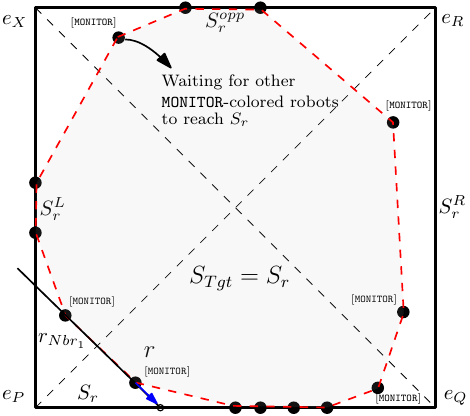}
     \caption{$r$ moves back to $S_r$ from its apex point $a_r$ considering the convex hull $\mathcal{CH}_r$}
     \label{move_to_targetside_sqaure}
 \end{minipage}
 \hfill
 \begin{minipage}[c]{0.47\textwidth}
     \centering
     \includegraphics[width=0.7\linewidth]{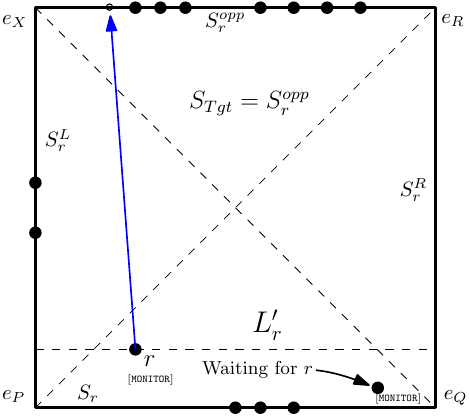}
     \caption{The movement of $r$ to $S_{Tgt}$ when the half plane delimited by $L'_r$ has no \texttt{MONITOR}-colored robot}
     \label{move_to_targetside_square1}
 \end{minipage}
 \end{figure}
     
    \item \textbf{Case 1.4.9 ($Max_r = \{ S_r \}$ or $\{ S_r, S_r^L, S_r^R \}$):} In this case, $r$ chooses $S_r$ its target side.

    \item \textbf{Case 1.4.10 ($Max_r = \{S, S_r\}$ where $S \in \{S_r^{opp}, S_r^L , S^R_r\}$):} $S_r$ is the target side of $r$.
\end{itemize}
\end{itemize}

Let $S_{Tgt}$ be the target side of $r$ and $L_r'$ be the line passing through $r$ and parallel to $S_{Tgt}$. 
If $S_{Tgt} = S_r$, then $r$ considers the line $\overleftrightarrow{rr_{Nbr_1}}$ (ref. Fig \ref{move_to_targetside_sqaure}) where $r_{Nbr_1}$ is the neighbour of $r$ on the convex hull $\mathcal{CH}_r$ not lying on $\Delta S_r$. It moves to the point of intersection $t_r$ of $S_r$ and $\overleftrightarrow{rr_{Nbr_1}}$ with the current color \texttt{MONITOR}. If $r$ gets activated again with color \texttt{MONITOR} on $S_r$, it changes its color to \texttt{OFF} with no movement.
When $S_{Tgt}\neq S_r$, $r$ waits until there is no \texttt{MONITOR}-colored robot lying on the half plane delimited by $L_r'$ that contains $S_{Tgt}$, as shown in Fig. \ref{move_to_targetside_square1}. $r$ moves with color \texttt{OFF} to the target side $S_{Tgt}$, by following the similar movement strategy as \textsc{Movement\_To\_LongestSide}. The above cases can interchangeably occur in different LCM cycles of a robot $r$.

\noindent \textbf{Case 2 ($Int(\Re)$ has \texttt{FINISH}-colored robots):} $r$ understands that the partitioning of $\Re$ would be of Type II after seeing a \texttt{FINISH}-colored robot. $r$ figures out its final position by looking at the positions of the robots with color \texttt{FINISH}. $r$ follows the movement strategy similar to Case 2 in \textsc{Rectangle\_Partition} algorithm (Section \ref{rectangle}) to move to the final position on $\frac{1}{2}S_r$ (for Type I) or on $\frac{1}{4}S_r$ (for Type II) with color \texttt{FINISH}.
 
\noindent \textbf{Case 3 ($Int(\Re)$ has \texttt{FINISH1}-colored robots):} $r$ understands the partitioning would be of Type III after seeing \texttt{FINISH1}-colored robots. $r$ figures out its final position by looking at the positions of the robots with color \texttt{FINISH1}.

\begin{itemize}
    \item \textbf{$\frac{1}{3}S_r^L$ has \texttt{FINISH1}-colored robot:} $r$ chooses the side $S=S_r^L$.
    
    \item  \textbf{$\frac{1}{3}S_r^R$ has a \texttt{FINISH1}-colored robot:} $r$ chooses $S_r^R$ as $S$.

    \item \textbf{A robot lying on the side triangle of $\Delta S_r^L$:} $r$ chooses the side $S_r^L$ as $S$.

    \item \textbf{A robot lying on the side triangle of $\Delta S_r^R$:} The side $S_r^R$ is chosen as $S$.
\end{itemize}

The robot $r$ waits till all the \texttt{FINISH1}-colored robots lie either on $\frac{1}{3}S_r$ or on $\frac{1}{3}S$. Without loss of generality, we assume $S = S_r^L$. At this point, $r$ understands that the partitioning is of Type III. It chooses the point of intersection of $S_r^{opp}$ and $S^{opp}$ (i.e., the corner $e_R$ in this scenario) as the vertex of Type III partitioning. 
$D_r$ is the diagonal of the region $\Re$ passing through the vertex of Type III partitioning and $D_r^{opp}$ is the other diagonal of $\Re$.
Now it computes the length of the base $bl$ of each triangular partition depending on the positions of the \texttt{FINISH1}-colored robots, as shown in Fig. \ref{fig.case2square}. So, there could be two sub-cases.

\begin{itemize}
    \item \textbf{Case 3.1 (There is a \texttt{FINISH1}-colored robot on $\frac{1}{3}S_r$):} Let $r'$ be a terminal robot on $\frac{1}{3}S_r$ with color \texttt{FINISH1}. Let $bl_1$ be the length of the base of the triangle whose one side is $D_r$ and the centroid is $r'$. $bl_2$ is the length of the base of the triangle whose one side is $S^{opp}$ and the centroid is at $r'$. So, $r$ computes $bl$ such that  $bl = \min \{bl_1 , bl_2\}$.

    \item \textbf{Case 3.2 (There is no \texttt{FINISH1}-colored robot on $\frac{1}{3}S_r$):} In this case \texttt{FINISH1}-colored robot (say $r'$) must be on $\frac{1}{3}S$. Like the previous case, $bl$ is computed by finding $bl_1$ and $bl_2$. The definition of $bl_1$ is the same as in the previous case (Case 3.1). We calculate $bl_2$ by considering the triangle with one side of $S_r^{opp}$  and the centroid at $r'$.
\end{itemize}

 
Let $u_1$ be the point of intersection of $\frac{1}{3}S_r$ and the diagonal $D_r$. $u_2$ is the point of intersection 
\begin{wrapfigure}[12]{r}{0.4\textwidth}
     \centering
     \includegraphics[width=\linewidth]{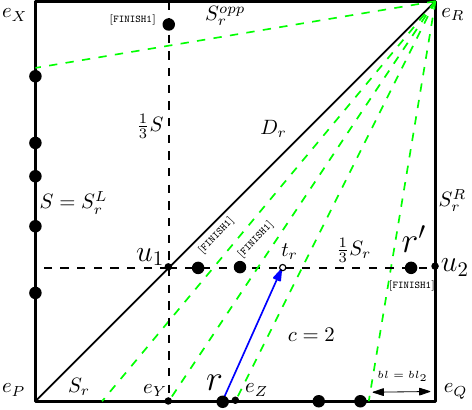}
     \caption{$r$'s movement for Type III partitioning for $r.color = $ \texttt{OFF}}
     \label{fig.case2square}
\end{wrapfigure}
of $\frac{1}{3}S_r$ and $S^{opp}$.
If the side $S^{opp}$ lies in the half plane delimited by $\overleftrightarrow{re_R}$, where the other robots on $S_r$ reside, $r$ finds the number of \texttt{FINISH1}-colored robots $c$ on $\frac{1}{3}S_r$ starting from the robot ${bl}/3$ distance apart from $u_1$ towards $u_2$ such that two consecutive robots are at ${2bl}/3$ distance away from each other, as depicted in Fig. \ref{fig.case2square}. Then, it changes its color to \texttt{FINISH1} and moves to the point $t_r$ on $\frac{1}{3}S_r$ such that $t_r = Centroid(\Delta e_Y e_R e_Z)$, where $e_Y$ and $e_Z$ are the points on $S_r$ satisfying $d(e_Y, e_P) = c\cdot bl$ and $d(e_Z, e_P) = (c+1) \cdot bl$. Otherwise, $c$ is calculated as the number of robots on $\frac{1}{3}S_r$ starting from the robot ${bl}/3$ distance apart from $u_2$ towards $u_1$ to a robot such that two consecutive robots are at ${2bl}/3$ distance away from each other. $r$ moves to $t_r$ on $\frac{1}{3}S_r$ such that $t_r = Centroid(\Delta e_Y e_R e_Z)$, where $e_Y$ and $e_Z$ are the points on $S_r$ satisfying $d(e_Y, e_Q) = c\cdot bl$ and $d(e_Z, e_Q) = (c+1) \cdot bl$ with color \texttt{FINISH1}. 

\noindent \textbf{Case 4 ($Int(\Re)$ has \texttt{FINISH2}-colored robots):}  In this case, $r$ understands that the partitioning would be of Type IV after seeing \texttt{FINISH2}-colored robots. $r$ waits till any robot with color \texttt{FINISH2} lying on the side triangle of a side $S$ reaches $\frac{1}{6}S$, where $S \in \{ S_r, S_r^{opp}, S_r^L, S_r^R\}$. It now needs to calculate the base length $bl$ of the triangular partition.
Let $r'$ be a \texttt{FINISH2}-colored terminal robot on $\frac{1}{6}S_{r'}$, in Fig. \ref{fig.case3square}.
Let us also consider that ${bl_1}$ is the length of the base of the triangle
\begin{wrapfigure}[11]{r}{0.4\textwidth}
\centering
     \includegraphics[width=\linewidth]{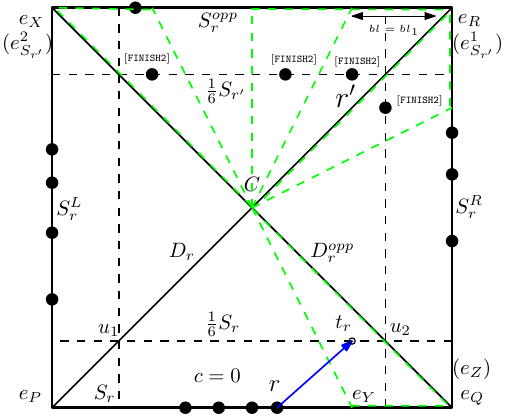}
     \caption{$r$'s movement for Type IV partitioning for $r.color = $ \texttt{OFF}}
     \label{fig.case3square}
    \end{wrapfigure}
whose centroid is $r'$, and the two vertices are $C$ and $e_{S_{r'}}^1$. ${bl_2}$ is the length of the base of the triangle whose centroid is $r'$ and the two vertices are $C$ and $e_{S_{r'}}^2$. Now, $r$ calculates the base length $bl = \min \{ bl_1, bl_2\}$. 
$D_r$ and $D_r^{opp}$ are the two diagonals of passing through $e_P$ and $e_Q$ respectively.
Let us further assume $u_1$ (and $u_2$) is the point of intersection of $\frac{1}{6}S_r$ and $D_r$ (and $D_r^{opp}$). If the line segment $\overline{Ce_Q}$ lies on the half plane delimited by the line $\overleftrightarrow{rC}$, where other robots on $S_r$ reside, $r$ finds the number of \texttt{FINISH2}-colored robots $c$ on $\frac{1}{6}S_r$ starting from the robot ${bl}/3$ distance away from $u_1$ towards $u_2$ such that two consecutive robots are ${2bl}/3$ distance apart from each other. Finally, $r$ changes its color to \texttt{FINISH2} and moves to the point $t_r$ such that $t_r = Centroid(\Delta C e_Y e_Z)$, where $e_Y$ and $e_Z$ are the points on $S_r$ satisfying $d(e_P, e_Y) = c\cdot bl$ and $d(e_P, e_Z) = (c+1) \cdot bl$. Otherwise, if the line segment $\overline{Ce_Q}$ lies on the other half plane delimited by the line $\overleftrightarrow{rC}$, where other robots on $S_r$ reside, $r$ finds the number of \texttt{FINISH2}-colored robots $c$ on $\frac{1}{6}S_r$ starting from the robot ${bl}/3$ distance away from $u_2$ towards $u_1$ such that two consecutive robots are ${2bl}/3$ distance apart from each other. Finally, $r$ changes its color to \texttt{FINISH2} and moves to the point $t_r$ such that $t_r = Centroid(\Delta C e_Z e_Y)$, where $e_Z$ and $e_Y$ are the points on $S_r$ satisfying $d(e_Q, e_Z) = c\cdot bl$ and $d(e_Q, e_Y) = (c+1) \cdot bl$.

\subsection{Analysis of the Algorithm}
In this subsection, we analyse the algorithm \textsc{Square\_Partition}. The movement of the robots is free from collision. The following lemmas and theorems provide the correctness and time complexity of the algorithm. 
\begin{lemma}
    \label{lemma4.1}
    All interior and corner robots move to the boundary without collision. 
\end{lemma}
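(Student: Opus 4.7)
The plan is to adapt the argument of Lemma \ref{lemma3.1} to the square geometry. First I would partition $\Re$ (minus its boundary) into the four closed side triangles $\Delta S$, one per side of $\Re$, which meet exactly along the two diagonals $D,D^{opp}$. An interior robot chooses $S_r$ to be a nearest side, which coincides with the side whose triangle $\Delta S_r$ contains $r$; if $r$ lies on a diagonal, both adjacent sides are nearest and either choice still places $r$ on the closure of $\Delta S_r$. A corner robot selects one of the two incident sides as $S_r$, and its perpendicular path to $p_r$ again lies within $\overline{\Delta S_r}$. Hence every moving robot $r$ executes its trajectory inside its chosen side triangle and approaches $S_r$ along $L_r$.

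Next I would establish collision freedom for two robots $r,r'$ with the same target side, say $S_r = S_{r'} = S$. The proof of Lemma \ref{lemma3.1} carries over verbatim: if both $p_r$ and $p_{r'}$ are empty, the two trajectories $\overline{r p_r}$ and $\overline{r' p_{r'}}$ are parallel perpendiculars to $S$ and cannot intersect. If $\mathcal V_r$ is non-empty, then the bound $d(p_r,t_r)\le \tfrac14\min_{q\in\mathcal V_r} d(q,L_r)$, applied symmetrically to $r'$, yields $d(t_r,t_{r'})>0$, and the perpendicular to $S$ through the midpoint of $\overline{t_r t_{r'}}$ separates the two paths, even under simultaneous activation. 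The degenerate case where $r$ and $r'$ share $L_r$ is handled, as in Lemma \ref{lemma3.1}, by sequentiality: the robot closer to $S$ moves first while the farther robot waits (by the ``$p_r$ occupied'' rule or by $p_r$ being invisible).

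For two robots $r,r'$ targeting different sides, I would use the partition into side triangles: the open triangles $\mathrm{int}(\Delta S_r)$ and $\mathrm{int}(\Delta S_{r'})$ are disjoint, and the only possible meeting point of their trajectories would lie on the common boundary, which is a segment of a diagonal. Since $r$ ends at $t_r\in S_r$ and $r'$ ends at $t_{r'}\in S_{r'}$, neither endpoint lies on a diagonal, and the interior of each trajectory is separated from the diagonals by the choice of $L_r$ and $L_{r'}$. Corner robots are included in this argument because their chosen side $S_r$ is incident to them and the path to $p_r$ stays inside $\overline{\Delta S_r}$.

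The main obstacle I expect is robots starting on a diagonal that simultaneously select opposite adjacent sides: their paths are not parallel but travel into disjoint side triangles. I would handle this by observing that $r$'s starting position lies on the boundary of $\Delta S_r$, and because $L_r\perp S_r$, the open segment $\overline{r\, p_r}\setminus\{r\}$ lies strictly in $\mathrm{int}(\Delta S_r)$, which is disjoint from the corresponding strict interior for $r'$. The one remaining subtlety is a waiting robot on the boundary acting as an obstacle for a moving interior robot; but this is exactly the situation covered by the ``$p_r$ occupied'' clause of \textsc{Move\_To\_LongestSides}, which forces the moving robot to wait until the blocking robot leaves $p_r$, making the movement to the boundary collision-free.
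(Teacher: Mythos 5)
Your proposal is correct and takes essentially the same route as the paper: the paper's proof of Lemma \ref{lemma4.1} is a one-line appeal to Lemma \ref{lemma3.1}, and your argument is precisely the adaptation of that lemma to the square, with the two half-rectangles separated by the midline replaced by the four side triangles meeting along the diagonals. You merely supply details the paper leaves implicit (convexity of each $\Delta S_r$ keeping a trajectory inside its own triangle, the quarter-of-minimum-distance bound separating targets on a common side, and sequential movement via invisibility of $p_r$ for collinear robots).
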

\begin{proof}
    The proof of this lemma follows from Lemma \ref{lemma3.1}.
\end{proof}

\begin{lemma}\label{move_to apex point_square}
    A monitor robot $r$ lying on $S_r$ selects its apex point $a_r$ on $(S_r, \frac{1}{8}S_r] \cap \Delta S_r$  so that all the robots on $\Re$ are visible to it after its movement and moves to it without collision.
\end{lemma}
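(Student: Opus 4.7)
The plan is to verify three properties: (i) the apex point $a_r$ lies in $(S_r, \frac{1}{8}S_r] \cap \Delta S_r$; (ii) no robot in $\Re$ becomes invisible to $r$ after it reaches $a_r$; and (iii) the movement is collision-free even when several monitor robots act simultaneously under the ASYNC scheduler.

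First I would dispose of the trivial sub-case in which none of $\Delta S_r^{opp}, \Delta S_r^L, \Delta S_r^R$ contains a robot. Here $a_r$ is defined as the intersection of $L_r$ and $\frac{1}{8}S_r$, which obviously sits in $(S_r, \frac{1}{8}S_r] \cap \Delta S_r$; since every other robot lies on $S_r$, none can obstruct $r$'s line of sight after a perpendicular move to $a_r$, and any other simultaneously activated monitor robot $r'$ on $S_r$ moves along a segment parallel to $\overline{r a_r}$, so the two paths cannot meet.

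Next I would address the main sub-case, where $a_r$ is built from the neighbours $r_{Nbr_1}, r_{Nbr_2}$ on $\mathcal{CH}_r$. For (i), the summand $\frac{1}{2}d(r, \frac{1}{4}S_r)$ caps the travel distance so that $a_r$ cannot cross $\frac{1}{8}S_r$, while the summands $\frac{1}{2}d(r, D_r)$ and $\frac{1}{2}d(r, D_r^{opp})$ keep $a_r$ strictly inside the side triangle $\Delta S_r$. For (ii), I would replay the convex-hull argument of Lemma \ref{lemma 3.2}: since $r_{Nbr_1}$ is a hull-neighbour of $r$, the segment $\overline{r r_{Nbr_1}}$ is an edge of $\mathcal{CH}_r$; sliding $r$ along this edge keeps $r$ on the hull, so no two other robots become collinear with $r$ and full visibility is preserved.

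The main obstacle is (iii): collision-freeness under asynchrony when several monitor robots move at once. I would argue by cases on where another monitor robot $r'$ sits. If $r'$ lies on a side different from $S_r$, then $a_r \in \Delta S_r$ while $a_{r'} \in \Delta S_{r'}$; the caps $\frac{1}{2}d(r, D_r)$ and $\frac{1}{2}d(r, D_r^{opp})$ forbid $r$ from crossing either diagonal of $\Re$, and the analogous caps do the same for $r'$, so the two paths lie in disjoint side triangles. If $r'$ shares the side $S_r$ with $r$, then $r'$ sits on the far side of the perpendicular $L_{r_{Nbr_2}}$; the cap $\frac{1}{2}d(r, r_{Nbr_2})$ guarantees that $r$ stops strictly before reaching $L_{r_{Nbr_2}}$, and the symmetric cap for $r'$ does the same on its side, so the two paths remain in disjoint half-planes. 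In every case, $r$ and $r'$ travel inside disjoint closed regions, yielding the claimed collision-freeness.
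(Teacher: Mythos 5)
Your proposal is correct and follows the paper's proof almost step for step: the same use of the $\frac{1}{2}\min\{\cdot\}$ caps to place $a_r$ in $(S_r,\frac{1}{8}S_r]\cap\Delta S_r$, the same appeal to the convex-hull argument of Lemma~\ref{lemma 3.2} to show $r$ stays a hull vertex and hence keeps full visibility, and the same confinement to disjoint side triangles for monitor robots on different sides (which also covers the paper's explicitly mentioned scenario where $r_{Nbr_1}$ itself is a monitor robot moving toward $r$ along $\overline{rr_{Nbr_1}}$). The one place you genuinely diverge is the same-side case: the paper argues that $r$ and $r'$ travel along two \emph{distinct edges} of the global convex hull $\mathcal{CH}^G$, as in Lemma~\ref{newlemma1_rectangle}, whereas you separate the two paths using the cap $\frac{1}{2}d(r,r_{Nbr_2})$. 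Your mechanism is sound and arguably more elementary, but the delimiting line should be the perpendicular bisector of $\overline{rr_{Nbr_2}}$ (which is exactly what the factor $\frac{1}{2}$ buys you), not $L_{r_{Nbr_2}}$ itself: when $S_r$ carries exactly two robots one has $r'=r_{Nbr_2}$, and the two half-planes ``strictly before $L_{r_{Nbr_2}}$'' and ``strictly before $L_{r}$'' overlap in the whole strip between them, so the disjointness claim as written fails in that configuration; restated with the two bisectors it goes through. A minor plus on your side is that you explicitly dispose of the degenerate sub-case in which no robot lies off $S_r$ and $a_r$ is taken on $L_r$, which the paper's proof leaves implicit.
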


\begin{proof}
    When $r$ qualifies as a monitor robot on $S_r$, it finds the apex point $a_r$ such that $d(r, a_r) < d(r, D_r)$ and $d(r,a_r) < d(r, D_r^{opp})$. So, $a_r$ cannot lie on $D_r$ or $D_r{opp}$ or outside $\Delta S_r$. Also, $d(r,a_r) \leq d(r, \frac{1}{8}S_r)$. Thus $a_r$ must be on  $(S_r, \frac{1}{8}S_r]$. Hence $a_r$ lies on $(S_r, \frac{1}{8}S_r] \cap \Delta S_r$. The point $a_r$ is chosen on the line segment $rr_{Nbr_1}$ where $r_{Nbr_1}$ is the neighbour of $r$ on $\mathcal{CH}_r$ which does not lie on $S_r$. By the similar argument as presented in Lemma \ref{lemma 3.2}, we can conclude that $r$ remains a vertex of $\mathcal{CH}^G$ (which is the convex hull of all the robots in $\Re$) after its movement which enables it to see all the robots in $\Re$ after reaching to its apex point. 

    Observe that even if $r$ and $r_{Nbr_1}$ move simultaneously on $\overline{rr_{Nbr_1}}$, they do not collide as their respective apex points are separated by either $D_r$ or $D_r^{opp}$.  Additionally, another monitor robot lying on a different side triangle other than $\Delta S_r$ cannot collide with $r$ as its movement to its apex point is restricted within its respective side triangle. If $r'$ is another monitor robot on $S_r$ moving simultaneously with $r$, they move to their respective apex points by following two different sides of the convex hull $\mathcal{CH}^G$ similar to Lemma \ref{newlemma1_rectangle}, resulting to no collision.    
\end{proof}

\begin{remark}\label{max_r_accurate_calculation}
    When $r$ finds itself on $(S_r, \frac{1}{8}S_r]$ with color \texttt{MONITOR} and does not see any \texttt{OFF} or \texttt{FINISH} or \texttt{FINISH1} or \texttt{FINISH2} robot in $Int(\Re)$, it accurately calculates $Max_r$.
\end{remark}

\begin{remark}\label{one_or_two_sideconfiguration}
    Let $r$ be the first robot to set its color to \texttt{FINISH}. Then, all the robots are positioned on one side of $\Re$ or the two opposite sides of $\Re$ before the movement of $r$ with color \texttt{FINISH}, according to our strategy described in Case 1.4.1 and 1.4.2.
\end{remark}

\begin{remark}\label{adjacent_sideconfiguration}
    Let $r$ be the first robot to set its color to \texttt{FINISH1}. Then, all the robots are positioned on the two adjacent sides of $\Re$ before the movement of $r$ with color \texttt{FINISH1}, according to our strategy described in Case 1.4.3.
\end{remark}

\begin{remark}\label{four_sideconfiguration}
    Let $r$ be the first robot to set its color to \texttt{FINISH2}. Then, all the robots are distributed equally on the four sides of $\Re$ before the movement of $r$ with color \texttt{FINISH2}, according to our strategy described in Case 1.4.4.
\end{remark}

\begin{lemma}
    \label{lemma4.2}
    All robots form either a one-side configuration (where all robots gather on one side), or a two-side configuration (where robots gather on either two adjacent or two opposite sides), or a four-side configuration (where robots are equally distributed on four sides).
\end{lemma}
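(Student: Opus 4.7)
The plan is to prove convergence via a monotone integer potential. By Lemma \ref{lemma4.1}, after finitely many epochs every robot occupies the boundary of $\Re$, so each robot lies on exactly one side triangle $\Delta S$. Let $M$ denote the maximum, over the four sides of $\Re$, of the number of robots on the corresponding side triangle. I would first establish that $M$ never decreases under any activation: Cases 1.4.1--1.4.4 are stable and involve no movement that alters side counts, Cases 1.4.9 and 1.4.10 keep the activating monitor on $S_r$ (which already lies in $Max_r$), and Cases 1.4.5--1.4.8 each move the activating monitor into a side belonging to $Max_r$, so some max side gains a robot and $M$ becomes $M+1$.

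The key step is to show strict progress whenever the configuration is not one of the four stable types. Suppose some non-empty side $S^\ast$ satisfies $|S^\ast| < M$. By ASYNC fairness a terminal robot on $S^\ast$ is eventually activated; since earlier activations have cleared corner robots and any \texttt{OFF}-colored robots from the relevant bands (again using Lemma \ref{lemma4.1}), this robot qualifies as a monitor in finite time. Since $S_r = S^\ast \notin Max_r$, all cases requiring $S_r \in Max_r$ (namely 1.4.5, 1.4.9, 1.4.10) are excluded, and the possible subsets of $\{S_r^{opp}, S_r^L, S_r^R\}$ for $Max_r$ are exhausted by Cases 1.4.6, 1.4.7, and 1.4.8. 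In each of these three cases the monitor moves into a side of $Max_r$ (with $Max_r$ computed correctly by Remark \ref{max_r_accurate_calculation}), strictly raising $M$ by one.

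Because $M$ is bounded above by $N$, it stabilizes after finitely many increases, at some value $M^\ast$. Once $M$ stabilizes, no non-max side can contain a robot, for otherwise the previous paragraph's argument forces a further increase. Hence every robot lies on a side in $Max$, and a short case analysis on $|Max|$ concludes the proof: $|Max|=1$ gives the one-side configuration (Case 1.4.1); $|Max|=2$ gives either two-opposite (Case 1.4.2) or two-adjacent (Case 1.4.3); $|Max|=4$ gives the four-side configuration (Case 1.4.4); and $|Max|=3$ is impossible at stabilization, since Case 1.4.5 would then apply to a monitor on some side in $Max$ and force yet another increase of $M$.

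The main obstacle I anticipate is verifying, under arbitrary ASYNC interleaving, that the full monitor criterion (terminality, no visible corner robot, empty inner bands, and \texttt{MONITOR}-colored neighbours on the adjacent and opposite bands) is indeed eventually satisfied for some terminal robot on every non-empty non-max side. This requires tracking the interplay of the \texttt{OFF} and \texttt{MONITOR} colors produced in Cases 1.4.9 and 1.4.10 (where max-side monitors move back to $S_r$ and then reset to \texttt{OFF}) and arguing that these resets do not permanently block non-max monitors from activating; scheduler fairness together with the eventual emptying of the transient bands ensures this.
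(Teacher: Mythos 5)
Your proof is correct and rests on the same algorithmic mechanism as the paper's (the target-side rules of Cases 1.4.5--1.4.10 funnel robots onto the sides of $Max_r$), but you organize the convergence argument quite differently. The paper's proof is a direct case analysis on the cardinality of $Max_r$, asserting informally that each cardinality ``leads to'' the corresponding final configuration; in particular it claims a two-element $Max_r$ leads to a two-side configuration, which is not literally true when the two maximal sides are $S_r^L$ and $S_r^R$ for some occupied side $S_r$ (one move then collapses $Max$ to a singleton and the system drifts to a one-side configuration), and it does not spell out why the three-element case terminates. Your monotone potential $M = \max_S |S|$ repairs both points: you verify non-decrease over every case, prove strict progress from any occupied non-maximal side (correctly observing that $S_r \notin Max_r$ excludes Cases 1.4.5, 1.4.9, 1.4.10 and that Cases 1.4.6--1.4.8 exhaust the remaining subsets), and rule out $|Max|=3$ at stabilization via Case 1.4.5 applied to a monitor on one of the two mutually opposite maximal sides. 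What the potential buys is a clean termination certificate and the correct conclusion that the final configuration type is determined by $|Max|$ at stabilization rather than by the value of $|Max_r|$ first observed; what the paper's version buys is brevity and a closer one-to-one correspondence with the algorithm's case labels. Note that both arguments share the same unproven liveness assumption --- that a terminal robot on a non-maximal side eventually satisfies the full monitor predicate under ASYNC interleaving --- which you at least identify explicitly as the remaining obstacle, whereas the paper's proof (and its reliance on Remark \ref{max_r_accurate_calculation}) passes over it silently.
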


\begin{proof}
    After reaching all the robots on the boundary, a robot $r$ will move to the apex point to calculate $Max_r$. $r$ first checks whether there is any \texttt{OFF}-colored robot in $Int(\Re)$. 
    If yes, it means that the \texttt{OFF}-colored robot is in its move phase towards one of the sides from its apex point. 
    So $r$ eventually finds no robot with color \texttt{OFF} in $Int(\Re)$. 
    We assume that there is neither a \texttt{FINISH}, nor \texttt{FINISH1} nor a \texttt{FINISH2} robot in $Int(\Re)$. If such a robot exists, then the lemma holds from Remarks \ref{one_or_two_sideconfiguration}, \ref{adjacent_sideconfiguration} and \ref{four_sideconfiguration}. 
    If $Max_r$ has only one element (side), $r$ moves towards the side with maximum robots (follows from Case 1.4.6 and Case 1.4.9), which leads to gathering all robots on one side. 
    If $Max_r$ contains two sides, then $r$ moves to one of those two sides in $Max_r$ (follows from Case 1.4.6, 1.4.8, and 1.4.10), leading all the robots lying on two sides of $\Re$. If $Max_r$ has three sides, our target is to move $r$ to the side opposite to the side having minimum robots (follows from Case 1.4.5, 1.4.7, and 1.4.9) and eventually $Max_r$ contains exactly one side, leading to the gathering of all robots on one side. Otherwise, $Max_r$ contains all the sides. 
\end{proof}

\begin{remark}\label{four_conditions}
    A \texttt{MONITOR}-colored robot $r$ with no robot with color \texttt{OFF}, or \texttt{FINISH}, or \texttt{FINISH1}, or \texttt{FINISH2} eventually satisfies one of the following conditions.\\
    (i) $Max_r = \{S_r \}$ and $\Delta S$ contains no robot where $S\in \{ S_r^{opp}, S_r^L, S_r^R\}$ (Case 1.4.1, Section \ref{square})\\ (ii) $Max_r = \{S_r, S_r^{opp}\}$ and no robots on $\Delta S$ for $S \in \{S_r^L, S_r^R \}$ (Case 1.4.2, Section \ref{square}) \\ (iii) $Max_r = \{ S_r, S \}$ and both $\Delta S_r^{opp}$ and $\Delta S^{opp}$ contain no robot where $S \in \{S^L_r, S^R_r\}$ (Case 1.4.3, Section \ref{square})\\ (iv) $Max_r = \{ S_r, S_r^{opp}, S_r^L, S^R_r \}$ (Case 1.4.4, Section \ref{square}).
\end{remark}

\begin{lemma}\label{collision-free-square}
    The robot $r$ moving from the apex point to one of the sides of $\Re$ does not collide while moving. 
\end{lemma}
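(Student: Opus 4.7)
The plan is to split on the identity of the target side $S_{Tgt}$ chosen by $r$ after its apex-point computation, which falls into exactly the sub-cases 1.4.5--1.4.10 of the algorithm description.

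\textbf{Case A ($S_{Tgt}=S_r$, i.e.\ sub-cases 1.4.9 and 1.4.10).} By Lemma \ref{move_to apex point_square}, $r$ sits on $(S_r,\tfrac18 S_r]\cap\Delta S_r$ and is a vertex of the global convex hull $\mathcal{CH}^G$. According to the algorithm $r$ moves to the intersection $t_r$ of $S_r$ with the line $\overleftrightarrow{r r_{Nbr_1}}$, where $r_{Nbr_1}$ is the neighbour of $r$ on $\mathcal{CH}_r$ not lying on $\Delta S_r$. I would first argue that the open segment $\overline{r t_r}\setminus\{t_r\}$ stays inside $\Delta S_r$ and lies on the half plane of $L_r$ opposite to $r_{Nbr_1}$, hence it does not meet the interior of any other side triangle. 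Therefore no other monitor robot currently sitting on $\Delta S'$ with $S'\neq S_r$ can collide with $r$, even under simultaneous movement (its path is trapped in $\Delta S'$ by the symmetric version of Lemma \ref{move_to apex point_square}). For another monitor robot $r'$ on $S_r$ moving simultaneously along $\overleftrightarrow{r' r'_{Nbr_1}}$, I would invoke exactly the convex-hull separation argument used in Lemmas \ref{lemma 3.2}--\ref{newlemma1_rectangle}: the two segments $\overline{r\,r_{Nbr_1}}$ and $\overline{r'\,r'_{Nbr_1}}$ are two distinct sides of $\mathcal{CH}^G$, so their relative interiors are disjoint.

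\textbf{Case B ($S_{Tgt}\neq S_r$, i.e.\ sub-cases 1.4.5--1.4.8).} Here $r$ is forced by the algorithm to wait until the open half-plane $H$ determined by $L'_r$ (the line through $r$ parallel to $S_{Tgt}$) that contains $S_{Tgt}$ is free of \texttt{MONITOR}-coloured robots, and then moves to $S_{Tgt}$ using the movement rule of \textsc{Move\_To\_LongestSides}. I would proceed in two steps. First, I would show that the waiting condition is eventually satisfied: by ASYNC-fairness every \texttt{MONITOR}-coloured robot $r''$ in $H$ is activated infinitely often, and by the analysis that leads to Lemma \ref{lemma4.2} such a robot either moves out of $H$ to reach its target side, or reverts to colour \texttt{OFF} on its current side; in either case it stops obstructing $r$ after finitely many epochs. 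Second, once $H$ is clear of \texttt{MONITOR} robots, I would reduce the actual movement of $r$ to the collision-free analysis already carried out for the rectangular case: $r$ plays the role of a robot executing \textsc{Move\_To\_LongestSides} with $S_{Tgt}$ in the role of the longest side and with $\mathcal{V}_r$ defined with respect to $L_r=\overleftrightarrow{r p_r}$ where $p_r$ is the foot of perpendicular from $r$ onto $S_{Tgt}$. The exact same target-point construction $d(p_r,t_r)=\tfrac14\min_{r'\in\mathcal{V}_r}d(r',L_r)$ (or the halving rule when $\mathcal{V}_r=\emptyset$) is used, so the parallel-paths / shrunken-neighbourhood argument of Lemma \ref{lemma3.1} applies verbatim to rule out collisions with any simultaneously moving robot (monitor or otherwise) in $H$.

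\textbf{Main obstacle.} The delicate part is Case B when two monitor robots $r$ and $r'$ from \emph{different} side triangles both elect a third side as their common target and are activated simultaneously. Their waiting half-planes $H$ and $H'$ may overlap, so the waiting condition alone does not separate them. I would handle this by combining the two ingredients above: (i) each path segment is trapped on the correct side of its own perpendicular $L_r$, $L_{r'}$, and (ii) the $\tfrac14$-shrinking in the definition of $t_r$ guarantees that the perpendicular bisector of $\overline{t_r t_{r'}}$ separates the two trajectories, exactly as in the displayed inequality $d(t_r,t_{r'})\ge d(p_r,p_{r'})-\tfrac14 d(r',L_r)-\tfrac14 d(r',L_r)>0$ established in Lemma \ref{lemma3.1}. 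This reduces the whole case to the rectangular proof and completes the argument.
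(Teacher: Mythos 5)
Your proposal is correct and follows essentially the same route as the paper: the paper's proof also splits on whether $S_{Tgt}=S_r$, handling the first case by the convex-hull argument of Lemma \ref{move_to apex point_square} and the second by reducing to the \textsc{Move\_To\_LongestSides} strategy and Lemma \ref{lemma3.1}. Your version merely spells out details (the eventual clearing of the waiting half-plane and the cross-triangle simultaneity issue) that the paper leaves implicit.
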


\begin{proof}
    Let $S_{Tgt}$ be the target of $r$, calculated based on $Max_r$. If $S_{Tgt} = S_r$, $r$ considers $\overleftrightarrow{rr_{Nbr_1}}$ to move back to $S_r$ where $r_{Nbr_1}$ is the neighbour of $r$ on $\mathcal{CH}_r$, not lying on $\Delta S_r$. By the similar argument presented in Lemma \ref{move_to apex point_square}, we can argue that the movement of the robot $r$ is free from collision. When $S_{Tgt} \neq S_r$, the movement of $r$ is similar to the strategy \textsc{Movement\_To\_LongestSide} which is collision-free by Lemma \ref{lemma3.1}.
\end{proof}

\begin{lemma}\label{monitor-finish-1}
    A \texttt{MONITOR}-colored robot $r$ on $(S_r, \frac{1}{8}S_r]$ satisfying one of the conditions in Remark \ref{four_conditions} terminates to a unique partition of $\Re$ and the movement is collision-free.
\end{lemma}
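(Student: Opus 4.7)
The plan is to dispose of the lemma case by case, following the four configurations listed in Remark \ref{four_conditions}. For conditions (i) and (ii), the robot $r$ changes its color to \texttt{FINISH} and moves, respectively, to $\frac{1}{2}S_r$ or $\frac{1}{4}S_r$ using exactly the movement rule borrowed from Cases 1.4 and 1.3 of \textsc{Rectangle\_Partition}. So the analysis reduces to Lemma \ref{apex_to_termination}: the only item to double-check is the choice of $SS_r\in\{S_r^L,S_r^R\}$, which plays the role of the "shortest side" in the rectangular analysis. I would verify that whenever a second monitor robot $r''$ on $S_r$ is simultaneously active and satisfies the same condition, $r$ and $r''$ pick different elements of $\{S_r^L,S_r^R\}$ (namely, the one lying on the half-plane of $L_r$, respectively $L_{r''}$, containing no other robot of $\Re$), so the target points on $\frac{1}{2}S_r$ or $\frac{1}{4}S_r$ lie symmetrically around the midpoint of $S_r$ and hence are distinct. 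The collision-freeness then follows verbatim from the separating-line argument of Lemma \ref{apex_to_termination}.

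For condition (iii) (Type III partitioning), $r$ computes the two points $A,B$ on $S_r$ with $d(A,e_P)=d(B,e_Q)=\tfrac{len(S_r)}{c+c'}$ and moves to the centroid of either $\Delta Ae_Pe_R$ or $\Delta Be_Qe_R$, where the triangle is picked by looking at on which side of $L_r$ the opposite side $S^{opp}$ lies. The first step is to show that this choice is well defined for $r$: by the monitor-robot conditions and Remark \ref{max_r_accurate_calculation}, $r$ sees accurately which sides of $\Re$ contain robots, and since all robots lie on $\Delta S_r\cup \Delta S$ (by definition of Case 1.4.3), the half-plane delimited by $L_r$ that contains $S^{opp}$ is unambiguously identifiable. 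If another robot $r''$ lying on $\Delta S$ is simultaneously active in the same case, the analogous rule forces $r''$ to pick the triangle incident to the opposite corner on $S$, so $r$ and $r''$ land on centroids of non-overlapping triangles. For collision-freeness, I would observe that the two straight-line paths $\overline{r\,t_r}$ and $\overline{r''\,t_{r''}}$ are separated by the line through the midpoints of $\overline{rr''}$ and $\overline{t_rt_{r''}}$, mirroring the construction in Lemmas \ref{apex_to_termination} and \ref{off_to_terminate}.

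For condition (iv) (Type IV partitioning), the robot moves to the centroid of either $\Delta Ae_PC$ or $\Delta Be_QC$, with $d(A,e_P)=d(B,e_Q)=\tfrac{len(S_r)}{|S_r|}$ and $C$ the centre of $\Re$. The same style of argument applies: there may now be up to four simultaneously active monitor robots, one in each side triangle $\Delta S_r,\Delta S_r^{opp},\Delta S_r^L,\Delta S_r^R$, but each of them is confined (by the apex-point choice in \textsc{MonitorRobot\_Movement\_Square} and Lemma \ref{move_to apex point_square}) to its own side triangle, so their final targets lie in disjoint side triangles and cannot coincide. Within a single side triangle two monitor robots can also be present; then the "which half-plane of $L_r$ contains $e_P$" rule again sends them to triangles on opposite sides of $S_r$, yielding distinct centroids. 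Collision-freeness within a side triangle is again handled by the midpoint-separating-line argument; across different side triangles the paths are trivially disjoint because the triangles are separated by the diagonals $D_r,D_r^{opp}$.

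The main obstacle, and the part I would spend most of the care on, is case (iv) with two simultaneously active monitors in the same side triangle $\Delta S_r$: one must show both that they pick disjoint target triangles (so terminate in a unique partition) and that their linear paths to the respective centroids do not cross, even though both paths originate from points on $(S_r,\tfrac{1}{8}S_r]\cap\Delta S_r$ and end on $\frac{1}{6}S_r$. The clean way to finish this is to combine the convex-hull apex-point construction from Lemma \ref{move_to apex point_square} (which forces the two monitors to lie on opposite sides of $L_r$'s extension) with a direct comparison of the distances $d(t_r,SS_r)$ and $d(t_{r''},SS_{r''})$, exactly as done in the collision-freeness portion of Lemma \ref{apex_to_termination}.
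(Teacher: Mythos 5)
Your proposal is correct and follows essentially the same route as the paper's proof: conditions (i) and (ii) are reduced to Lemma \ref{apex_to_termination}, while for (iii) and (iv) the points $A,B$ and the half-plane rule for selecting $\mathcal{T}_1$ versus $\mathcal{T}_2$ guarantee distinct target triangles, with collision-freeness via the separating-line argument. The only cosmetic difference is your use of $\frac{len(S_r)}{c+c'}$ in case (iii), which coincides with the paper's $\frac{len(S_r)}{c'}$ since $c=0$ when no \texttt{FINISH1}-colored robot is present.
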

\begin{proof}
    If $r$ satisfies (i), it follows Type I partitioning, whereas if $r$ satisfies (ii), it follows Type II partitioning. The movement strategy for both cases is similar to the algorithm \textsc{Rectangular\_Partition} (in Section \ref{rectangle}). So, the statement of the above lemma holds from the Lemma \ref{apex_to_termination}. 
    If $r$ satisfies (iii), it follows Type III partitioning. It first calculates two points $A$ and $B$ on $S_r$, which are $\frac{len(S_r)}{c'}$ distance apart from $e_P$ and $e_Q$ respectively, where $c'$ is the number of robots on $\Delta S_r$ including itself and $e_P$ and $e_Q$ are the common endpoints of $S_r$, $S$ and $S_r$, $S^{opp}$, respectively. 
    It then chooses two triangles $\mathcal{T}_1 =\Delta Ae_Pe_R$ and $\mathcal{T}_2 = \Delta Be_Qe_R$ where $e_R$ is the vertex of the Type III partitioning. 
    If another robot $r'$ satisfies the conditions similar to those of $r$, it also chooses the above-mentioned two triangles. 
    The choice of the apex point for $r$ (as described in the strategy \textsc{MonitorRobot\_Movement\_Square} ensures that $S^{opp}$ must lie on either the same or different half-planes delimited by $L_r$ where other robots of $\Delta S_r$ lie. 
    If $S^{opp}$ lies on the half plane delimited by $L_r$ where other robots on $\Delta S_r$ lie, $r$ and $r'$ choose the centroid of $\mathcal{T}_1$ and $\mathcal{T}_2$ as their final positions, respectively. 
    Otherwise, $r$ and $r'$ choose the centroid of $\mathcal{T}_2$ and $\mathcal{T}_1$ as their final positions, respectively. 
    So, $r$ and $r'$ select different triangles for their termination and their two final positions are $\frac{2}{3}len(S_r) - \frac{2len(S_r)}{3c'}$ distance apart from each other, and the movement is free from any collision. Thus, $r$ and $r'$ both move to distinct partitions, each having $e_R$ as the vertex and with an area of $\frac{(len(S_r))^2}{2c'}$.
    If $r$ satisfies (iv), it follows Type IV partitioning. It calculates $A$ and $B$ on $S_r$ and chooses the centroid of one of the triangles $\Delta Ae_PC$ and $Be_QC$ as its final point. Based on similar arguments presented above, $r$ moves to a unique partition without collision. 
\end{proof}

\begin{lemma}\label{monitor-finish-2}
    A \texttt{MONITOR}-colored robot $r$ on $(S_r, \frac{1}{8}S_r]$ seeing either a \texttt{FINISH} or \texttt{FINISH1} or \texttt{FINISH2}-colored robot, terminates to a unique partition of $\Re$ and the movement is collision-free. 
\end{lemma}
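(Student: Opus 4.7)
The plan is to split the argument along the color of the terminating robot that $r$ sees, because this color tells $r$ which partition type to adopt (Types I/II, III, or IV) by Remarks \ref{one_or_two_sideconfiguration}, \ref{adjacent_sideconfiguration}, and \ref{four_sideconfiguration}. In each case I would first argue that the waiting condition in the algorithm eventually becomes satisfied (so $r$ commits to a target), then that the target is uniquely determined by $r$'s view (uniqueness of partition), and finally that two simultaneously active monitor robots with the same eligibility land in different partitions and their straight-line paths are separated (collision-freeness). A useful preliminary observation is that, in all three sub-cases, $r$ waits explicitly for every \texttt{FINISH}/\texttt{FINISH1}/\texttt{FINISH2}-colored robot to settle at a position on $\frac{1}{2}S$, $\frac{1}{4}S$, $\frac{3}{4}S$, $\frac{1}{3}S$, or $\frac{1}{6}S$, so $r$ is never fooled by a robot that is still in transit.

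For the \texttt{FINISH} case, by Remark \ref{one_or_two_sideconfiguration} the configuration frozen in when the first \texttt{FINISH}-colored robot appeared is a one-side or opposite-two-side configuration, so $\Re$ is in Type I or Type II mode. The computation $r$ performs is identical to that of Case 2 in \textsc{Rectangle\_Partition}, with $SS_r \in \{S_r^L, S_r^R\}$ replacing the rectangle's shortest sides. Hence both the uniqueness of $t_r$ and the collision-freeness of the straight-line move reduce to Lemma \ref{off_to_terminate}, applied with the line $L$ set to $\frac{1}{2}S_r$ or $\frac{1}{4}S_r$ as appropriate.

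For the \texttt{FINISH1} case, Remark \ref{adjacent_sideconfiguration} fixes the partition as Type III with vertex $e_R$. Once all \texttt{FINISH1}-colored robots have landed on $\frac{1}{3}S_r$ or $\frac{1}{3}S$, the quantities $c$ and $c'$ that $r$ uses are determined solely by the common snapshot of settled robots, so $r$'s target triangle $\mathcal{T}\in\{\Delta Ae_Pe_R,\Delta Be_Qe_R\}$ is a deterministic function of $r$'s own position. The crux is to show that if another eligible monitor robot $r''$ on $\Delta S_r$ is activated simultaneously, it picks the \emph{other} triangle. This is where the half-plane criterion based on $L_r$ enters: $S^{opp}$ lies in one of the two half-planes delimited by $L_r$, and in the half-plane where the remaining robots of $\Delta S_r$ reside. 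Since $r$ and $r''$ are on $\Delta S_r$ and the remaining robots of $\Delta S_r$ lie on opposite sides of their respective $L_r$ and $L_{r''}$, their tie-breakers are opposite, giving different triangles. The centroids of these two triangles are separated by a positive distance of the form $\tfrac{2}{3}\,len(S_r) - \tfrac{2\,len(S_r)}{3(c+c')}$, and the perpendicular bisector of the segment joining $r$ and $r''$ separates the two straight-line paths, as in the collision argument of Lemma \ref{apex_to_termination}.

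For the \texttt{FINISH2} case, Remark \ref{four_sideconfiguration} fixes Type IV with the four sides equally populated. After $r$ waits for all \texttt{FINISH2} robots to occupy positions on $\frac{1}{6}S$ for various sides $S$, the quantities $bl$ and $|S_r|$ are uniquely read off the snapshot, and the same half-plane tie-breaker (now relative to the diagonal-like line $\overleftrightarrow{rC}$ in place of $L_r$) shows that two simultaneous eligible robots on $\Delta S_r$ pick different triangles among $\Delta Ae_PC$ and $\Delta Be_QC$. Uniqueness of partition and collision-freeness follow verbatim from the Type III argument. The step I expect to take most care is the tie-breaking in the Type III case, because there the two candidate triangles share the vertex $e_R$ and one must verify precisely from the construction of the apex point (Strategy \textsc{MonitorRobot\_Movement\_Square}, which guarantees that $r$ does not cross $D_r$ or $D_r^{opp}$ and stays within $\Delta S_r$) that the half-plane test is well-defined and opposite for $r$ and $r''$; once that is pinned down, the remaining geometric separation and the reduction to earlier lemmas are routine.
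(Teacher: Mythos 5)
Your proposal follows essentially the same route as the paper: it splits on the color seen, argues that the waiting condition eventually clears because the settled robot is only in transit, reduces the \texttt{FINISH} case to Lemma \ref{off_to_terminate} via Case 2 of \textsc{Rectangle\_Partition}, and reduces the \texttt{FINISH1}/\texttt{FINISH2} cases to the half-plane tie-break and centroid-separation argument of Lemma \ref{monitor-finish-1} (including the separate counting of $c$ and $c'$ and the $\frac{len(S_r)}{c+c'}$ placement of $A$ and $B$). The only cosmetic difference is that you cite the configuration Remarks explicitly and describe the separating line as a perpendicular bisector rather than the midpoint line of Lemma \ref{apex_to_termination}, neither of which changes the argument.
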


\begin{proof}
    If $r$ sees a \texttt{FINISH}-colored robot $r''$ in $Int(\Re)$, it waits till $r''$ reach either on $\frac{1}{2}S_r$ or on $\frac{1}{4}S_r$ or on $\frac{3}{4}S_r$. Since $r''$ is in its move phase to its final position, $r$ eventually finds all the \texttt{FINISH}-colored robots lying on $\frac{1}{2}S_r$ or  $\frac{1}{4}S_r$ or  $\frac{3}{4}S_r$. Then it moves to its final position in a partition by following a similar strategy as an \texttt{OFF}-colored robot, described in Case 2 of the algorithm \textsc{Rectangle\_Partition} (in Section \ref{rectangle}). In this case, the statement of the above lemma follows from the similar arguments in Lemma \ref{off_to_terminate}.

    If $r$ sees a \texttt{FINISH1}-colored robot in $Int(\Re)$, it waits for all \texttt{FINISH1}-colored robots to reach on either $\frac{1}{3}S_r$ or $\frac{1}{3}S$ where $S \in \{ S_r^L, S_r^R \}$ and eventually sees all those robots reach their final positions.  Since the \texttt{FINISH1}-colored robot on $\frac{1}{3}S_r$ may reach to another side triangle than $\Delta S_r$ while executing its movement to the final position, we separately count $c'$ (which is the number of the robot on $\Delta S_r$ without color \texttt{FINISH1}) and $c$ (which is the number of \texttt{FINISH1}-colored robots on $\frac{1}{3}S_r$). Similar to Lemma \ref{monitor-finish-1}, $r$ selects $A$, $B$ which are $\frac{len(S_r)}{c+c'}$ distance apart from $e_P$ and $e_Q$, respectively and the centroid of one of the two triangles $\Delta Ae_Pe_R$ and $\Delta B e_Q e_R$ as its final position. By Lemma \ref{monitor-finish-1}, this final position of $r$ lies in a unique partition of $\Re$ and the movement is collision-free. Similarly, we can prove that if $r$ sees a \texttt{FINISH2}-colored robot, it terminates to a unique partition without collision. 
\end{proof}

\begin{lemma}\label{off-finish}
    An \texttt{OFF}-colored robot $r$ on $S_r$  terminates to a unique partition of $\Re$, and the movement is collision-free. 
\end{lemma}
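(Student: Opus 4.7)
The plan is to split the proof by which terminal color $r$ observes in $Int(\Re)$---\texttt{FINISH}, \texttt{FINISH1}, or \texttt{FINISH2}---handling each partitioning type separately. For Types I and II (a \texttt{FINISH}-colored robot is seen), the movement executed by $r$ on $S_r$ is identical to Case 2 of \textsc{Rectangle\_Partition} with $SS_r \in \{S_r^L, S_r^R\}$; therefore uniqueness of the final partition and collision-freeness both follow verbatim from Lemma \ref{off_to_terminate}.

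For Type III, I would argue in three steps. First, since any visible \texttt{FINISH1}-colored robot must be in its move phase towards $\frac{1}{3}S_r$ or $\frac{1}{3}S$ (with $S \in \{S_r^L, S_r^R\}$), by the fairness of the ASYNC scheduler $r$ eventually sees all such robots at those final strips. Second, verify that $r$ correctly identifies the Type III vertex $e_R$ (as the unique corner common to $S_r^{opp}$ and $S^{opp}$), the diagonals $D_r$, $D_r^{opp}$, and the base length $bl = \min\{bl_1, bl_2\}$ from the snapshot; here I use that the base length of a triangular partition with a specified centroid and two prescribed adjacent edges is a geometric invariant, so Cases 3.1 and 3.2 give consistent values of $bl$. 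Third, conclude that $r$ moves to the centroid of a triangle $\Delta e_Y e_R e_Z$ not yet occupied: the orientation test based on which half-plane delimited by $\overleftrightarrow{re_R}$ contains the other robots of $S_r$ fixes the scanning direction on $\frac{1}{3}S_r$, and the consecutive spacing by $\tfrac{2bl}{3}$ enforced by the count $c$ guarantees that $t_r$ is the centroid of the triangular slot immediately adjacent to the already-placed block of $c$ partitions. A simultaneously activated \texttt{OFF}-colored robot $r'$ on $S_r$ either lies on the opposite half-plane delimited by $\overleftrightarrow{r e_R}$ and hence scans from the opposite corner, or lies on the same half-plane but records a strictly different count $c'$; in both cases $t_{r'}$ falls into a triangular slot disjoint from that of $r$.

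Type IV is handled analogously, using the center $C$ in place of $e_R$ and partitions of base length $bl$ on $\frac{1}{6}S_r$; the argument is the same, with the centroid of $\Delta C e_Y e_Z$ as the final point. For collision-freeness in Types III and IV, I would reuse the idea from Lemma \ref{apex_to_termination}: two robots $r$ and $r'$ move along straight segments from $S_r$ to distinct target centroids, and the perpendicular bisector of $\overline{t_r t_{r'}}$ separates their trajectories, ruling out collisions even in simultaneous execution.

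The main obstacle I anticipate is the orientation bookkeeping in Type III, since the labels $e_P, e_Q$ depend on the local coordinate system of $r$ while $e_R$ and the vertex of the partitioning are globally distinguished. Care will be needed to verify, across Cases 3.1 and 3.2, that the half-plane test involving $\overleftrightarrow{re_R}$ assigns two simultaneously activated robots to disjoint triangular slots regardless of whether the visible \texttt{FINISH1} anchor lies on $\frac{1}{3}S_r$ or on $\frac{1}{3}S$, and that no slot is skipped as the count $c$ increases.
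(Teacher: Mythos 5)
Your proposal follows essentially the same route as the paper: split by the terminal color observed (\texttt{FINISH}/\texttt{FINISH1}/\texttt{FINISH2}), reduce Types I--II to Lemma \ref{off_to_terminate}, argue for Types III--IV that the half-plane orientation test plus the count of already-placed robots sends two simultaneously activated terminal robots to disjoint triangular slots, and obtain collision-freeness from the separating-line argument of Lemma \ref{apex_to_termination}. The paper additionally makes the slot-disjointness quantitative via $d(t_r,t_{r'}) = \frac{2}{3}\,len(S_r) - \frac{2}{3}(k+k'+1)\,bl > 0$, but this is only a sharper version of the separation you assert.
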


\begin{proof}
  By Remark \ref{four_conditions}, there exists a terminal robot $r''$ on one of the sides, which moves to its apex point on $(S_r, \frac{1}{8}S_r]$ with color \texttt{MONITOR} and satisfies one of the four conditions as described in the remark. By Lemma \ref{monitor-finish-1} and \ref{monitor-finish-2}, $r''$ reaches to a unique partition of $\Re$ with color either \texttt{FINISH} or \texttt{FINISH1} or \texttt{FINISH2}. Let us assume that $r$ is a terminal robot of $S_r$. If $r''.color = $ \texttt{FINISH}, $r$ follows a similar strategy as presented in Case 2 of \textsc{Rectangle\_Partition}. By Lemma \ref{off_to_terminate}, $r$ terminates to a unique partition without collision. 

    Let $r''.color =$ \texttt{FINISH1}. So, $r$ understands that it needs to follow Type III partitioning. In this case, $r$ must have seen some robots on $\Delta S$ ($S \neq S_r$) and calculates the distance $bl$, the two points $e_Y$ and $e_Z$ and the number $c$, as described in Case 3 (Section \ref{square}). It then selects two points $e_Y$ and $e_Z$ and moves to the centroid of the triangle $\Delta e_Y e_R e_Z$. If $r'$ is another terminal robot on $S_r$, which simultaneously gets active with $r$, then it selects the centroid of the different triangle than that of $r$, as described in Case 3 (in Section \ref{square}). The choice of the triangles ensures that the two robot $r$ and $r'$ terminate at $t_r$ and $t_{r'}$. Let $u_1$ be the point of intersection of $\frac{1}{3}S_r$ and $D_r$ and $u_2$ be the point of intersection of $\frac{1}{3}S_r$ and $S^{opp}$. If we consider that $k$ (and $k'$) is the number of \texttt{FINISH1}-colored robots on $\frac{1}{3}S_r$ starting from the robot $\frac{bl}{3}$ distance apart from $u_1$ ($u_2$) towards $u_2$ ($u_1$) such that two consecutive robots are $\frac{2bl}{3}$ distance apart from each other, where $d(t_r, t_{r'}) = \frac{2}{3}len(S_r) - \frac{2}{3}(k+k'+1)bl > 0$ as $len(S_r) \geq (k+k')bl + 2bl$. Hence, $r$ and $r'$ terminate at distinct partitions of the region $\Re$. The movement of the two robots is free from collision, by the similar arguments presented in Lemma \ref{apex_to_termination}.

    For $r''.color = $ \texttt{FINISH2}, we can similarly prove that $r$ terminates at a unique partition.
    \end{proof}

\begin{theorem} \label{timecomplexity_square}
    Algorithm \textsc{Square\_Partition} solves uniform partitioning for the square region in $O(N)$ epochs without collision.
\end{theorem}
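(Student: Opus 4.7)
The plan is to mirror the three-part structure used in the proof of Theorem \ref{theorem3}: establish (i) eventual correctness via the chain of lemmas already proved, (ii) collision-freeness by invoking the corresponding clauses in each movement lemma, and (iii) an epoch bound by a careful phase-by-phase count. First, by Lemma \ref{lemma4.1}, every interior and corner robot eventually relocates to the boundary without collision. In the worst case, where the initial positions are collinear inside $\Re$, the robots must move sequentially (as in Theorem \ref{theorem3}), which costs $O(N)$ epochs.

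Next, I would argue the redistribution phase terminates in $O(N)$ epochs. By Lemma \ref{lemma4.2}, the robots eventually form a one-side, two-side (adjacent or opposite), or four-side configuration. The key observation is that a single ``relocation round'' for a boundary robot $r$ involves: qualifying as a monitor robot, moving to its apex point, reading $Max_r$, and either moving back to $S_r$ or crossing to a target side; by the strategy \textsc{MonitorRobot\_Movement\_Square} together with Lemma \ref{collision-free-square}, each such round takes $O(1)$ epochs. Moreover, since $Max_r$ is computed from the current side-triangle counts, the algorithm's rules in Cases 1.4.5--1.4.10 always send robots from a side not in $Max_r$ toward a side in $Max_r$, so the multiset of side counts is driven monotonically (in lexicographic order on sorted counts) toward one of the four admissible configurations of Remark \ref{four_conditions}. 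Because each round transfers one robot and the total number of transfers is at most $N$, and in the worst case transfers execute sequentially (as in the analogous argument for the rectangle), the redistribution phase concludes in $O(N)$ epochs.

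Once a valid configuration is reached, Remark \ref{four_conditions} guarantees that every \texttt{MONITOR}-colored robot at an apex point eventually satisfies exactly one of the four partition-type triggers (Cases 1.4.1--1.4.4). Applying Lemma \ref{monitor-finish-1}, these monitor robots terminate at distinct partitions with one of the colors \texttt{FINISH}, \texttt{FINISH1}, or \texttt{FINISH2}; Lemma \ref{monitor-finish-2} then handles any other \texttt{MONITOR}-colored robots that see an already terminated robot, and Lemma \ref{off-finish} handles the remaining \texttt{OFF}-colored boundary robots. Each such terminating robot consumes $O(1)$ epochs, and the worst case of sequential activation again yields $O(N)$ epochs. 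Summing the three phases gives the claimed $O(N)$ bound, while the collision-free clauses of Lemmas \ref{lemma4.1}, \ref{move_to apex point_square}, \ref{collision-free-square}, \ref{monitor-finish-1}, \ref{monitor-finish-2}, and \ref{off-finish} together certify collision-freeness.

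The main obstacle I anticipate is the redistribution bound: unlike the rectangular case, which only has to balance two sides, the square case has four sides and the set $Max_r$ can change during the execution, so one must carefully rule out oscillation (a robot being sent across, then back, then across again) that could inflate the count beyond $O(N)$. The cleanest fix is to argue that once a robot arrives on a side in $Max_r$, the updated $Max_r$ still contains that side unless another robot simultaneously tips the balance, and even then the progress measure (e.g., $N - \max_S |S|$, or the number of robots currently off their target side) strictly decreases over $O(1)$-epoch windows; combined with the ASYNC fairness assumption, this yields the $O(N)$ bound.
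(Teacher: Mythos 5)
Your proof follows essentially the same three-phase decomposition as the paper's own proof --- boundary relocation via Lemma \ref{lemma4.1}, redistribution into a one-, two- or four-side configuration via Lemma \ref{lemma4.2}, and termination via Lemmas \ref{monitor-finish-1}, \ref{monitor-finish-2} and \ref{off-finish} --- with the same worst-case sequential counting giving $O(N)$ epochs per phase and the same list of lemmas certifying collision-freeness. The oscillation concern you raise for the redistribution phase is a genuine subtlety, but the paper's proof does not address it either (it simply asserts the $O(N)$ bound after a constant-epoch count for the at most eight monitor robots), so your sketched monotone progress measure is a modest strengthening of the same argument rather than a different route.
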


\begin{proof}
    By Lemma \ref{lemma4.1}, starting from any initial configuration, all the interior and corner robots become boundary robots in $\Re$. After this, all the robots eventually form either a one-side, two-side, or four-side configuration by Lemma \ref{lemma4.2}. Thereafter, monitor robots terminate to distinct partitions of $\Re$ by Lemma \ref{monitor-finish-1}, \ref{monitor-finish-2} and all the other robots terminate to distinct partitions by Lemma \ref{off_to_terminate}. Hence, \textsc{Square\_Partition} solves uniform partitioning for the square region.

    In the worst case, all robots may lie on a straight line on $Int(\Re)$, in which it takes $O(N)$ epochs for all robots to reach the boundary. After moving to the boundary, it takes one epoch for monitor robots to move to apex points. In the worst case, there may be 8 \texttt{MONITOR}-colored robots in $Int(\Re)$, each lying on a different line parallel to $S_r$ and $r$ is the farthest robot from $S_r^{opp}$. If $Max_r = \{S_r^{opp}\}$, $r$ needs to wait for seven epochs to reach $S_r^{opp}$ because the other seven robots move to $S_r^{opp}$ one by one. Then $r$ needs one more epoch to reach $S_r^{opp}$. Thus, $r$ takes nine ($O(1)$) epochs so that the whole configuration of the robots becomes a one-side configuration. Hence, reaching a one-side configuration takes $O(N)$ epoch. Similarly, we can prove that if $Max_r$ contains more than one element, we can reach a two-side or a four-side configuration. After this, if $r$ is again a monitor robot, it needs one epoch to reach its apex point, waits for another seven epochs for other seven \texttt{MONITOR}-colored robots to reach their respective final positions, and needs one more epoch to reach its final position in a partition. Thus, $r$ terminates in nine epochs. After seeing $r$ at its final position, all the other \texttt{OFF}-colored robots on the sides position themselves at their respective final positions one by one in $O(N)$ epochs, as each of them needs $O(1)$ epoch to reach their final position. Hence, \textsc{Square\_Partition} solves the uniform partitioning problem in $O(N)$ epochs. 
    

    The algorithm is free from collision by Lemma \ref{lemma4.1}, \ref{move_to apex point_square}, \ref{collision-free-square}, \ref{monitor-finish-1}, \ref{monitor-finish-2}, and \ref{off-finish}.
\end{proof}

\section{Algorithm for a Circular Region}
\label{circle}
This section considers the region $\Re$ as a circle of radius $rad$.
We follow the Type V partitioning for circular regions. Our target in this process is to form a regular $N$-gon inscribed in the region using the robots. In this section, we discuss two techniques to achieve the target.

We can use an algorithm for the widely popular \emph{Uniform Circle Formation} (UCF) problem, presented by Feletti et al \cite{feletti_et_al:LIPIcs.OPODIS.2023.5}. They proposed a $O(\log(N))$ algorithm for the ASYNC setting. The algorithm relies on the mutual visibility algorithm, proposed by Sharma et al \cite{a14020056}, where the robots first position themselves on the vertices of a convex hull. Taking advantage of the mutual visibility, robots then place themselves on the smallest enclosing circle, which later gets translated to a uniform circle configuration using a subroutine, named as \texttt{Uniform Transformation} that runs in $O(\log(N))$ epochs. The mutual visibility algorithm in \cite{a14020056} requires $47$ colors to achieve the goal in $O(1)$ epochs. Regarding the uniform partitioning problem in a circular region, we can achieve the same goal of repositioning the robots on the boundary of $\Re$ using no colors in $O(1)$ epochs. From the application-oriented point of view, we can take advantage of the bounded region where the robots can detect the boundary of the region. Robots can be placed on the boundary using the movement strategy \textsc{Move\_To\_Boundary} (which is described later in this section), maintaining the initial color \texttt{OFF} throughout the strategy.  
The detailed steps of the technique \texttt{Uniform Transformation} are not repeated here to keep the section concise. We include the pseudocode of the algorithm in \ref{appendix}.

\subsection{Description of the Algorithm}
Initially, all robots are with color \texttt{OFF}. 
Before going into the details of the algorithm, we define the following notations that we will use in the future to comprehend the algorithm better. 

\noindent \textbf{Notations:} For a robot $r$, 
\begin{itemize}

    \item $O$ is the center of $\Re$. $p_r$ and $p_r^{opp}$ are the two diametrically opposite points of intersection of the line $\overleftrightarrow{rO}$ and the boundary of $\Re$, out of which $p_r$ is the nearest to $r$.
    
    \item  $\wideparen{AB}$ represents the segment of the circumference of $\Re$, starting from the point $A$ to $B$ such that the segment has no robot except on the endpoints. In the case of $\wideparen{AB}$ having no robots in both clockwise and counter-clockwise directions, $\wideparen{AB}$ is chosen to be the arc with the shortest arc length.
    
    \item $alen(AB)$ is the arc length of $\wideparen{AB}$.
    
\end{itemize}

\begin{wrapfigure}[12]{r}{0.4\textwidth}
\centering
     \includegraphics[width=\linewidth]{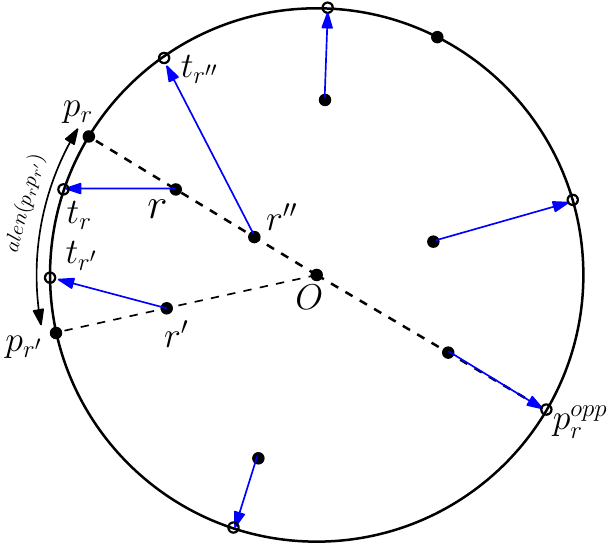}
     \caption{The target point $t_r$ of $r$ depends on the distance of $r$ from the point $p_r$}
     \label{fig.circle_moving_to_boundary}
\end{wrapfigure}
\noindent \textbf{Strategy} \textsc{Move\_To\_Boundary}: If $r$ is a boundary robot and there is at least one  \texttt{OFF}-colored interior robot visible to it, $r$ does not move or change its current color. If $r$ is an interior robot and lies on the center $O$, it waits till all other interior robots with color \texttt{OFF} reach the boundary of $\Re$. 
When there are no other interior robots left, $r$ does not change its color and moves to any target point $t_r$ on the boundary that does not contain a robot. Otherwise ($r$ is an interior robot not lying on $O$), $r$ considers the point $p_r$ and checks whether the point $p_r$ is visible or not. 
If $p_r$ is visible and no robot lies on it, $r$ simply moves to $p_r$ with current color \texttt{OFF}. 
On the other hand, $r$ computes the set $\mathcal{V}_r$, that consists of all visible robots to $r$ not lying on the line $\overleftrightarrow{rO}$. We identify two sub-cases.

\begin{itemize}
    \item \textbf{$\mathcal{V}_r$ is non-empty:} $r$ calculates $d_r = \frac{1}{4}\min \{ alen(p_rp_{r'}) |$ $ r' \in \mathcal{V}_r \text{ and } \arc{p_rp_{r'}} \text{ is defined} \}$, as shown in Fig. \ref{fig.circle_moving_to_boundary}. 

    \item \textbf{$\mathcal{V}_r$ is empty:} It happens when all the robots lie on one line passing through $O$. Here, $r$ calculates $d_r = \frac{1}{4} alen(p_rp_r^{opp})$.

\end{itemize}

Finally, $r$ finds the target point $t_r$ such that $alen(t_r p_r) = \frac{d_r}{rad} d(r,p_r)$ and moves to $t_r$ with the current color \texttt{OFF}, as illustrated in Fig \ref{fig.circle_moving_to_boundary}. 

Observe that the above strategy does not require any change in color to displace the robots on the boundary from the interior of $\Re$. From now on, we can use the \texttt{Uniform Transformation} sub-problem proposed in \cite{feletti_et_al:LIPIcs.OPODIS.2023.5} to achieve a configuration of the robots such that $alen(r_ir_j) = \frac{2\pi \cdot rad}{N}$ for any two neighbouring robots $r_i$ and $r_j$ in $O(\log (N))$ epochs under ASYNC setting. The above technique uses more than $17$ colours to achieve a uniform circular configuration.

Although the number of colors required in \texttt{Uniform Transformation} is a constant, if we turn our focus on reducing the number of colors, we can propose another technique to achieve the same goal of repositioning the robots in such a way that the neighbouring robots are $\frac{2\pi \cdot rad}{N}$ distance apart from each other.
Observe that when all robots are on the boundary of $\Re$, they all can see each other, as no three robots are collinear. We will turn this to our advantage. 
We now define a \emph{cluster} that will be useful for the rest of the algorithm.

\begin{definition}{(Cluster)}
    A sequence of robots $Cl = \{r_1, r_2, \cdots, r_k \}$ ($k \geq 1$) lying on the boundary of $\Re$, is called a cluster if the following conditions hold. (i) $r_i$ and $r_{i+1}$ are consecutive robots on the boundary. (ii) $alen(r_ir_{i+1}) = \frac{2\pi \cdot rad}{N}$. (iii) $alen(r_1r_1^{Nbr}), alen(r_kr_k^{Nbr}) \neq \frac{2\pi\cdot rad}{N}$, where $r_1^{Nbr}$ and $r_k^{Nbr}$ are the neighbors of $r_1$ and $r_k$ respectively such that $r_1^{Nbr}, r_k^{Nbr} \notin Cl$. 
\end{definition}

\begin{wrapfigure}[12]{r}{0.4\textwidth}
\centering
     \includegraphics[width=0.9\linewidth]{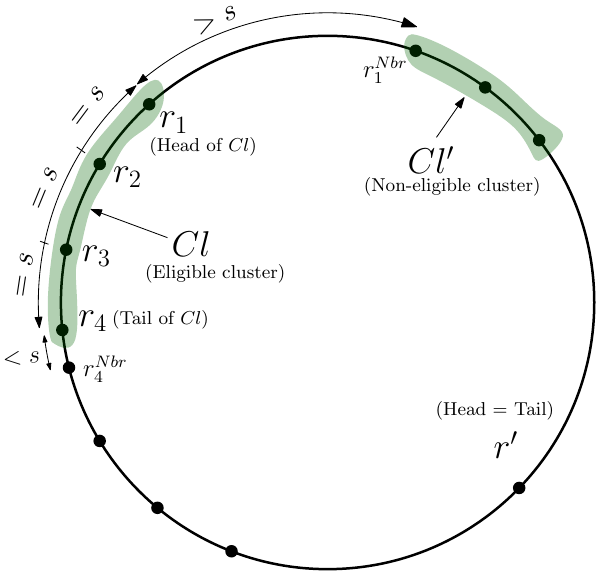}
     \caption{Illustrating eligible clusters with head and tail}
     \label{fig.circle_cluster}
\end{wrapfigure}
For convenience, we denote ${(2\pi \cdot rad)}/N$ by $s$. Any robot can detect its own cluster, as it is able to see all the robots on the boundary. A robot $r$ is called the \emph{head} of a cluster $Cl$, if $alen(rr^{Nbr_1}) = s$ and $alen(rr^{Nbr_2}) > s$ for two neighbours $r^{Nbr_1}, r^{Nbr_2}$ of $r$ such that  $r^{Nbr_1} \in Cl $ and $r^{Nbr_2} \notin Cl $. The robot $r$ is called \emph{tail} of the cluster $Cl$, when $alen(rr^{Nbr_1}) = s$ for $r^{Nbr_1} \in Cl $ and $alen(rr^{Nbr_2}) < s$ for $r^{Nbr_2} \notin Cl $. It is possible that the head or tail of a cluster does not exist. Moreover, a cluster can be made of a single robot $r$. In such cases, if $alen(rr^{Nbr_1}) > s$ and $alen(rr^{Nbr_2}) < s$, $r$ itself is called the head and tail of the cluster. An example is given in Fig. \ref{fig.circle_cluster}. Note that a cluster having no head and tail contains all the robots in $\Re$.

\begin{definition}{(Eligible Cluster)}
    A cluster is eligible for movement if it satisfies the following two conditions.
    (i) The cluster has both head and tail robots.
        (ii) All the robots in that cluster are with color \texttt{OFF}.
\end{definition}
 In other words, clusters that only have heads or tails are not eligible for movement. In an eligible cluster, the head initiates the movement along the perimeter of $\Re$. By movement of a cluster $Cl = \{r_1,r_2, \cdots, r_k \}$, we mean that all the robots in $Cl$ move in the direction of the head sequentially. Note that the robots do not have identifiers. We use $r_1, r_2, \cdots, r_k$ for better comprehension. The movement of a cluster is explained below.  

\noindent \textbf{Movement of an Cluster $Cl$:}  Let us consider that $r_1$ is the head, $r_k$ is the tail of $Cl$ and $r'_1$ is neighbor of $r_1$, not in $Cl$. Let $Cl'$ be the cluster of $r'_1$, as shown in Fig. \ref{fig.circle_cluster_movement_case1}.
When $r_1$ gets activated with color \texttt{OFF} with all other robots in $Cl$ having color \texttt{OFF}, and finds $Cl$ as an eligible cluster, it changes its color to \texttt{HEAD} and waits till all robots in its cluster change their color either to \texttt{MID} or \texttt{TAIL}. 
When $r_k$ finds $r_1$ in its own cluster $Cl$ with color \texttt{HEAD} and determines that all the other robots in  $Cl$ lies between $r_1$ and $r_k$ on the boundary of $\Re$, it sets its color to \texttt{TAIL}. All other robots in $Cl$ change their color to \texttt{MID} after seeing $r_1$ with the color \texttt{HEAD}. 
After this, $r_1$ determines the eligibility of $Cl'$. 
If any robot lying on the boundary sees another interior robot with color other than \texttt{OFF}, it waits as the robots are in their move phase toward their target points through the interior of $\Re$.
Otherwise, we have two cases.

\begin{figure}
\begin{minipage}[c]{0.32\textwidth}
     \centering
     \includegraphics[width=\linewidth]{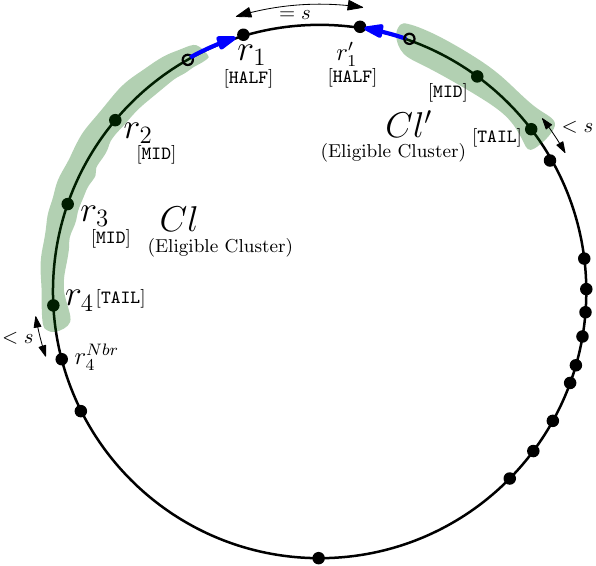}
     \caption{Movement of the heads of the eligible clusters}
     \label{fig.circle_cluster_movement_case1}
 \end{minipage}
 \hfill
 \begin{minipage}[c]{0.32\textwidth}
     \centering
     \includegraphics[width=\linewidth]{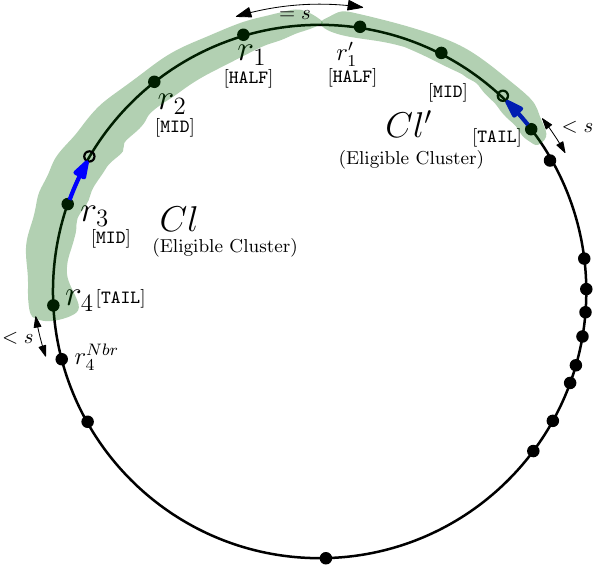}
     \caption{Movements of the robots with color \texttt{MID} \& \texttt{TAIL}}
     \label{fig.circle_cluster_movement_case1_1}
 \end{minipage}
 \hfill
 \begin{minipage}[c]{0.32\textwidth}
     \centering
     \includegraphics[width=0.95\linewidth]{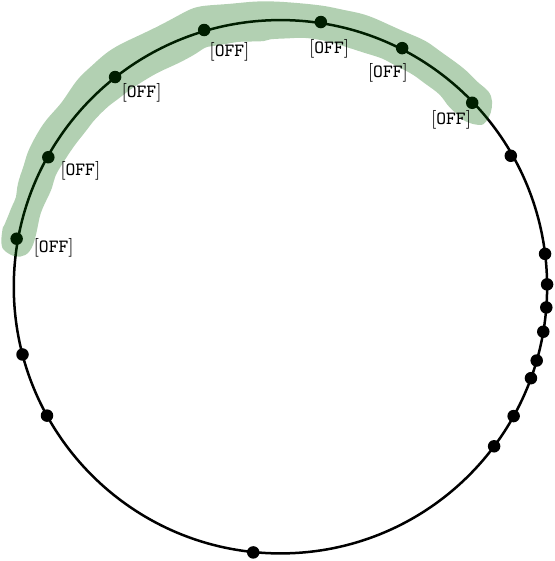}
     \caption{Two clusters are merged with each other}
     \label{fig.circle_cluster_movement_case1_2}
 \end{minipage}
 \end{figure}

\noindent \textbf{Case 1 ($Cl'$ is eligible and $r'_{1}.color =$ \texttt{OFF}):} In this case, our strategy is to move each head of $Cl$ and $Cl'$ a distance $\frac{1}{2}(alen(r_1r'_1)-s)$ towards each other so that the distance between them becomes $s$ after movement. All other robots of the respective clusters sequentially move in the direction of the head's movement. The process is as follows and illustrated in Fig. \ref{fig.circle_cluster_movement_case1}. 
The head $r_1$ changes its color to \texttt{MOVE-H} and moves towards $r'_1$ to a point $t_{r_1}$ such that $alen(t_{r_1}r_1) = \frac{1}{2}(alen(r_1r'_1)-s)$. After getting activated with color \texttt{MOVE-H}, it changes its color to \texttt{HALF}.
It waits till the tail of the cluster (if it exists) $r_k$ sets its color to \texttt{OFF}. 
At this moment, $r_k$ and all \texttt{MID}-colored robots do not change their position or color. 
After seeing $r_1$ on the boundary of $\Re$ with the color \texttt{HALF} and $alen(r_1, r_2) > s$, $r_2$ starts moving towards $r_1$ without changing its current color to a point $t_{r_2}$ on the boundary such that $alen(t_{r_2}r_2) = alen(r_1r_2)-s$.
Other robots in $Cl$ except $r_2$ remain in place at this time. 
Similarly, for any robot $r_i$ ($3 \leq i \leq k$), if $r_{i-1}$ lies on the boundary of $\Re$ with $alen(r_{i-1}r_i) > s$ and $alen(r_{i-1}r_{i-2}) = s$, $r_i$ moves to $t_{r_i}$ towards $r_{i-1}$ with its current color such that $alen(t_{r_i}r_i) = alen(r_{i-1}r_i)-s$, as shown in Fig. \ref{fig.circle_cluster_movement_case1_1}. 
After the movement, $r_k$ finds the cluster $Cl$ with its head $r_1$ having color  \texttt{HALF}.
It changes its color to \texttt{OFF}. The \texttt{MID}-colored robots in $Cl$ change their color to \texttt{OFF} after seeing the tail with color \texttt{OFF}.
The head $r_1$ changes its current color to \texttt{OFF} from \texttt{HALF} as illustrated Fig. \ref{fig.circle_cluster_movement_case1_2}, when both of its neighbours are $s$ arc-length away from it and all the robots in its cluster are with color \texttt{OFF}.

\begin{figure}[H]
 \begin{minipage}[c]{0.47\textwidth}
     \centering
     \includegraphics[width=0.8\linewidth]{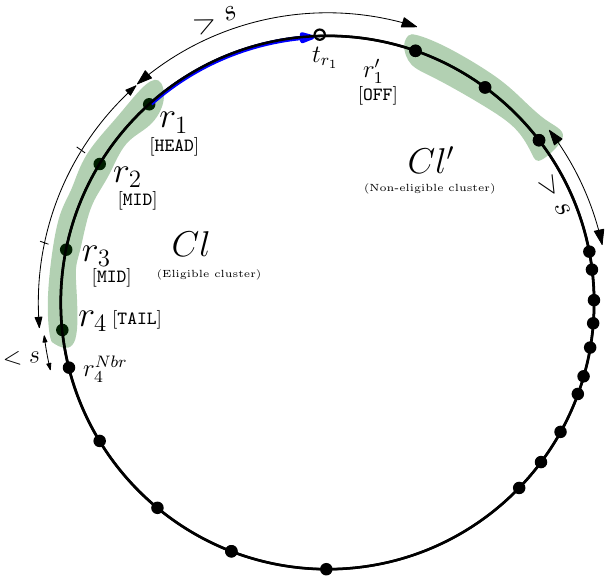}
     \caption{$Cl$ is an eligible cluster and $Cl'$ is the non-eligible cluster}
     \label{fig.circle_cluster_movement_case2}
 \end{minipage}
 \hspace{0.04mm}
 \begin{minipage}[c]{0.47\textwidth}
     \centering
     \includegraphics[width=0.72\linewidth]{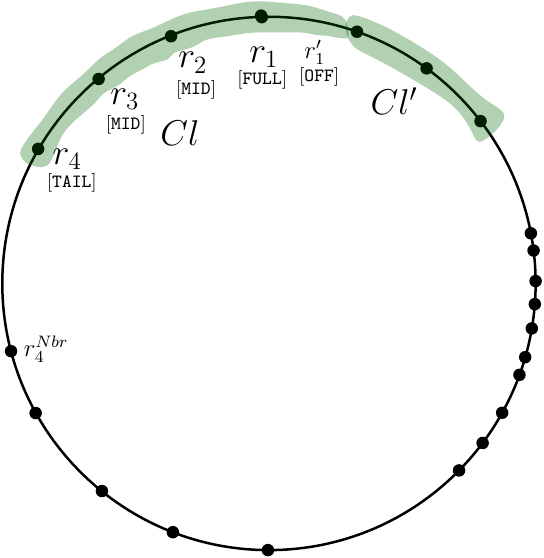}
     \caption{The two clusters got merged after the movement of $Cl$ }
     \label{fig.circle_cluster_movement_case2_1}
 \end{minipage}
 \end{figure}

\noindent \textbf{Case 2 ($Cl'$ is eligible and $r'_1.color = $ \texttt{HALF}) or ($Cl'$ is non-eligible):} Here, we move the head $r_1$ with color \texttt{FULL} towards $r'_1$ to a point on the boundary $t_{r_1}$ such that $alen(t_{r_1}r_1) = (alen(r_1r'_1)-s)$, as shown in Fig. \ref{fig.circle_cluster_movement_case2}. 
The rest of the strategy is the same as the previous case (Case 1) for other robots in the cluster. Finally, $Cl$ gets merged with $Cl'$, as shown in Fig. \ref{fig.circle_cluster_movement_case2_1}. When $r_1$ is with color \texttt{FULL} and finds all the robots in $Cl$ with color \texttt{OFF}, it sets its color to \texttt{OFF}.

When any robot $r$ finds itself with color \texttt{OFF} and all robots are in one cluster, i.e., every two consecutive robots on the boundary are at $s$ distance apart from each other, it moves to the point $t_r$ on the line segment $\overline{rO}$ such that $d(r,t_r) = \frac{1}{2}d(r,O)$ with color \texttt{FINISH}. Even if a boundary robot $r$ sees a \texttt{FINISH}-colored robot in $Int(\Re)$, it follows the same strategy. In all other cases, $r$ maintains status-quo. 

\subsection{Analysis of the Algorithm }
In this subsection, we discuss the correctness and the time complexity of the above-mentioned algorithm for circular regions. We also prove that the robots do not meet collision during any movement.

\begin{lemma}
    \label{lemma:5.1}
    Any interior robot in $\Re$ with color \texttt{OFF} moves to its boundary without collision.
\end{lemma}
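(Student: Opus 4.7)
The plan is to adapt the collision-avoidance argument of Lemma \ref{lemma3.1} to the circular setting. First I would dispose of the trivial case where $r$ coincides with the center $O$: by construction such a robot waits until every other interior \texttt{OFF}-colored robot has reached the boundary, after which it moves alone to an empty boundary point, so no concurrent motion is possible. For $r \neq O$, if $p_r$ is occluded by a robot on $\overleftrightarrow{rO}$, I would argue that $r$ keeps status quo, following the convention already used in the rectangular case; this reduces the analysis to the two branches in which $p_r$ is visible.

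Next I will verify that the target $t_r$ is always an unoccupied boundary point. If $p_r$ is visible and empty, this is immediate. Otherwise $alen(t_rp_r) \leq d_r$, where $d_r \leq \tfrac{1}{4}alen(p_rp_{r'})$ for every $r' \in \mathcal{V}_r$ (when $\mathcal{V}_r \neq \emptyset$) or $d_r = \tfrac{1}{4}alen(p_rp_r^{opp})$ (when $\mathcal{V}_r = \emptyset$). In the first sub-case any boundary obstacle $r'$ satisfies $p_{r'}=r'$, so $alen(t_rr') \geq \tfrac{3}{4}alen(p_rr') > 0$; in the second sub-case every other robot lies on $\overleftrightarrow{rO}$ and therefore projects to either $p_r$ or $p_r^{opp}$, both of which are strictly separated from $t_r$ along the arc.

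For collision-freeness I would consider two interior robots $r,r'$ whose motions overlap. If $r,r'$ are not collinear with $O$, then $r' \in \mathcal{V}_r$ and $r \in \mathcal{V}_{r'}$, whence $d_r, d_{r'} \leq \tfrac{1}{4}alen(p_rp_{r'})$, so the two targets are arc-separated by at least $\tfrac{1}{2}alen(p_rp_{r'})$; the perpendicular bisector of $\overline{t_rt_{r'}}$ then separates the two chords $\overline{rt_r}$ and $\overline{r't_{r'}}$, exactly as in Lemma \ref{lemma3.1}. If instead $r,r'$ are collinear with $O$, the one nearer to the shared boundary endpoint of $\overleftrightarrow{rO}$ occludes the other's view of its own $p$-point, so the farther robot holds position and the two movements are forced to be sequential. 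I expect this collinear sub-case to be the main obstacle, because it demands showing that once the nearer robot has reached the boundary, the chord joining the previously blocked robot to its off-line target does not pass through the newly occupied boundary point; this reduces to the observation that the interior of any chord of $\Re$ avoids the boundary, so the chord meets the boundary only at $t_r$. Combined with fairness of the scheduler, this yields the desired conclusion.
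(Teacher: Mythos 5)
Your non-collinear case is sound and essentially matches the paper's Case 2, but your collinear case rests on a misreading of the algorithm, and that is exactly where the real difficulty of this lemma lies. The strategy \textsc{Move\_To\_Boundary} (and the pseudocode \textsc{MoveToBoundary\_inCircle}) does \emph{not} instruct a robot to hold position when $p_r$ is occupied or occluded: in that branch $r$ computes $\mathcal{V}_r$, the quantity $d_r$, and the target $t_r$, and moves in the same cycle. Hence two robots $r, r'$ lying on the same segment $\overline{Op_r}$ can be activated simultaneously and both move, so your claim that ``the farther robot holds position and the two movements are forced to be sequential'' has no basis in the algorithm. This simultaneous collinear motion is precisely the case the paper's proof is organized around.

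The mechanism you are missing is the proportional scaling in the target formula $alen(t_rp_r)=\frac{d_r}{rad}\,d(r,p_r)$. For two robots on the same radius the paper observes that $d_r=d_{r'}$ while $d(r,p_r)>d(r',p_{r'})$, so the farther robot's target lies strictly farther along the arc from $p_r$; it then separates the chords $\overline{rt_r}$ and $\overline{r't_{r'}}$ by the line through the midpoint of $\overline{rr'}$ and the midpoint of $\wideparen{t_rt_{r'}}$. In your write-up the factor $d(r,p_r)/rad$ is used only to obtain the crude bound $alen(t_rp_r)\le d_r$, which cannot distinguish the targets of two collinear robots at all, so even if you allowed both to move you could not conclude their paths are disjoint. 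A secondary flaw: when $r$ and $r'$ are collinear with $O$ but on opposite sides of it (so $p_r\neq p_{r'}$), neither robot occludes the other's $p$-point, so your occlusion-based sequentiality argument does not cover that sub-case either; the paper handles it with an explicit separating line ($\overleftrightarrow{r''O}$ when $\mathcal{V}_r\neq\emptyset$, or $\overleftrightarrow{p_{r_{mid}}O}$ otherwise). To repair the proof you must drop the sequentiality assumption and argue the simultaneous collinear case directly via the scaling.
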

 \begin{proof}
Let us consider two robots, $r$ and $r'$, lying in $Int(\Re)$. We can identify two cases here.

\noindent \textbf{Case 1 ($r'$ lies on $\overleftrightarrow{rO}$):} In this case, we investigate two sub-cases. When $p_r = p_{r'}$, it signifies that $r$ and $r'$ lie  on $\overline{Op_r}$.
Without loss of generality, let us assume that $r$ is farther from $p_r$ than $r'$. 
So, $d(p_r, r') < d(p_r, r)$. We further assume that $p_r$ has a robot on it. If both of them get activated simultaneously, they find each other in the same radius. 
Both of them calculate $d_r$ and $d_{r'}$ respectively, such that $d_r = d_{r'}$. If they choose $t_r$ and $t_{r'}$ as their respective target points on the boundary, where $alen(t_rp_r) = \frac{d_r}{rad}d(r,p_r) > \frac{d_{r'}}{rad}d(r',p_{r'}) = alen (t_{r'}p_{r'})=alen(t_{r'},p_r)$. We also have $d(r,p_r) > d(r', p_r)$. Both of the above inequalities imply that the line passing through $r_{mid}$ (the midpoint of $\overline{rr'}$) and $t_{r_{mid}}$ (the midpoint of $\wideparen{t_rt_{r'}}$) separates the paths of $r$ and $r'$ towards their respective target points. 
In case of $p_r \neq p_{r'}$, if we assume that $\mathcal{V}_r$ is non-empty with $r'' \in \mathcal{V}_r$, then the line $\overleftrightarrow{r''O}$ separates the two target points $t_r$ and $t_{r''}$. If $\mathcal{V}_r$ is empty, the line $\overleftrightarrow{p_{r_{mid}}O}$ separates $\overline{rt_r}$ and $\overline{r't_{r'}}$ where $p_{r_{mid}}$ is the midpoint of $\wideparen{p_rp_{r'}}$. So, the movement paths of the two robots $r$ and $r'$ do not cross each other. Hence, the movements of $r$ and $r'$ are free from collision. 

\noindent \textbf{Case 2 ($r'$ does not lie on $\overleftrightarrow{rO}$):}
In this case, $p_r \neq p_{r'}$. By the similar argument presented above, the movements of the robots are free from collision.
\end{proof}

\begin{lemma}
    \label{lemma:5.2}
    There always exists an eligible cluster or all the robots are in the same cluster.
\end{lemma}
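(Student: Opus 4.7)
My plan is to argue geometrically that, when the robots on the boundary of $\Re$ are not all in one cluster, the cyclic sequence of arc lengths between consecutive clusters must contain both strictly-long and strictly-short gaps, and adjacent long–short gaps sandwich a cluster with both a head and a tail. Assume there are $k \geq 2$ clusters. Among the $N$ arcs between consecutive boundary robots, $N - k$ are internal to clusters and have length exactly $s := 2\pi \cdot rad / N$, while $k$ are \emph{gap} arcs separating different clusters, each of length $\neq s$ by maximality of clusters. The total boundary length is $2\pi \cdot rad = Ns$, so the sum of the $k$ gap lengths is exactly $ks$. Since every gap differs from $s$, it is impossible for all of them to be $>s$ or all to be $<s$; hence both \emph{large} gaps ($>s$) and \emph{small} gaps ($<s$) occur.

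Next I examine the cyclic sequence of these $k$ gaps around $\Re$, classified as L or S. Since both labels appear and the sequence is cyclic, two consecutive gaps in this sequence must have different labels. The unique cluster $Cl^*$ sandwiched between such an L–S adjacent pair has one endpoint whose outside neighbor lies at arc-distance $>s$, which by the paper's definition makes it the head of $Cl^*$, and another endpoint whose outside neighbor lies at arc-distance $<s$, making it the tail (in the degenerate case where $Cl^*$ is a singleton, the same robot plays both roles). Thus $Cl^*$ satisfies condition (i) of eligibility.

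For condition (ii), requiring all robots in $Cl^*$ to be OFF-colored, I would invoke the algorithm's color invariant: every robot ending an LCM cycle is OFF unless it is currently participating in the movement of some cluster, and all of the colors \texttt{HEAD}, \texttt{TAIL}, \texttt{MID}, \texttt{MOVE-H}, \texttt{HALF}, \texttt{FULL} revert to \texttt{OFF} once the corresponding movement concludes. Hence any cluster not presently engaged in motion automatically satisfies (ii). If the specific $Cl^*$ identified above is the one in motion, then under the fair ASYNC scheduler its movement concludes within finitely many epochs, after which the geometric argument is reapplied to the resulting configuration (with strictly fewer clusters, since a completed movement merges two clusters into one). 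The main obstacle will be coordinating the geometric existence proof with the dynamic color state: I expect to show that a single in-progress movement can alter at most the two gaps on either side of the moving cluster, so any head-and-tail cluster disjoint from that neighborhood remains eligible, ensuring the invariant claimed by the lemma.
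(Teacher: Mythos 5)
Your argument is correct and is essentially the paper's own proof: the paper likewise observes that the $k$ inter-cluster gap arcs sum to $ks$ while each differs from $s$ (by cluster maximality), so they cannot all lie on the same side of $s$, which forces some cluster to have one adjacent gap $>s$ and one $<s$, i.e., both a head and a tail. You additionally address the color condition (ii) of eligibility, which the paper's one-line proof silently ignores; that part of your write-up is still a sketch rather than a finished argument, but the geometric core matches the paper's reasoning exactly.
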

\begin{proof}
    If all the robots are in the same cluster, the statement of the lemma follows trivially. Let us assume that there are at least two clusters. If all the clusters are non-eligible, then all the arc-length between two consecutive clusters is always either greater or lesser than $s$, which is a contradiction to the fact that the sum of the distances between two consecutive robots on the boundary is $2\pi \cdot rad = s \cdot N$.
\end{proof}

\begin{lemma}
    \label{lemma:5.2.1}
    After all the robots in a cluster complete their movement, they remain as a cluster.
\end{lemma}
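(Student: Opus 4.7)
The plan is to show that the sequential movement of the robots in a cluster $Cl = \{r_1, r_2, \ldots, r_k\}$ preserves the cluster property, namely that all consecutive pairs remain at arc-distance exactly $s = (2\pi \cdot rad)/N$ after each robot has completed its move. I would proceed by induction on the position $i$ of the robot in the cluster, using the common displacement $\delta$ that the head chooses in either Case 1 ($\delta = \frac{1}{2}(alen(r_1 r'_1) - s)$) or Case 2 ($\delta = alen(r_1 r'_1) - s$).

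First, I would analyze the head. Since $r_1$ is a head, $alen(r_1 r'_1) > s$, so $\delta > 0$ in both cases. In Case 1, after $r_1$ moves, $alen(r_1 r'_1) = \frac{1}{2}(alen(r_1 r'_1) + s) > s$, so $r_1$ does not pass or collide with $r'_1$; in Case 2, after $r_1$ moves, $alen(r_1 r'_1) = s$, so $r_1$ merges against $r'_1$ without passing it. In both cases, $r_1$ stops before reaching any robot outside $Cl$ and its position relative to the other robots of $Cl$ is still the same cyclic endpoint.

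Second, I would prove the inductive step. Suppose $r_{i-1}$ has already moved to its final position, so $alen(r_{i-2} r_{i-1}) = s$ (taking $r_{i-2} := r_1$'s previous position condition trivially for $i=2$ via the \texttt{HALF}/\texttt{FULL} color check on $r_1$). Before $r_{i-1}$ moved, $alen(r_{i-1} r_i) = s$; after the move (by displacement $\delta$ toward $r_1$), we get $alen(r_{i-1} r_i) = s + \delta$. The movement rule then activates $r_i$, which targets $t_{r_i}$ satisfying $alen(t_{r_i} r_i) = alen(r_{i-1} r_i) - s = \delta$. After $r_i$ moves, $alen(r_{i-1} r_i) = s$, and since $\delta < alen(r_{i-1} r_i)_{\text{before}}$, the robot $r_i$ does not pass $r_{i-1}$ and the cyclic order on the boundary is preserved. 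Applying this to $i = 2, 3, \ldots, k$ establishes that after every robot completes its move, the set $\{r_1, \ldots, r_k\}$ still forms a sequence of consecutive robots on the boundary with all consecutive gaps equal to $s$.

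Third, I would justify why the ASYNC scheduler cannot break this chain. The triggering condition for $r_i$'s move explicitly requires $alen(r_{i-1} r_{i-2}) = s$ (which witnesses that $r_{i-1}$ has finished) together with $alen(r_{i-1} r_i) > s$. Hence no $r_i$ can move before $r_{i-1}$ has settled, and the color flags \texttt{HEAD}, \texttt{MID}, \texttt{TAIL}, \texttt{HALF}/\texttt{FULL} enforce that the head's motion is detected before the propagation begins. Until the whole propagation is complete, no external robot outside $Cl$ sees a configuration that would cause it to intrude, because the boundary positions between $r_1$ and $r_k$ are always occupied in the correct order by robots of $Cl$ still carrying non-\texttt{OFF} colors.

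The main obstacle I anticipate is handling partial-progress snapshots cleanly under ASYNC: a robot $r_i$ could be activated when $r_{i-1}$ is mid-computation, and I must make sure its triggering predicate is false in those interleavings so that the inductive invariant $alen(r_{j-1} r_j) = s$ for $j < i$ is never observed prematurely. I would address this by invoking the precise wording of the movement rule (both the color witnesses on $r_1$ and the metric condition $alen(r_{i-1} r_{i-2}) = s$), which together guarantee that $r_i$'s move phase is executed only after $r_{i-1}$'s move phase has ended, making the entire process effectively sequential from the standpoint of the invariant.
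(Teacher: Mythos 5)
Your proposal is correct and follows essentially the same route as the paper's proof: the head opens a gap of $\delta$ with $r_2$, and each subsequent robot $r_i$ moves by exactly $alen(r_{i-1}r_i)-s=\delta$ once its predecessor has settled, so every consecutive gap is restored to $s$ and the robots remain a cluster. Your version is simply a more explicit induction, with added (and welcome) care about why the ASYNC triggering predicates force the propagation to be sequential, which the paper leaves implicit.
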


\begin{proof}
    For an eligible cluster $Cl = \{ r_1, r_2, \cdots r_k \}$, the head $r_1$ executes its movement towards another cluster. After that, $alen(r_1r_2) > s$, since $r_2$ is a member of $Cl$. When $r_1$ reaches its final position and changes its color to either \texttt{HALF} or \texttt{FULL}, $r_2$ moves to a point $t_{r_2}$ towards $r_1$ such that $alen(r_1t_{r_2}) = s$. All other robots in $Cl$ sequentially execute the same process which leads to the completion of one movement of the whole cluster. Hence the statement follows.
\end{proof}

\begin{lemma}
    \label{lemma:5.3}
    Let $k ~(<N)$ be the length of an eligible cluster $Cl$. The length of $Cl$ gets increased at least by one without collision in $O(k)$ epochs.
\end{lemma}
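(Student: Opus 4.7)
The plan is to trace the protocol's color transitions and arc movements for $Cl=\{r_1,r_2,\ldots,r_k\}$ and argue (a) the whole cycle finishes in $O(k)$ epochs, (b) the cluster containing $r_1$ strictly grows, and (c) no two robots collide. I would first enumerate the phases: $r_1$ recolors \texttt{OFF}$\to$\texttt{HEAD}; each follower, observing \texttt{HEAD}, recolors to \texttt{MID}/\texttt{TAIL}; $r_1$ examines the next cluster $Cl'$ and applies either Case~1 (coloring \texttt{MOVE-H} then \texttt{HALF}, arc-step $\tfrac12(alen(r_1r'_1)-s)$) or Case~2 (coloring \texttt{FULL}, arc-step $alen(r_1r'_1)-s$); then the followers $r_2,\ldots,r_k$ slide in turn by the excess $alen(r_{i-1}r_i)-s$ once their predecessor is observed to have settled; finally the cluster reverts to \texttt{OFF} from the tail upward.

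For the epoch count I would invoke fairness of the ASYNC scheduler, which activates every robot at least once per epoch. Each recoloring step and $r_1$'s own move consume $O(1)$ epochs because the enabling condition becomes observable to a single robot, which then acts at its next activation. The only potentially long phase is the chain of follower moves, which needs $k-1$ sequential triggerings; since each $r_i$ reacts within one epoch after $r_{i-1}$ has settled, this phase costs $O(k)$ epochs, so the total is $O(k)$.

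The length-increase argument splits on $r_1$'s case. Writing $D:=alen(r_1r'_1)$, in Case~2 the full-move places the new $r_1$ at arc-distance exactly $s$ from $r'_1$, so they now belong to one cluster of size $k+|Cl'|\geq k+1$. In Case~1 the half-move only shortens the gap to $\tfrac12(D+s)>s$, and $r_1$ must hold color \texttt{HALF} until its other neighbour is at distance $s$ too; but within the same window the head $r'_1$ of the eligible $Cl'$ runs the protocol in parallel, either executing its own Case~1 half-move (ending at distance $s$ from the new $r_1$) or, if it starts after observing $r_1$ already colored \texttt{HALF}, entering its Case~2 with step $\tfrac12(D-s)$ (again yielding distance $s$). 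Either way the gap closes to $s$ and the merge occurs, making $|Cl|$ grow by at least one.

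The main obstacle I anticipate is collision-freeness along the shared arc. For this I would argue that each follower $r_i$ moves strictly less than $alen(r_{i-1}r_i)$ toward $r_{i-1}$, namely the excess $alen(r_{i-1}r_i)-s$, and that $r_{i-1}$ is stationary at its final position by the time $r_i$ starts, so $r_i$'s destination lies strictly before $r_{i-1}$'s position and no overtaking is possible. The head $r_1$'s target is at arc-distance $s$ (Case~2) or greater than $s$ (Case~1) from $r'_1$, so $r_1$ cannot reach $r'_1$. Sequential triggering via color observations prevents two followers from moving at the same time, which completes the collision-freeness argument.
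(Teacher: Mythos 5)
Your proof is correct and follows essentially the same route as the paper's: a phase-by-phase trace of the color transitions ($O(1)$ epochs for the recoloring at the start and end, $O(k)$ epochs for the sequential follower moves) followed by the observation that the head ends at arc-distance $s$ from the neighbouring cluster, so the cluster grows by at least one. Your treatment of the Case~1/Case~2 split for closing the gap and of collision-freeness is in fact more explicit than the paper's own proof, which asserts both points without elaboration.
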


\begin{proof}
    First the head of an eligible cluster having all robots with color \texttt{OFF}, changes its color to \texttt{HEAD}. Upon seeing this, all other robots in the cluster change their color to  either \texttt{MID} or \texttt{TAIL}. This operation takes total $2$ epochs. For a cluster $Cl = \{ r_1, r_2,\cdots r_k\}$ with $r_1$ as head and $r_k$ as tail, $r_1$ moves first in the direction of the another cluster. Then \texttt{MID}-colored robots move sequentially (first $r_2$ moves in the direction of $r_1$, then $r_3$ and at last the tail $r_k$). This step takes $O(k)$ epochs. If the cluster $Cl$ moves towards another cluster $Cl'$, the heads of the two clusters become $s$ arc-length apart in just one epoch (either both the heads move towards each other or one head moves towards the other). After the movement of the whole cluster, the tail of that cluster changes its color to \texttt{OFF}, then the \texttt{MID}-colored robots change their color to \texttt{OFF} and finally the head changes its color to \texttt{OFF} when both of its neighbors are $s$ arc-length apart from it and all other robots in its cluster are with color \texttt{OFF}. This process takes $3$ epochs. So, the movement of a cluster of length $k$ takes $O(k)$ epochs. After the movement, both the neighbor of $r_1$ are $s$ distance apart from each other. Therefore the length of the cluster gets increased at least by one.
    Hence the proof.
\end{proof}

\begin{theorem}
\label{timecomplexity_circle}
    Our algorithm solves uniform partitioning for the circular region in $O(N^2)$ epochs.
\end{theorem}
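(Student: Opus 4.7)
The plan is to mirror the structure of Theorems \ref{theorem3} and \ref{timecomplexity_square}: first argue correctness by tracking the configuration through its three phases (interior-to-boundary migration, cluster coalescence, final partitioning inside the disk), then quantify the epoch cost of each phase, and finally cite the per-movement collision-freeness lemmas already proved.

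For correctness, I would start from an arbitrary initial configuration. By Lemma \ref{lemma:5.1}, every interior \texttt{OFF}-colored robot eventually relocates to the boundary of $\Re$, so after a finite number of epochs all $N$ robots lie on the boundary. At that point no three robots are collinear, so every robot can identify every cluster on the boundary. By Lemma \ref{lemma:5.2}, either all $N$ robots already form a single cluster or an eligible cluster exists; in the latter case Lemma \ref{lemma:5.3} guarantees that the number of clusters strictly decreases (equivalently, the length of some eligible cluster grows by at least one) in $O(k)$ epochs, where $k$ is the current cluster length. Iterating this argument at most $N-1$ times produces the single-cluster configuration in which consecutive boundary robots are exactly $s = (2\pi\cdot rad)/N$ arc-length apart, i.e.\ they sit at the vertices of a regular inscribed $N$-gon. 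Once any robot $r$ with color \texttt{OFF} detects this single-cluster condition, it moves to the midpoint of $\overline{rO}$ and adopts color \texttt{FINISH}, terminating inside the circular sector of angle $2\pi/N$ that is delimited by the perpendicular bisectors of the two arcs adjacent to $r$; this sector is precisely one of the $N$ equal-area Type V partitions, and distinct robots terminate inside distinct sectors.

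For the running time, I would charge each phase separately. The migration to the boundary takes $O(N)$ epochs in the worst case (when all robots are initially collinear, they displace sequentially, as in Lemma \ref{lemma3.1}). In the cluster-coalescing phase, Lemma \ref{lemma:5.3} bounds the cost of growing an eligible cluster of length $k$ by $O(k)$ epochs; summing over the at most $N-1$ merging events, the total cost is $\sum_{k=1}^{N-1} O(k) = O(N^2)$ epochs. The final phase, in which every robot independently steps from the boundary to its terminal point at the midpoint of $\overline{rO}$, adds only $O(1)$ epochs per robot and can be executed in parallel, contributing $O(N)$ in the worst case. Adding the three contributions gives $O(N) + O(N^2) + O(N) = O(N^2)$ epochs, matching the claim.

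Collision-freeness follows piecewise: the interior-to-boundary migration is collision-free by Lemma \ref{lemma:5.1}; the head/\texttt{MID}/\texttt{TAIL} protocol ensures that within a cluster the robots move strictly sequentially along the boundary arc, and two heads of distinct eligible clusters either move towards each other by at most $\tfrac{1}{2}(alen(r_1r_1') - s)$ (Case 1) or one of them is stationary while the other advances by $alen(r_1r_1') - s$ (Case 2), so in either case the arc between them never shrinks below $s$ and no two boundary robots collide; finally, the terminal move along $\overline{rO}$ happens only after the single-cluster condition holds, so the $N$ radial segments meet only at $O$, which no robot reaches. The main obstacle I expect is the bookkeeping in the cluster-merging phase: I must rule out the pathological ASYNC interleavings in which a head has color \texttt{HALF} or \texttt{FULL} while its neighbouring cluster is simultaneously being re-classified from eligible to non-eligible (or vice versa), because only then does the $O(k)$ bound of Lemma \ref{lemma:5.3} apply without double counting, and hence the telescoping $\sum_k O(k) = O(N^2)$ is legitimate; this is handled by the color-based synchronization imposed by \texttt{MOVE-H}, \texttt{HALF}, and \texttt{FULL}, but the argument needs to be spelled out carefully using the definitions of Case 1 and Case 2 of the cluster-movement rule.
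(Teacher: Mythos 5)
Your proposal follows essentially the same route as the paper's proof: Lemma~\ref{lemma:5.1} for the interior-to-boundary migration, Lemma~\ref{lemma:5.2} to guarantee an eligible cluster always exists, Lemma~\ref{lemma:5.3} applied to at most $N-1$ merging events yielding $\sum_{k=1}^{N-1} O(k) = O(N^2)$ epochs, and a constant number of epochs for the final radial move; the additional correctness and collision-freeness details you supply are consistent with the surrounding lemmas. The only (harmless) discrepancy is that you charge $O(N)$ epochs to the boundary-migration phase by importing the sequential-movement argument of Lemma~\ref{lemma3.1}, whereas in the circular strategy \textsc{Move\_To\_Boundary} non-central interior robots never wait for one another, so the paper charges only $O(1)$ epochs for that phase --- either way the total remains $O(N^2)$.
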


\begin{proof}
    The robots lying on $Int(\Re)$, moves to the boundary.  The process of moving all the interior robots to the boundary takes $O(1)$ epochs. In the worst case, it is possible that the length of an eligible clusters gets increased only by one in every movement of the cluster. Lemma \ref{lemma:5.3} proves that the movement of an eligible cluster takes $O(k)$ epochs, where $k$ is the length of the cluster.  So, it takes $\sum_{k=1}^{N-1} O(k) \approx O(N^2)$ epochs for all robots to become one cluster. Finally, all robots moves to their final positions in one epochs. Thus, our algorithm takes overall $O(N^2)$ epochs to partition the region.   
\end{proof}


\section{Conclusion} \label{conclusion}
We studied the distributed version of  uniform partitioning of a bounded region using mobile robots. The problem becomes interesting and challenging due to the robot model that is considered in this paper. The robots are opaque and do not have enough memory to store the past information. They have a persistent memory in form of a light. The ASYNC activation schedule which is considered in this paper, is the most general form of activation of a robot network, where any robot can be activated any time. Moreover, the robots do not have any coordinate axes agreement or global orientation. We solved this problem when the region is either a rectangle, a square or a circle. The reason behind this, is the application oriented point of view, as the regions, we deal with in our daily life, are known geometric figures. We believe that the problem can be extended to other convex regions under the same model. Also, the problem can even be studied when some of the robots become faulty.

\bibliography{ICDCN_24.bib}

\newpage
\appendix

\section{Pseudocode of the Algorithms}\label{appendix}

\textsc{Rectangle\_Partition} is the main pseudocode of the algorithm for rectangular regions. \textsc{Square\_Partition} is the main pseudocode of the algorithm for square regions. \textsc{Circle\_Partition} is the main pseudocode of the algorithm for circular regions. 

  \begin{algorithm2e}[!ht]
	\If{$r.color =$ \texttt{OFF} }
        {
            Follow \textsc{Rectangle\_Target\_Side}() to calculate $S_r$\\
            
            \If{$r$ is not a boundary robot on $S_r$}
                {
                    Follow \textsc{TargetPoint\_On\_Sides}() to find a target point \\
                    Move to the target point with the current color \texttt{OFF}
                }
            \Else(\tcp*[h]{$r$ lies on one of the longest sides of $\Re$})
                {
            
                    \If{$r$ is a monitor robot on $S_r$}{
                        \If{Each of $S_r$ and $S_r^{opp}$ contains only one robot}{
                            Change $r.color$ to \texttt{FINISH}\\
                            Move to the midpoint of $\frac{1}{4}S_r$
                        }
                        \ElseIf{$S_r^{opp}$ has exactly one robot, but $S_r$ has more than one robot}{
                            No change in color and position
                        }
                        \ElseIf{$S_r$ has exactly one robot, but $S_r^{opp}$ has more than one robot}{
                            Set $S_r^{opp}$ as the target side and follow \textsc{TargetPoint\_On\_Sides}() to find the target point\\
                            Move to the target point with the current color
                        }
                        \Else{
            
                            Follow \textsc{Move\_To\_ApexPoint}() and calculate the apex point $a_r$\\
                            Change $r.color$ to \texttt{FINISH} from \texttt{OFF} \\
                            Move to the point $a_r$
                        }}
                    \ElseIf{$r$ is a terminal robot on $S_r$ with a \texttt{FINISH}-colored robot on $\frac{1}{2}S_r$ or $\frac{1}{4}S_r$ or $\frac{1}{4}S_r^{opp}$}
                        {

                            Follow \textsc{Move\_To\_FinalPoint}() to compute the target point \\
                            Change $r.color$ to \texttt{FINISH} \\ Move to the target point
                        }
                    \Else
                        {
                            No change in color and position
                        }
                }
            
        }

	\caption{\textsc{Rectangle\_Partition}}
 \label{rectanglepartition}
\end{algorithm2e}

\begin{algorithm2e}
\setcounter{AlgoLine}{25}

\ElseIf{$r.color = $ \texttt{FINISH} and $r$ lies on $\frac{1}{8}S_r$}
                {
                    \If{there is an \texttt{OFF}-colored robot visible in $Int(\Re)$}
                        {
                            $r$ does not move and change its current color
                        }
                    \Else
                        {
                            \If{$r$ finds all the \texttt{FINISH}-colored robot on $\frac{1}{8}S_r$ and $\frac{1}{8}S_r^{opp}$}
                                {
                                    $c_r^1 \longleftarrow$ Number of robots on $[S_r, \frac{1}{8}S_r]$ including itself \\
                                    $c_r^2 \longleftarrow$ Number of robots on  $[S_r^{opp},\frac{1}{4}S_r^{opp}]$ 
                                    
                                    \If{$0 < c_r^2 < c_r^1$}
                                        {
                                            $S_r$ is the target side and follow \textsc{TargetPoint\_On\_Sides}() to find a target point  \\ 
                                            Move to the target point with the color \texttt{FINISH}
                                        }
                                    \ElseIf{$0 < c_r^1 < c_r^2$}
                                        {
                                            \If{there is a \texttt{FINISH}-colored robot on $[S_r^{opp}, \frac{1}{8}S_r^{opp}]$}
                                                {
                                                    $r$ does not change its color or position
                                                }
                                            \Else
                                                {
                                                    $S_r^{opp}$ is target side and follow \textsc{TargetPoint\_On\_Sides}() to find the target point \\
                                                    Change its color to $\texttt{OFF}$ and move to the target point
                                                }
                                        }
                        
                                    \Else(\tcp*[h]{$0 < c_r^1 = c_r^2$ or $0 = c_r^2 < c_r^1$})
                                        {
                                            Follow \textsc{Move\_To\_FinalPoint}() to find a target point \\
                                            Move to the target point with color \texttt{FINISH}
                                        }
                                }
                            \ElseIf{a \texttt{FINISH}-colored robot on $(\frac{1}{8}S_r, \frac{7}{8}S_r)$}
                                {
                                    $S_r$ is the target side and follow \textsc{TargetPoint\_On\_Sides}() to find a target point  \\ 
                                    
                                    Move to the target point with the color \texttt{FINISH}
                                }
                            \Else
                                {
                                    No change in color and position
                                }
                        }
                }
        
\ElseIf{$r.color = $ \texttt{FINISH} and $r$ lies on $S_r$}
            {
                Change $r.color$ to \texttt{OFF} without any movement
            }
\Else(\tcp*[h]{$r.color = $ \texttt{FINSH} and $r$ lies on $\frac{1}{2}S_r$ or $\frac{1}{4}S_r$})
            {
                $r$ terminates.
            }
\end{algorithm2e}

\begin{algorithm2e}[!ht]
    $C \longleftarrow$ The center of $\Re$ \\
    $e_P \longleftarrow$ The common corner of $S_r$ and $S_r^L$, \\ 
    $e_Q\longleftarrow$ The common corner of $S_r$ and $S_r^R$\\
    $e_R\longleftarrow$ The common corner of $S_r^{opp}$ and $S_r^R$ \\ 
    $e_X\longleftarrow$ The common corner of $S_r^{opp}$ and $S_r^L$\\
    
	\If{$r.color =$ \texttt{OFF} }
        {
            Choose one of the nearest sides of $\Re$ as $S_r$
            
            \If{$r$ is not a boundary robot on $S_r$}
                {
                    Follow \textsc{TargetPoint\_On\_Sides}() to find the target point \\
                    Move to the target point on $S_r$ with color \texttt{OFF}
                }
            \Else
                {
                    \If{$r$ is a monitor robot and $Int(\Re)$ has no \texttt{FINISH} or \texttt{FINISH1} or \texttt{FINISH2}-colored robot}
                        {
                            Follow \textsc{Apex\_Point\_inSquare}() to find the target point $t_r$ \\
                            Change $r.color$ to \texttt{MONITOR}\\
                            Move to the point $t_r$ 
                        }
                    \ElseIf{$r$ is a terminal robot and $Int(\Re)$ has \texttt{FINISH}-colored robots}
                        {
                            Follow \textsc{Move\_To\_FinalPoint}() to find a target point\\
                            Change $r.color$ to \texttt{FINISH}\\
                            Move to the target point
                            
                        }
                    \ElseIf{$r$ is a terminal robot and $Int(\Re)$ has \texttt{FINISH1}-colored robots}
                        {
                            Follow \textsc{PartitionType\_III} to find a target point \\
                            Change $r.color$ to \texttt{FINISH1} \\
                            Move to the target point
                        }
                    \ElseIf{$r$ is a terminal robot and $Int(\Re)$ has \texttt{FINISH2}-colored robots}
                        {
                            Follow \textsc{PartitionType\_IV} to find a target point \\
                            Change $r.color$ to \texttt{FINISH2} \\
                            Move to the target point
                        }
                    \Else   
                        {
                            No change in color and position
                        }
                }
            
        }

	\caption{\textsc{Square\_Partition}}
 \label{squarepartition}
\end{algorithm2e}

\begin{algorithm2e}
\setcounter{AlgoLine}{29}
\ElseIf{$r.color =$ \texttt{MONITOR}}
        {   
            \If{$r$ lies on $S_r$}
                {
                    Change $r.color$ to \texttt{OFF}
                }
            \Else
                {
                    \If{$r$ finds any \texttt{OFF}-colored robot in $Int(\Re)$}
                        {
                            $r$ does not change its color and position
                        }
                    \Else
                        {
                            Follow \textsc{MonitorToFinal\_inSqaure}() to find the target point and move to it
                        }
                }
      
        }

    \Else(\tcp*[h]{$r.color =$ \texttt{FINISH} or \texttt{FINISH1} or \texttt{FINISH2}})
        {
            $r$ terminates
        }

\end{algorithm2e}

\begin{algorithm2e}[!ht]
    \If{$r.color =$ \texttt{OFF}}
        {
            \If{$r$ is an interior robot}
                {
                    Follow \textsc{MoveToBoundary\_inCircle}() to find a target point on the boundary\\

                    Move to the target point with the color \texttt{OFF}
                    
                } 
            \ElseIf{$r$ is a boundary robot with at least one \texttt{OFF}-colored robot in $Int(\Re)$}
                {
                    $r$ does not change its color or position
                }
            \Else(\tcp*[h]{$r$ is a boundary robot with no \texttt{OFF}-colored robot in $Int(\Re)$})
                {
                    $Cl \longleftarrow$ The cluster of $r$ \\
                    \If{$Cl$ is eligible}
                        {
                            \If{$r$ is the head of $Cl$}
                                {
                                    Change $r.color$ to \texttt{HEAD}
                                }
                            \ElseIf{$r$ is the tail of $Cl$}
                                {
                                    Change $r.color$ to \texttt{TAIL}
                                }
                            \Else
                                {
                                    Change $r.color$ to \texttt{MID}
                                }
                        }
                    \Else   
                        {
                            $r$ does not change its position or color
                        }
                    
                }
        }

	\caption{\textsc{Circle\_Partition}}
 \label{circlepartition}
\end{algorithm2e}

\begin{algorithm2e}
\setcounter{AlgoLine}{17}
    \ElseIf{$r.color = $  \texttt{HEAD}}
        {
            $r' \longleftarrow$ Head of the neighbouring cluster $Cl'$ and the neighbour of $r$\\

            \If{($r'.color = $ \texttt{HALF} and $Cl'$ is eligible) $\lor$ ($Cl'$ is non-eligible)}
                {
                    Change $r.color$ to  FULL\\
                    Move to a point $t_r$ on $\wideparen{rr'}$ such that $alen(rt_r) = alen(rr')-s$
                }
            \Else(\tcp*[h]{$r'.color =$ \texttt{OFF} or \texttt{HEAD} with $Cl'$ being eligible})   
                {
                    Change $r.color$ to \texttt{MOVE-H}\\
                    Move to a point $t_r$ on $\wideparen{rr'}$ such that $alen(rt_r) = \frac{1}{2}(alen(rr')-s)$
                }
        }
\ElseIf{$r.color = $ \texttt{MOVE-H}}
        {
            Change $r.color$ to \texttt{HALF}
        }
    \ElseIf{$r.color =$ \texttt{MID}}
        {
            $r_{Nbr_1}, r_{Nbr_2} \longleftarrow$ Neighbours of $r$ in $Cl$ with $r_{Nbr_1}$ nearer to the head of $Cl$ than $r_{Nbr_2}$\\

            \If{$alen(rr_{Nbr_1}) = alen(rr_{Nbr_2})  =s$}
                {
                    $r$ does not change its color or position\\
                    
                }
            \Else
                {
                    $r$ moves to a point $t_r$ on $\wideparen{rr_{Nbr_1}}$ such that $alen(rt_r) = alen(rr_{Nbr_1})-s$
                }
            
        }
    \ElseIf{$r.color =$ \texttt{TAIL}}
        {
            $r_{Nbr_1}\longleftarrow$ Neighbour of $r$ in $Cl$\\
            \If{$alen(rr_{Nbr_1})=s$}
                {
                    $r$ does not change its color or position
                }
            \Else
               {    
                    Move to a point $t_r$ on $\wideparen{rr_{Nbr_1}}$ such that $alen(rt_r) = alen(rr_{Nbr_1})-s$ with color \texttt{TAIL}
               } 
        }
\ElseIf{$r.color = $ \texttt{HALF} or \texttt{FULL}}
            {
                $r_{Nbr_1}, r_{Nbr_2} \longleftarrow$ The two neighbours of $r$\\
                
               \If{$alen(rr_{Nbr_1}) = alen(rr_{Nbr_2}) = s$}
                    {
                        Change $r.color$ to $\texttt{OFF}$
                    }
                \Else
                    {
                        $r$ does not change its color or position
                    }
            }
        \Else(\tcp*[h]{$r.color = $ \texttt{FINSH} and $r$ lies on $\frac{1}{2}S_r$ or $\frac{1}{4}S_r$})
            {
                $r$ terminates.
            }
\end{algorithm2e}

\begin{algorithm2e}[!ht]
     \If{$r$ is an interior robot}
                {
                    Choose one of the nearest longest side of $\Re$ as $S_r$
                }
            \ElseIf{$r$ is a corner robot}
                {
                    Choose the incident longest side of $\Re$ as $S_r$
                }
            \ElseIf{$r$ is a boundary robot on one of the shortest sides of $\Re$}
                {
                    Choose one of the nearest longest sides of $\Re$ as $S_r$
                }
            \Else
                {
                    Choose the side where it is currently situated as $S_r$
                }
            
	\caption{\textsc{Rectangle\_Target\_Side}()}
 \label{rectangletargetside}
\end{algorithm2e}

\begin{algorithm2e}[!ht]
        $\mathcal{V}_r \longleftarrow$ The set of all robots not lying on $L_r$\\
        \If{the target side of $r$ is $S_r$}
            {
                \If{$p_r$ is not visible to $r$}
                {
                       No change in color and position
                }
            \Else
                {
                    \If{($r$ is a boundary robot not lying on one of the sides $S_r$ and $S_r^{opp}$ of $\Re$ with no robot on $p_r$) $\lor$ ($r$ is an interior robot with another robot on $p_r$) $\lor$ ($r$ is a corner robot)}
                        {
                            
                            \If{$\mathcal{V}_r$ is empty}
                                {
                                    Choose the point $t_r$ as the target point on the side $S_r$ such that $d(p_r, t_r) = \frac{1}{2} \max \{ d(e^1_{S_r},p_r), d(e^2_{S_r}, p_r) \}$
                                }
                            \Else
                                {
                                    Choose the target point $t_r$ on $S_r$  such that $d(p_r,t_r) = \frac{1}{4} \min\limits_{r' \in \mathcal{V}_r} \{ d(r',L_r) \}$
                                }
                            
                        }
                    \ElseIf{$r$ is an interior robot with no robot on $p_r$}
                        {
                            $r$ chooses the point $p_r$ as the target point
                        }
                    \Else
                        {
                            $r$ does not move or change its current color
                        }

                }
            }

	\caption{\textsc{TargetPoint\_On\_Sides}()}
 \label{targetpointonsides_subroutine}
\end{algorithm2e}

\begin{algorithm2e}
\setcounter{AlgoLine}{14}
        \Else
            {
                $p_r^{opp} \longleftarrow$ The point of intersection $L_r$ and $S_r^{opp}$ \\

                    \If{$p_r^{opp}$ has no robots on it}
                        {
                            Choose the point $p_r^{opp}$ as the target point
                        }
                    \Else
                        {
                            
                            Choose a point $t_r$ on $S_r^{opp}$ as the target point such that $d(p_r^{opp}, t_r) = \frac{1}{4} \min\limits_{r' \in \mathcal{V}_r} {d(r', L_r)}$ \\
                            
                        }
            }
\end{algorithm2e}

    \begin{algorithm2e}[!ht]
            \If{$S_r^{opp}$ has some robots on it}
                {
                    $r_{Nbr} \longleftarrow$ The neighbour of $r$ on $\mathcal{CH}_r$ lying on $(S_r^{opp}, \frac{1}{8}S_r^{opp}]$\\

                    Choose the point of intersection of $\frac{1}{8}S_r$ and the line $\overleftrightarrow{rr_{Nbr}}$ as the target point $a_r$
                }
            \Else
                {
                    Choose the point of intersection of $\frac{1}{8}S_r$ and $L_r$ as the target point $a_r$ 
                }
            
	\caption{\textsc{Move\_To\_ApexPoint}()}
 \label{movetoapexpoint_subroutine}
\end{algorithm2e}

   \begin{algorithm2e}[!ht]

            \If{$r.color=$ \texttt{FINISH}}
                {
                    \If{$0 < c_r^1 = c_r^2$}
                        {
                            $r_{Nbr} \longleftarrow$ The neighbour of $r$ on $\mathcal{CH}_r$ whose nearest longest side is $S_r^{opp}$ \\
                            
                            $SS_r \longleftarrow$ The shortest side of $\Re$ which either lies on or intersects the half plane delimited by the line $rr_{Nbr}$ where no other robot in $\Re$ is present\\

                            Compute the target point $t_r$ on $\frac{1}{4}S_r$ such that $d(t_r, SS_r) = \frac{len(S_r)}{2c_r^1}$\\
                        }
                    \Else(\tcp*[h]{$0 = c_r^2 < c_r^1$})
                        {
                            $SS_r \longleftarrow$ The shortest side of $\Re$ which lies on the half plane delimited by $L_r$ where no other robot in $\Re$ resides
                            
                            Calculates the target point $t_r$ on $\frac{1}{2}S_r$ such that $d(t_r, SS_r) = \frac{len(S_r)}{2c_r^1}$\\
                        }
                }

	\caption{\textsc{Move\_To\_FinalPoint}()}
 \label{movetofinalpoint}
\end{algorithm2e}

\begin{algorithm2e}
\setcounter{AlgoLine}{8}

 \Else
                {
                    $SS_r \longleftarrow$  The side of $\Re$ for which the intersection point of $S_r$ and $SS_r$ is visible to $r$\\

                    $SS_r^{opp} \longleftarrow$ The side of $\Re$ opposite to $SS_r$ 
                    
                    \If{$r$ finds a \texttt{FINISH}-colored robot on $\frac{1}{4}S_r$ but no robot on $[S_r^{opp}, \frac{1}{4}S_r^{opp}]$}
                        {
                            $r$ does not change its color and position
                        }
                    \Else   
                        {

                            \If{there is a \texttt{FINISH}-colored robot on $\frac{1}{2}S_r$}
                                {
                                    Choose $L = \frac{1}{2}S_r$
                                }
                            \Else(\tcp*[h]{(a \texttt{FINISH}-colored robot on $\frac{1}{4}S_r$ and some robots on $[S_r^{opp}, \frac{1}{4}S_r^{opp}]$) or (a \texttt{FINISH}-colored robot on $\frac{1}{4}S_r^{opp}$, but not on $[S_r, \frac{1}{4}S_r]$)})
                                {
                                    Choose $L = \frac{1}{4}S_r$
                                }
                            $\mathcal{F}_r \longleftarrow$ Set of all \texttt{FINISH}-colored robots\\
                            
                            Calculate $d = \min\limits_{r' \in \mathcal{F}_r} \{ \min \{d(r', SS_r), d(r', SS_r^{opp})\} \}$ \\
                            
                            $r_1, r_2, \cdots, r_k$ is the sequence of robots of maximum length on $L$ starting from $r_1$ with $d(r_1, SS_r) = d$ till $r_k$ such that two consecutive robots in the sequence are exactly at $2d$ distance apart from each other\\
        
                            Choose the target point $t_r$ on $L$ such that $d(t_r, SS_r) = (2k+1)d$
                        }
                }
\end{algorithm2e}

            

  \begin{algorithm2e}[!ht]

    \If{$r$ does not find any robot on $\Delta S_r^{opp}$ nor $\Delta S_r^{L}$ nor $\Delta S_r^R$}
        {
            $r$ chooses the point of intersection of $L_r$ and $\frac{1}{8}S_r$ as the apex point $a_r$
        }
    \Else
        {
            $r_{Nbr_1} \longleftarrow $ The neighbor of $r$ on $\mathcal{CH}_r$ that does not lie on $S_r$\\
    
            $r_{Nbr_2} \longleftarrow$ Another neighbor of $r$ on $\mathcal{CH}_r$ that lies on $S_r$\\
        
            Compute the target point $t_r$ on $\overline{rr_{Nbr_1}}$ such that $d(r,t_r) = \frac{1}{2} \min \{d(r,D_r), d(r, D_r^{opp}), d(r,\frac{1}{4}S_r), d(r, r_{Nbr_2}) \}$
        }      
	\caption{\textsc{Apex\_Point\_inSqaure}()}
 \label{apexpointinsquare}
\end{algorithm2e}

\begin{algorithm2e}[!ht]

    \If{$r$ finds \texttt{FINISH}-colored robots and all of them lying on $\frac{1}{2}S_r$, or $\frac{1}{4}S_r$ or $\frac{3}{4}S_r$}
        {
            Follow \textsc{Move\_To\_FinalPoint}() to find a target point and
            set $r.color=$ \texttt{FINISH}
            
        }     
    \ElseIf{$r$ finds \texttt{FINISH1}-colored robots and all of them lying on $\frac{1}{3}S_r$ or $\frac{1}{3}S$ where $S \in \{S_r^L, S_r^R \}$}
        {
            Set $k =\frac{1}{3}$ and follow \textsc{PartitionType\_III}() to get the target point\\
            Change $r.color $ to \texttt{FINISH1}
        }
    \ElseIf{$r$ sees \texttt{FINISH2}-colored robots and all of them lying  on $\frac{1}{6}S_r$ or on $\frac{1}{6}S$ where $S \in \{S_r^{opp}, S_r^L  S_r^R\}$}
        {
            Set $k =\frac{1}{6}$ and follow \textsc{PartitionType\_IV}() to get the target point \\
            Change $r.color$ to \texttt{FINISH2}
        }
        
    \ElseIf{($r$ finds a \texttt{FINISH}-colored robot, but not on $\frac{1}{2}S_r$, $\frac{1}{4}S_r$ or $\frac{3}{4}S_r$) $\lor$ ($r$ finds a \texttt{FINISH1}-colored robot, but not on $\frac{1}{3}S_r$ or $\frac{1}{3}S$ where $S \in \{S_r^L, S_r^R \}$) $\lor$ ($r$ finds a \texttt{FINISH2}-colored robot, but not on $\frac{1}{6}S_r$ or on $\frac{1}{6}S$ where $S \in \{S_r^{opp}, S_r^L  S_r^R\}$)}
        {
            No change in color or position
        }
    
    \Else
        {
            Follow \textsc{FinalPosition\_with$Max_r$}()
        }
	\caption{\textsc{MonitorToFinal\_inSqaure}()}
 \label{monitor_to_final_position}
\end{algorithm2e}

\begin{algorithm2e}[!ht]
    $Max_r \longleftarrow$ The set of all sides with maximum number of robots lying on their corresponding side triangle\\

    \If{$Max_r = \{ S_r\}$ and $\Delta S$ has no robots for all $S\in \{S_r^{opp}, S_r^L, S_r^R$ \}}
        {
            Follow \textsc{PartitionType\_I\_II}() with $k= \frac{1}{2}$ to find the target point and change $r.color$ to \texttt{FINISH}
        }
    \ElseIf{$Max_r = \{S_r, S_r^{opp}\}$ and no robots on $\Delta S$ for all $S \in \{S_r^L, S_r^R \}$}
        {
            Follow \textsc{PartitionType\_I\_II}() with $k=\frac{1}{4}$ to find the target point and change $r.color$ to \texttt{FINISH}\\
            
        }
    \ElseIf{$Max_r = \{ S_r, S \}$ and both $\Delta S_r^{opp}$ and $\Delta S^{opp}$ contain no robot where $S \in \{S^L_r, S^R_r\}$}
        {
            Follow \textsc{PartitionType\_III}() with $k = \frac{1}{3}$ to find the target point and change $r.color$ to \texttt{FINISH1}
        }
    \ElseIf{$Max_r = \{ S_r, S_r^{opp}, S_r^L, S^R_r \}$}
        {
            Follow \textsc{PartitionType\_IV}() with $k = \frac{1}{6}$ to find the target point and change $r.color$ to \texttt{FINISH2}
        }

	\caption{\textsc{FinalPosition\_with$Max_r$}()}
 \label{final_with_max_r}
\end{algorithm2e}

\begin{algorithm2e}
\setcounter{AlgoLine}{9}
    \Else
        {
            
                    \If{($Max_r = \{ S_r, S_r^{opp}, S\}$ where $S \in \{S_r^L , S^R_r\}$) $\lor$ ($Max_r = \{ S \}$ or $\{S, S'\}$, where $S, S' \in \{S_r^L , S^R_r\}$ with $S \neq S'$) $\lor$ ($Max_r = \{ S, S_r^{opp}\}$, where $S \in \{S_r^L , S^R_r\})$)}
                        {
                            Set $S$ as the target side $S_{Tgt}$
                        }
                    
                    \ElseIf{($Max_r = \{ S_r \}$ or $\{ S_r, S_r^L, S_r^R \}$) $\lor$ ($Max_r = \{S, S_r\}$ where $S \in \{S_r^{opp}, S_r^L , S^R_r\}$)}
                        {
                            Set $S_r$ as the target side $S_{Tgt}$
                        }
                    \Else(\tcp*[h]{$Max_r =  \{ S_r^L, S^R_r, S_r^{opp}\}$ or \{$S_r^{opp} \})$})
                        {
                            Set $S_r^{opp}$ as the target side $S_{Tgt}$
                        }
            
                    \If{$S_{Tgt} = S_r$}
                        {
                            $r_{Nbr_1}\longleftarrow$ The neighbour of $r$ on $\mathcal{CH}_r$ not lying on $S_r$\\
                            Select the point of intersection of $\overleftrightarrow{rr_{Nbr_1}}$ and $S_r$ as the target point $t_r$ 
                        }
                    \Else
                        {
                            $L_r' \longleftarrow$ The line parallel to $S_{Tgt}$ passing through $r$\\
                            \If{ there is a \texttt{MONITOR}-colored robot lying on the half plane delimited by $L_r'$ that contains $S_{Tgt}$}
                                {
                                    $r$ does not change its color or position
                                }
                            \Else
                                {
                                    $L_r^\perp \longleftarrow$ The line passing through $r$ and perpendicular to $S_{Tgt}$\\
                                    $p_{r_{Tgt}} \longleftarrow$ The point of intersection of $L_r^\perp$ and $S_{Tgt}$\\
                                    \If{$p_{r_{Tgt}}$ has no robots on it}
                                        {
                                            Choose the point $p_{r_{Tgt}}$ as the target point
                                        }
                                    \Else
                                        {
                                            Choose a point $t_r$ on $S_{Tgt}$ as the target point such that $d(p_{r_{Tgt}}, t_r) = \frac{1}{4}\min\limits_{r' \in \mathcal{V}_r} d(r', L_r^\perp)$
                                        }
                                    Change $r.color$ to \texttt{OFF}
                                }
                        }

        }
\end{algorithm2e}

\begin{algorithm2e}[!ht]
$c_r^1 \longleftarrow$ The number of robots on $\Delta S_r$\\

    \If{$k = \frac{1}{2}$}
        {
            $SS_r\longleftarrow$ The side which lies on the half plane delimited by $L_r$ where no other robot resides\\
            Calculate the target point $t_r$ on $\frac{1}{2}S_r$ such that $d(t_r, SS_r)= \frac{len(S_r)}{2c_r^1}$ 
        }
    \Else
        {
            $r_{Nbr}\longleftarrow$ The neighbour of $r$ on $\mathcal{CH}_r$ whose nearest side is $S_r^{opp}$\\
            $SS_r \longleftarrow$ The side which either lies on or intersects the half plane delimited by the line $rr_{Nbr}$ where no other robot is present\\
            Compute the target point $t_r$ on $\frac{1}{4}S_r$ such that $d(t_r, SS_r) = \frac{len(S_r)}{2c_r^1}$
        }

	\caption{\textsc{PartitionType\_I\_II}()}
 \label{typeIandII}
\end{algorithm2e}

\begin{algorithm2e}[!ht]
    \If{$k = \frac{1}{3}$ and $r.color =$ \texttt{MONITOR}}
        {
            $c \longleftarrow$ The number of all \texttt{FINISH1}-colored robots on $\frac{1}{3}S_r$ \\
            $c' \longleftarrow$ The number of robots lying on $\Delta S_r$ without the color \texttt{FINISH1} \\
            \If{$S = S_r^L$}
                {
                     $A,B \longleftarrow$ The two points on $S_r$ which are $\frac{len(S_r)}{c+ c'}$ distance away from $e_P$ and $e_Q$, respectively \\

                    \If{$S^{opp}$ lies on the half plane delimited by $L_r$ where other robots on $\Delta S_r$ lie}
                        {
                            Choose the triangle $\mathcal{T} = \Delta Ae_Pe_R$
                        }
                    \Else
                        {
                            Choose $\mathcal{T} = \Delta Be_Qe_R$
                        }
                }
            \Else   
                {
                    $A,B \longleftarrow$ The two points on $S_r$ which are $\frac{len(S_r)}{c+ c'}$ distance away from $e_Q$ and $e_P$, respectively \\

                    \If{$S^{opp}$ lies on the half plane delimited by $L_r$ where other robots on $\Delta S_r$ lie}
                        {
                            Choose the triangle $\mathcal{T} = \Delta Ae_Qe_X$
                        }
                    \Else
                        {
                            Choose $\mathcal{T} = \Delta Be_Pe_X$
                        }
                }
           
            Set the centroid of the triangle $\mathcal{T}$ as the target point
        }

	\caption{\textsc{PartitionType\_III}()}
 \label{typeIII}
\end{algorithm2e}

\begin{algorithm2e}\small
\setcounter{AlgoLine}{16}
    \Else(\tcp*[h]{$k = \frac{1}{3}$ and $r.color = $ \texttt{OFF}})
        {
            \If{($\frac{1}{3}S_r^L$ has a \texttt{FINISH1}-colored robot) $\lor$ (a robot lying on $\Delta S_r^L$)}
                {
                    Choose $S = S_r^L$
                }
            \Else
                {
                    Choose $S = S_r^R$
                }
            \If{there is a \texttt{FINISH1}-colored robots not lying on $\frac{1}{3}S_r$ or on $\frac{1}{3}S$}
                {
                    No change in color or position
                }
            \Else
                {
                    Choose the point of intersection of $S_r^{opp}$ and $S^{opp}$ as the vertex $v_r$\\
                    $D_r \longleftarrow$ Diagonal of $\Re$ passing through $v_r$ \\
                    
                    \If{there is a \texttt{FINISH1}-colored robot on $\frac{1}{3}S_r$}
                        {
                            $r'\longleftarrow$ Terminal \texttt{FINISH1}-colored robot on $\frac{1}{3}S_r$ \\
                            $bl_1,  bl_2\longleftarrow$ The length of the base  of the triangle  whose one side is $D_r$ (resp. $S^{opp}$) and the centroid is at $r'$ \\
                            
                        }
                    \Else
                        {
                            $r'\longleftarrow$ Terminal \texttt{FINISH1}-colored robot on $\frac{1}{3}S$ \\
                            $bl_1, bl_2 \longleftarrow$ The length of the base  of the triangle  whose one side is $D_r$ (resp. $S_r^{opp}$) and the centroid is at $r'$ \\
                            
                        }
                    $bl = \min\{ bl_1, bl_2 \}$\\
                    $u_1 \longleftarrow$ The point of intersection of $\frac{1}{3}S_r$ and the diagonal $D_r$ \\
                    $u_2 \longleftarrow$ The point of intersection $\frac{1}{3}S_r$ and the diagonal $S^{opp}$ \\
                    $e_P \longleftarrow$ The endpoint of $D_r$ other than $v_r$\\
                    $e_Q \longleftarrow$ The endpoint of $S_r$ other than $e_P$ \\

                    \If{$S^{opp}$ lies in the half plane delimited by $\overleftrightarrow{rv_r}$, where the other robots of $S_r$ reside}
                        {
                            $c \longleftarrow$ The number of \texttt{FINISH1}-colored robots on $\frac{1}{3}S_r$ starting from the robot $\frac{bl}{3}$ distance apart from $u_1$ towards $u_2$ such that two consecutive robots are $\frac{2bl}{3}$ distance away from each other \\
                            $e_Y, e_Z \longleftarrow$ The points of $S_r$ satisfying $d(e_Y, e_P) = c\cdot bl$ and $d(e_Z, e_P) = (c+1)\cdot bl$\\

                        }
                    \Else
                        {
                            $c\longleftarrow$ The number of \texttt{FINISH1}-colored robots on $\frac{1}{3}S_r$ starting from the robot $\frac{bl}{3}$ distance apart from $u_2$ towards $u_1$ such that two consecutive robots are $\frac{2bl}{3}$ distance away from each other \\
                            $e_Y, e_Z \longleftarrow$ The points of $S_r$ satisfying $d(e_Y, e_Q) = c\cdot bl$ and $d(e_Z, e_Q) = (c+1)\cdot bl$\\
                        }
                    Calculate the target point $t_r$ such that it is the centroid of $\Delta e_Ye_Zv_r$ 
                }
        }
\end{algorithm2e}

\begin{algorithm2e}[!ht]\small
    
    \If{$k = \frac{1}{6}$ and $r.color =$ \texttt{MONITOR}}
        {
            $A, B \longleftarrow$ The two points on $S_r$ which are $\frac{len(S_r)}{|S_r|}$ distance away from $e_P$ and $e_Q$, respectively \\
            \If{$e_P$  and all the other robots on $\Delta S_r$ lie on the different half plane delimited by $L_r$}
                {
                    Choose the triangle $\mathcal{T} = \Delta Ae_PC$
                }
            \Else
                {
                    Choose $\mathcal{T} = \Delta Be_QC$
                }
            $r$ chooses the centroid of $\mathcal{T}$ as the target
        }
    \Else(\tcp*[h]{$k = \frac{1}{6}$ and $r.color =$ \texttt{OFF}})
        {
            \If{there is a \texttt{FINISH2}-colored robot not lying on $\frac{1}{6}S$ for $S \in \{S_r, S_r^{opp}, S_r^L, S_r^R \}$}
                {
                    No change in color and position
                }
            \Else
                {
                    $r' \longleftarrow$ A \texttt{FINISH2}-colored terminal robot on $\frac{1}{6}S_{r'}$ \\
                    $e^1_{S_{r'}}, e^2_{S_{r'}} \longleftarrow$ Two endpoints of $S_{r'}$ \\
                    $D_r, D_r^{opp} \longleftarrow$ Two diagonals of $\Re$ passing thorugh $e_P$ and $e_Q$ respectively\\
                    $bl_1 \longleftarrow$ Length of the base of the triangle whose centroid is $r'$ and two vertices are $C$ and $e^1_{S_{r'}}$ \\
                    $bl_2 \longleftarrow$ Length of the base of the triangle whose centroid is $r'$ and two vertices are $C$ and $e^2_{S_{r'}}$ \\
                    $bl = \min\{bl_1, bl_2\}$ \\
                    $u_1 \longleftarrow$ The point of intersection of $\frac{1}{6}S_r$ and $D_r$ \\
                    $u_2 \longleftarrow$ The point of intersection of $\frac{1}{6}S_r$ and $D_r^{opp}$ \\

                    \If{$\overline{Ce_Q}$ lies on the half plane delimited by $\overleftrightarrow{rC}$ where other robots of $S_r$ reside}
                        {
                            $c \longleftarrow$ The number of \texttt{FINISH2}-colored robots on $\frac{1}{6}S_r$ starting from the robot ${bl}/3$ distance away from $u_1$ towards $u_2$ such that two consecutive robots are ${2bl}/3$ distance apart from each other\\

                            $e_Y, e_Z \longleftarrow$ The points on $S_r$ satisfying $d(e_P, e_Y) = c\cdot bl$ and $d(e_P, e_Z) = (c+1)\cdot bl$\\

                            Choose the centroid of $\Delta Ce_Ye_Z$ as the target point
                        }
                    \Else
                        {
                            $c \longleftarrow$ The number of \texttt{FINISH2}-colored robots on $\frac{1}{6}S_r$ starting from the robot ${bl}/3$ distance away from $u_2$ towards $u_1$ such that two consecutive robots are ${2bl}/3$ distance apart from each other\\

                            $e_Y, e_Z \longleftarrow$ The points on $S_r$ satisfying $d(e_Q, e_Z) = c\cdot bl$ and $d(e_Q, e_Y) = (c+1)\cdot bl$\\

                            Choose the centroid of $\Delta Ce_Ye_Z$ as the target point
                        }

                }
        }
    
	\caption{\textsc{PartitionType\_IV}()}
 \label{typeIV}
\end{algorithm2e}

\begin{algorithm2e}[!ht]
   \If{$r$ lies on $O$ and there is at least interior robot other than $r$}
        {
            $r$ does not change its color or position
        }
    \ElseIf{$r$ lies in $Int(\Re)
    $, but not on $O$}
        {
            $p_r \longleftarrow$ The point of intersection of the line $\overleftrightarrow{rO}$ and the boundary of $\Re$ \\
            $p_r^{opp}\longleftarrow$ The point diametrically opposite to $p_r$ on the boundary of $\Re$\\

            \If{$p_r$ does not have any robot on it}
                {
                    $r$ chooses $p_r$ as its target point
                }
            \Else
                {
                    $\mathcal{V}_r \longleftarrow$ The set of all visible robots to $r$ not lying on the line $\overleftrightarrow{rO}$\\

                    \If{$\mathcal{V}_r$ is non-empty}
                        {
                            Calculate $d_r = \frac{1}{4}\min\{alen(p_rp_{r'})| r' \in \mathcal{V_r} \text{ and } \wideparen{p_rp_{r'}} \text{ is defined}\}$   
                        }
                    \Else
                        {
                            Calculate $d_r = \frac{1}{4}alen(p_rp_r^{opp})$
                        }
                    $r$ finds a target point on the boundary $t_r$ such that $alen(t_rp_r) = \frac{d_r}{rad}d(r,p_r)$
                }
        }

	\caption{\textsc{MoveToBoundary\_inCircle}()}
 \label{movetoboundary_incircle}
\end{algorithm2e}

\end{document}